\newcommand{\titlenote}{\footnote}
\newcommand{\numberofauthors}[1]{}
\newcommand{\alignauthor}{\\}
\newcommand{\email}{\texttt}
\newcommand{\affaddr}[1]{#1}
\newcommand{\compress}[1]{#1}
\setlist{noitemsep,topsep=0pt} \setitemize{noitemsep,topsep=0pt} \setenumerate{noitemsep,topsep=0pt}
\newcommand{\nega}[1]{1\text{ -- }#1}
\acrodef{FCNS}[FCNS]{First-Child-Next-Sibling}
\acrodef{DTD}[DTD]{Document Type Definition}
\acrodef{XML}[XML]{Extended Markup Language}
\newcommand{\pb}{\textsc{Validity}}
\newcommand{\disj}{\textsc{DISJ}}
\newcommand{\Order}{\mathrm{O}}
\newcommand{\eps}{\varepsilon}
\newcommand{\rev}{\mathrm{\mathrm{rev}}}
\newcounter{chriscom}
\newcounter{fredcom}
\newtheorem{theorem}{Theorem}
\newtheorem{definition}{Definition}
\newtheorem{fact}{Fact}
\newtheorem{corollary}{Corollary}
\newtheorem{lemma}{Lemma}
\newtheorem{conjecture}{Conjecture}
\newcommand{\cl}[1]{\overline{#1}}
\newcommand{\sep}[1]{\overline{\overline{#1}}}
\newcommand{\depth}{\mathrm{depth}}
\newcommand{\timee}{\mathrm{pos}}
\newcommand{\pos}{\timee}
\newcommand{\FCNS}{\mathrm{FCNS}}
\newcommand{\XML}{\mathrm{XML}}
\newcommand{\ann}{\mathrm{ann}}
\begin{document}

\title{
Validating XML Documents in the Streaming Model\\
 with External Memory
\compress{\titlenote{Supported by the French ANR SeSur and Defis programs
under contracts ANR-07-SESU-013 (VERAP project)
and ANR-08-EMER-012 (QRAC project). Christian Konrad is supported by a Fondation CFM-JP Aguilar grant.}}
}
\numberofauthors{2} 
\author{
\alignauthor
Christian Konrad\\
\affaddr{LIAFA, Univ. Paris Diderot; Paris, France;}\\
\affaddr{and Univ. Paris-Sud; Orsay, France.}\\
\email{konrad@lri.fr}
\alignauthor
Fr\'ed\'eric Magniez\\
\affaddr{LIAFA, Univ. Paris Diderot, CNRS; Paris, France.}\\
\email{frederic.magniez@univ-paris-diderot.fr}
}

\maketitle

\begin{abstract} 
We study the problem of validating XML documents of size $N$ against general DTDs in the context of streaming algorithms. The starting point of this work is a well-known space lower bound. There are XML documents and DTDs for which $p$-pass streaming algorithms require $\Omega(N/p)$ space.

We show that when allowing access to external memory, there is a deterministic streaming algorithm that solves this problem with memory space $\Order(\log^2 N)$, a constant number of auxiliary read/write streams, and $\Order(\log N)$ total number of passes on the \acs{XML} document and auxiliary streams. 

An important intermediate step of this algorithm is the computation of the First-Child-Next-Sibling (FCNS) encoding of the initial XML document in a streaming fashion. We study this problem independently, and we also provide memory efficient streaming algorithms for decoding an XML document given in its FCNS encoding.

Furthermore, validating XML documents encoding binary trees in the usual streaming model without external memory can be done with sublinear memory. There is a one-pass algorithm using $\Order(\sqrt{N \log N})$ space, and a bidirectional two-pass algorithm using $\Order( \log^2 N )$ space performing this task.

\end{abstract}


\section{Introduction}
The area of streaming algorithms has experienced tremendous growth over the last decade in many applications.
Streaming algorithms sequentially scan the whole input piece by piece in one pass, or in a small number of passes (i.e., they do not have
random access to the input), while using sublinear memory space, ideally polylogarithmic in the size of the input.
The design of streaming algorithms is motivated by the explosion in the size of the data that algorithms are called upon to process in
everyday real-time applications. Examples of such applications occur in bioinformatics for
genome decoding, in Web databases for the search of documents, or in network monitoring.  
The analysis of Internet traffic \cite{ams99}, in which traffic logs are queried, was one of the first applications of this kind of algorithm.

There are various extensions of this basic streaming model.
One of them gives the streaming algorithm access to an external memory consisting of several read/write streams~\cite{gks05,gs05,ghs06}.
Then the streaming algorithm is also relaxed to perform multiple passes in any direction over the input stream and the auxiliary streams.
In most of the applications, the number of auxiliary streams is constant and the total number of passes is logarithmic in the input size.

Verifying properties or evaluating queries of massive databases is an active and challenging topic.
For relational algebra queries against relational databases, the situation is quite clear.
There are bidirectional $\Order(\log N)$-pass deterministic streaming algorithms with constant memory space and a constant number of auxiliary streams~\cite{ghs09}. Moreover, the logarithmic number of passes is a necessary condition in order to keep the memory space sublinear, even if randomization is allowed. The latter was initially stated for one-sided error~\cite{ghs09} and then extended to two-sided error~\cite{bjr07,bh08}.

In the context of data exchange, especially on the Web, \ac{XML} is emerging as the standard, and is currently drawing much attention in data management research. Only few is known on \ac{XML} query processing when only streaming access is allowed to the \ac{XML} document. For evaluating XQuery and XPath queries against \ac{XML} documents of size $N$, only the lower bound has been extended~\cite{ghs09,bjr07,bh08}, meaning that $\Omega(\log N)$ passes are necessary. For the upper bound, only simple refinements of the direct algorithm are known: no auxiliary stream, one pass and linear memory in the height of the \ac{XML} document, which in the worst case is as large as $N$.

This paper considers the problem of validating \ac{XML} documents against a given \ac{DTD} in a streaming fashion without restrictions on the \ac{DTD}. Prior works on that topic~\cite{sv02,ss07} essentially try to characterize those \ac{DTD}s for which validity can be checked by a finite-state automaton, that is a one-pass deterministic streaming algorithm with constant memory. Concerning arbitrary \ac{DTD}s, two approaches have been considered in~\cite{sv02}. The first one leads to an algorithm with memory space linear in the height of the \ac{XML} document~\cite{sv02}.
The second one consists in constructing 
a refined \ac{DTD} of at most quadratic size, which defines a similar family of tree documents as the original one, and against which validation can be done with constant space. Nonetheless, for an existing document and \ac{DTD}, the later requires that both, documents and DTD, are converted before validation.

One of the obstacles prior works had to cope with was to check well-formedness of \ac{XML} documents, that is every opening tag matches its same-level closing tag. Due to the past work of~\cite{mmn10}, we can now perform such a verification with a constant-pass randomized streaming algorithm with sublinear memory space and no auxiliary streams. In one-pass the memory space is $\Order(\sqrt{N\log N})$, and collapses to $\Order(\log^2 N)$ with an additional pass in reverse direction.

The starting point of this work is the fact that checking validity is hard without auxiliary streams. There are DTDs defining ternary XML documents for which any $p$-pass bidirectional randomized streaming algorithm requires $\Omega(N/p)$ space. 
This lower bound comes from encoding a well known communication complexity problem, Set-Disjointness, as an XML validity problem. This lower bound  should be well-known, however we are not aware of a complete proof in the literature. In~\cite{Grohe:2007}, a similar approach using ternary trees with a reduction to Set-Disjointness is used for proving lower bounds for queries. For the sake of completeness we provide a proof in Appendix~\ref{sec:linear_space_lower_bound}. 

An XML document is valid against a DTD if for each node, the sequence of the labels of their children fulfills a regular expression defined in the DTD. For the case of XML documents encoding binary trees, we present in Section~\ref{section:binary-trees} two 
deterministic streaming algorithms for checking validity with sublinear space. As a consequence,
the presence of nodes of degree at least $3$ is indeed a necessary condition for the linear space lower bound for general documents.
We first show how to design a one-pass algorithm with space $\Order(\sqrt{N \log N} )$ (\textbf{Theorem~\ref{thm:bin-one-pass}}).
We conjecture that there is a $\Omega(\sqrt{N}/p)$ lower bound for $p-$pass algorithms which would render this algorithm tight up to a logarithmic factor, see Appendix~\ref{sec:root_space_lower_bound}.
With a second pass in reverse direction the memory collapses to $\Order(\log^2 N)$ (\textbf{Theorem~\ref{theorem:binary-two-pass}}). These two algorithms make use of the simple, but fundamental fact that in one pass over an XML document each node is seen twice in form of its opening and closing tag. Hence, it is not necessary to remember all opening tags in the stream since there is a second chance to get the same information by their closing tags. Our algorithms exploit this observation. 

Then, in Section~\ref{section:general-trees} we present 
our main result. \textbf{Corollary~\ref{corollary:main-result}} states that validity of any XML document against any \ac{DTD} can be checked in the streaming model with external memory with poly-logarithmic space, a constant number of auxiliary streams, and $\Order(\log N)$ passes over these streams. 
Validity of a node depends on its children, hence is crucial to have easy access to the sequence of children of any node. The fundamental idea to establish this is, firstly, to compute the \ac{FCNS} encoding, an encoding as a $2$-ranked tree of the XML document. In this encoding, the sequence of closing tags of the children of a node are consecutive. The computation of this encoding is the hard part of the validation process, and the resource requirements of our validation algorithm stem from this operation (\textbf{Theorem~\ref{theorem:encoding}}). 
Since the \ac{FCNS} encoding can be seen as a reordering of the tags of the original document, our strategy is to see this problem as a sorting problem with a particular comparison function. Merge sort can be implemented as a streaming algorithm, and we make use of it customized by an adapted merge function. 
The same idea can be used for decoding with similar complexity (\textbf{Theorem~\ref{thm:decoding_log_pass}}).

Then, based on the \ac{FCNS} encoding, verification can be done either in one pass and $\Order(\sqrt{N \log N})$ space (\textbf{Theorem~\ref{thm:rabin-one-pass}}), or in two bidirectional passes and $\Order(\log^2 N)$ space (\textbf{Theorem~\ref{thm:rabin-two-pass}}). Concerning \ac{FCNS} encoding and decoding, we show a linear space lower bound for one-pass algorithms. For decoding, we present a $\Order(\sqrt{N \log N})$ algorithm (\textbf{Theorem~\ref{thm:decoding_two_pass}}) that performs one pass over the input, but two passes over the output. 
We conjecture for encoding that memory space remains $\Omega(N)$ after any constant number of passes, 
which would show that decoding is easier than encoding. 

This suggests a systematic use of the \ac{FCNS} encoding for large documents since validity can be checked easily without auxiliary streams and in sublinear space. For user interactions, the original document can be obtained by the sublinear space $3$-pass algorithm. 
The applicability of this idea is left as an open question.

\section{Preliminaries}
\label{sec:preliminaries}
From now $\Sigma$ is a finite alphabet. The $k$-th letter of $X\in\Sigma^N$ is denoted by $X[k]$, for $1\leq k\leq N$,
and the consecutive letters of $X$ between positions $i$ and $j$ by $X[i,j]$.
A {\em subsequence} of $X$ is any string $X[i_1]X[i_2]\ldots X[i_k]$, where $1\leq i_1<i_2<...<i_k\leq N$.

\subsection{Streaming model}
In streaming algorithms, a {\em pass\/} over input $X\in\Sigma^N$ means that $X$ is given as {\em input stream} $X[1],X[2],\ldots,X[N]$, which arrives sequentially, i.e., letter by letter in this order.
Streaming algorithms have access to random access memory space, and, as the case may be, to read-write external memory as in~\cite{ghs09,dfr06}.
See also the review in~\cite{gks05}.
We assume that any letter of $\Sigma$ fits into one cell of internal/external memory.
The external memory is a collection of auxiliary streams, that we see as {\em read/write streams} with, again, sequential access.
When needed, we augment the alphabet of auxiliary streams from $\Sigma$ by $k$-tuples of elements in $\Sigma \cup \left[0, 2N \right]$, for some fixed constant $k$, which therefore fit in one cell of auxiliary streams. 

At the beginning of each pass on a read/write stream, the algorithm decides whether it performs a read or write pass.
The input stream is read-only. On a writing pass, the algorithm can either write a letter, and then move to the next cell, or move directly to the next cell.
For the case of bidirectional streaming algorithms, opposed to unidirectional streaming algorithm where each pass is in the same order, the algorithm can decide the direction of the sequential access.

For simplicity, we assume throughout this article that the length of the input is known in advance by the algorithm. 
Nonetheless, all our algorithms can be adapted to the case in which the length is unknown until the end of a pass.  
See~\cite{MuthuBook} for an introduction to streaming algorithms.
\begin{definition}[Streaming algorithm]
A $p(N)$-pass 
{\em streaming algorithm} \textbf{A} with $s(N)$ space, $k(N)$ auxiliary streams, $t(N)$ processing time per letter is an 
algorithm such that for every input stream $X\in\Sigma^N$:
\begin{enumerate}
\item \textbf{A} has access to $k(N)$ auxiliary read/write streams,
\item  \textbf{A} performs in total at most $p(N)$ passes on $X$ and auxiliary streams,
\item \textbf{A} maintains a memory space of size $s(N)$ letters of $\Sigma$ and bits while reading $X$ and auxiliary streams,
\item \textbf{A} does not exceed a running time of $t(N)$ between two write or read operations.
\end{enumerate}
We say that \textbf{A} is {\em bidirectional\/} if it performs at least one pass in each direction. 
Otherwise \textbf{A} is implicitly unidirectional.
\end{definition}
We do not mention the number of auxiliary streams when there are none ($k(N)=0$). 
Furthermore, we assume that operations on numbers $N \in [0, 2N]$ can be done in constant time.
\subsection{XML documents}

We consider finite unranked ordered labeled trees $t$, where each tree node is labeled by some label in $\Sigma$,
and its root has a distinguished label $r$. 
Moreover, the children of every non-leaf node are ordered. From now, we omit the terms ordered and labeled.
Then $k$-ranked trees are a special case where each node has at most $k$ children.
Binary trees are a special type of $2$-ranked trees, where each node is either a leaf or has exactly $2$ children.

For each label $a\in\Sigma$, we associate its corresponding {\em opening tag} $a$ and {\em closing tag} $\cl{a}$,
standing for $\langle a\rangle$ and $\langle /a\rangle$ in the usual XML notations.
An {\em XML sequence} is a sequence over the alphabet $\Sigma'=\{ a, \cl{a} : a \in\Sigma\}$.
The {\em XML sequence of a tree $t$} is the sequence of \textit{opening tags} and \textit{closing tags} in the order of a depth first traversal of $t$ (Figure~\ref{fig:xml-example}): when at step $i$ we visit a node with label $a$ top-down (respectively bottom-up), we let $X[i]=a$ (respectively  $X[i]=\cl{a}$).
Hence $X$ is a word over $\Sigma'=\{ a, \cl{a} : a \in\Sigma\}$ of size twice the number of nodes of $t$. 
The XML file describing $t$ is unique, and we denote it as $\XML(t)$.


\begin{figure}[h]
\centerline{\scalebox{0.8}{
\begin{tikzpicture}[node distance = 0.5 cm]
\tikzstyle{VertexStyle} = [shape = rectangle, color = white, fill = white, text = black, draw]
\Vertex[x=4, y=4, L=$r$]{r} 
\Vertex[x=2.5, y=3.2,L=$b$]{b1}
\Vertex[x=3.5, y=3.2, L=$b$]{b2}
\Vertex[x=4.5, y=3.2, L=$b$]{b3}
\Vertex[x=5.5, y=3.2, L=$c$]{c1}
\Vertex[x=1.9, y=2.4, L=$a$]{a1}
\Vertex[x=2.5, y=2.4, L=$a$]{a2}
\Vertex[x=3.1, y=2.4, L=$c$]{c2}
\Vertex[x=4.2, y=2.4, L=$a$]{a3}
\Vertex[x=4.8, y=2.4, L=$a$]{a4}  
\Edge(r)(b1)
\Edge(r)(b2)
\Edge(r)(b3)
\Edge(r)(c1)
\Edge(b1)(a1)
\Edge(b1)(a2)
\Edge(b1)(c2)
\Edge(b3)(a3)   
\Edge(b3)(a4)
\end{tikzpicture}
}}
\caption{Let $\Sigma = \{a, b, c \}$, and let $t$ be the tree as above. Then $\XML(t)=rba\cl{a}a\cl{a}c\cl{c}\cl{b}b\cl{b}ba\cl{a}a\cl{a}\cl{b}c\cl{c}\cl{r}$.\label{fig:xml-example}} 
\end{figure}

We assume that the input \ac{XML} sequences $X$ are {\em well-formed}, namely $X=\XML(t)$, for some tree $t$. The past work of~\cite{mmn10} legitimates this assumption since checking well-formedness is at least as easy as any of our algorithms for checking validity. Hence we could run an algorithm for well-formedness in parallel without increasing the resource requirements. Note, that randomness is necessary for checking well-formedness with sublinear space, whereas we will show that randomness is useless for validation.

Let us introduce more useful notations. Since the length of a well-formed \ac{XML} sequence is known in advance, we will denote it by $2N$ instead of $N$.
Each opening tag $X[i]$ and matching closing tag $X[j]$ in $X=\XML(t)$ corresponds to a unique tree node $v$ of $t$.
We sometimes denote $v$ either by $X[i]$ or $X[j]$. 
We also write (ambiguously) $v$ for its corresponding opening tag, and $\cl{v}$ for its corresponding closing tag.
Then, the {\em position} of $v$ in $X$ is $\pos(v)=i$. Similarly, $\pos(\cl{v})=j$.


We consider \ac{XML} validity against some \ac{DTD}. A \ac{DTD} is a mapping $D$ from $\Sigma$ to regular expressions over $\Sigma$.
Let $t$ be a tree. Then a node $v\in t$ with label $v$ and children $v_1,v_2,\ldots, v_k$ with respective labels $v_1,v_2,\ldots,v_k$ is {\em valid against $D$} if $v_1,v_2,\ldots,v_k$ satisfies the regular expression $D(v)$.
In particular, $v$ can be a leaf if and only if the empty word $\eps$ satisfies the regular expression $D(v)$.
Then $t$ is {\em valid against $D$} if all its nodes are valid against $D$.
Throughout the document we assume that \ac{DTD}s are considerably small and our algorithms have full access to them without accounting this to their space requirements.  

\begin{definition}[\pb]
Let $D$ be some \ac{DTD}. The problem $\pb$ consists of deciding whether an input tree $t$ given by its XML sequence $\XML(t)$ on an input stream is valid against $D$.
\end{definition}
We denote by $\pb(2)$ the problem $\pb$ restricted to input XML sequences describing binary trees.

\section{Validity of binary trees}
\label{section:binary-trees}
For simplicity, we only consider binary trees in this section.
A {\em left opening/closing tag} (respectively {\em right opening/closing tag}) of an XML sequence $X$ is a tag whose corresponding node is the first child of its parent (respectively second child).

Our algorithms for binary trees can be extended to $2$-ranked trees. This requires few changes in the one-pass Algorithm~\ref{algo_bin_tree_one_pass} and the two-pass Algorithm~\ref{algo_bin_tree_two_passes} (indeed in the subroutine Algorithm~\ref{algo_bin_tree_subroutine}), that we do not describe here.

We fix now a \ac{DTD} $D$, and assume that, in our algorithms, we have access to a procedure check$(a,b,c)$ that signalizes invalidity and aborts if $bc$ is not valid against the regular expression $D(a)$. Otherwise it returns without any action.


\subsection{One-pass algorithm}
In order to validate an XML document, we ensure validity of all tree nodes. For checking validity of a node $v$ with two children, we have to relate $3$ labels, that is the label $v$ of the node itself, and the labels of the two children nodes $v_1, v_2$. In a \textit{top-down} verification we use the opening tag $v$ of the parent node $v$ for verification, in a \textit{bottom-up} verification we use the closing tag $\cl{v}$ of the parent node $v$. Algorithm~\ref{algo_bin_tree_one_pass} makes use of the fact that there are these two chances to verify a node. It uses a stack onto which it pushes all opening tags in order to perform top-down verifications once the information of the children nodes arrives on the stream. $\cl{v_1}v_2$ forms a substring of the input, hence top-down verification requires only the storage of the opening tag $v$ since the labels of the children arrive in a block. The algorithm's space requirements depend on a parameter $K$ (we optimize by setting $K = \sqrt{N \log N}$). Once the number of opening tags on the stack is about to exceed $K$, we remove the bottom-most opening tag. The corresponding node will then be verified bottom-up. Note that $\cl{v_2}\cl{v}$ forms a substring of the input. Hence for bottom-up verifications it is enough to store the label of the left child $v_1$ on the stack since the label of the right child arrives in form of a closing tag right before the closing tag of the parent node. See Algorithm~\ref{algo_bin_tree_one_pass} for details. 

For the unique identification of closing tags on the stack, we have to store them with their depth in the tree. A stack item corresponding to a closing tag requires hence $\Order(\log N)$ space. Opening tags don't require the storage of their depth (we store a depth of $-1$ which we assume to require only constant space).

\begin{algorithm}[H]\small
\caption{Validity of binary trees in 1-pass \label{algo_bin_tree_one_pass}}
\begin{algorithmic}[1]
\REQUIRE input stream is a well-formed XML document
\STATE $d \gets 0, S \leftarrow$ empty stack
\STATE $K \gets \sqrt{N \log N}$ \label{line:setting_K}
\WHILE{stream not empty} \label{line:beginning_of_while_loop}
\STATE $x \gets $ next tag on stream 
\IF{$x$ is an opening tag $c$}
\STATE \textbf{if} $x$ is a leaf \label{line:leaf_check} \textbf{then} check$(c, \epsilon, \epsilon)$ \label{line:check_leaf} \textbf{end if}
\IF{$S$ has on top $(a, -1), (\cl{b}, d)$} \label{line:top_down_verification}
  \STATE check($a, b, c$); pop $S$ \COMMENT{\textit{Top-down verification}} 
\ENDIF
\IF{$|\{ (a, -1 ) \in S \, | \, a \mbox{ opening } \}| \ge K$} \label{clean-stack}
  \STATE remove bottom-most $(a, -1 )$ in $S$, $a$ opening
\ENDIF
\STATE $d \gets d + 1$
\STATE push $(x, -1 )$ \label{line:push_opening}
\ELSIF{$x$ is a closing tag $\cl{c}$}
\STATE $d \gets d - 1$
\STATE \textbf{if} S has on top $(\cl{a}, d+1), (\cl{b}, d+1)$ \textbf{then} \label{line:bottom_up_verification}
   \STATE \hspace{0.3cm} check ($c, a, b$) \COMMENT{\textit{Bottom-up verification}} 
   \STATE \hspace{0.3cm} pop $S$, pop $S$
\STATE \textbf{else if} $S$ has on top $(\cl{b}, d+1)$ \textbf{then} pop $S$ \label{line:remove_1}
\STATE \textbf{end if}
\STATE \textbf{if} $S$ has on top $(c, -1 )$ \textbf{then} pop $S$ \textbf{end if} \label{line:remove_2}
\STATE push $(x, d)$ \label{line:push_closing}
\ENDIF
\ENDWHILE
\end{algorithmic}
\end{algorithm}

The query in line~\ref{line:leaf_check} can be implemented by a lookahead of $1$ on the stream. The opening tag $x$ corresponds to a leaf only if the subsequent tag in the stream is the corresponding closing tag $\cl{x}$.

Figure~\ref{figure:visualisation_algo_1} visualizes the different cases  with their stack modifications appearing in Algorithm~\ref{algo_bin_tree_one_pass}.

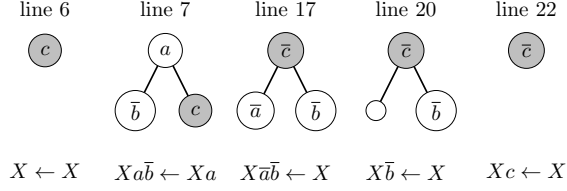
\begin{figure}[h]
\centerline{\scalebox{0.8}{
\begin{tikzpicture}
\tikzstyle{VertexStyle} = [shape = circle, color = black, fill = lightgray, text = black, draw]
\Vertex[x=0.5, y=1, L=$c$]{c0}
\Vertex[x=3, y=0, L=$c$]{c2}
\Vertex[x=4.5, y=1, L=$\cl{c}$]{c1}
\Vertex[x=6.5, y=1, L=$\cl{c}$]{c3}
\Vertex[x=8.5, y=1, L=$\cl{c}$]{c4}
\tikzstyle{VertexStyle} = [shape = circle, color = black, fill = white, text = black, draw]
\Vertex[x=2.5, y=1, L=$a$]{a2}
\Vertex[x=2, y=0, L=$\cl{b}$]{b2}
\Edge(a2)(b2)
\Edge(a2)(c2)
\Vertex[x=4, y=0, L=$\cl{a}$]{a1}
\Vertex[x=5, y=0, L=$\cl{b}$]{b1}
\Edge(a1)(c1)
\Edge(b1)(c1)
\Vertex[x=6, y=0, L= $ $]{a3}
\Vertex[x=7, y=0, L=$\cl{b}$]{b3}
\Edge(c3)(a3)
\Edge(c3)(b3)
\draw(0.5, 1.7) node{line~\ref{line:leaf_check}};
\draw(2.5, 1.7) node{line~\ref{line:top_down_verification}};
\draw(4.5, 1.7) node{line~\ref{line:bottom_up_verification}};
\draw(6.5, 1.7) node{line~\ref{line:remove_1}};
\draw(8.5, 1.7) node{line~\ref{line:remove_2}}; 
\draw(0.5, -1) node{$X \gets X$};
\draw(2.5, -1) node{$X a\cl{b} \gets X a$};
\draw(4.5, -1) node{$X \cl{a}\cl{b} \gets X $};
\draw(6.5, -1) node{$X \cl{b} \gets X$};
\draw(8.5, -1) node{$X c \gets X$};                                
\end{tikzpicture}
}}

\caption{ Visualization of the different conditions in Algorithm~\ref{algo_bin_tree_one_pass} with the applied stack modifications. $X$ represents the bottom part of the stack. Note that Algorithm~\ref{algo_bin_tree_one_pass} pushes the currently treated tag $c$ or $\cl{c}$ on the stack in Line~\ref{line:push_opening} or Line~\ref{line:push_closing}. $c$ or $\cl{c}$ corresponds to the highlighted node. \label{figure:visualisation_algo_1}} 
\end{figure}

Fact~\ref{fact:stack} (not proved here) and Lemma~\ref{lemma:stack} concern the structure of the stack $S$ used in Algorithm~\ref{algo_bin_tree_one_pass}. 

\begin{fact}
\label{fact:stack}
Let $S = (x_1, d_1), \dots (x_k, d_k)$ be the stack at the beginning of the while loop in line~\ref{line:beginning_of_while_loop}. Then:
\begin{enumerate}
\item $\pos(x_1) < \pos(x_2) \dots < \pos(x_k)$, \label{item:fact_stack_1}
\item $\depth(x_1) \le \depth(x_2) \dots \le \depth(x_k) \le d$. Moreover, if $\depth(x_i) = \depth(x_{i+1})$ then $x_i$ is the left sibling of $x_{i+1}$,
\item The sequence $x_1 \dots x_k$ satisfies the regular expression $\cl{a}^*b^* ( \epsilon \, | \, \cl{c} \, | \, \cl{d}\cl{e})$, where $\cl{a}^*$ are left closing tags, $b^*$ are opening tags, $\cl{c}$ is a closing tag, $d$ is a left closing tag, and $e$ is a right closing tag. \label{item:fact_stack_3}
\item A left closing tag $\cl{a}$ is only removed from $S$ upon verification of its parent node. \label{item:fact_stack_4}
\end{enumerate}
\end{fact}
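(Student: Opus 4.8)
The plan is to prove the four assertions together as one invariant, established by induction on the number of completed iterations of the while loop and verified at line~\ref{line:beginning_of_while_loop}. To make the bookkeeping go through I would strengthen the invariant with a reading of the stack that the assertions already suggest: the opening tags on $S$ are exactly the still-open nodes (entered but not yet left) whose opening tag the cleaning step has not discarded, hence they sit at pairwise distinct and strictly increasing depths; the maximal leading block of closing tags consists of left children still waiting for their parent to be verified bottom-up; and the at most two trailing closing tags record already-closed children awaiting the next verification. The base case is immediate, since $S$ is empty and $d=0$. For the inductive step I would read the single tag $x$, use that with counter value $d$ an opening tag belongs to a node of depth $d$ while a closing tag $\cl{c}$ belongs to a node $c$ of depth $d-1$ (so after the decrement $d=\depth(c)$ and the children of $c$ lie at depth $d+1$, exactly the value the closing-tag guards test), and then follow $x$ through the handful of stack edits the body performs.

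The position assertion and the persistence assertion cost little. The first holds because every push deposits the freshly read tag on top and pushes occur in increasing stream order, so no edit disturbs the strict monotonicity of $\pos$. For the second I would note that a left closing tag leaves $S$ only through the pop of the top-down verification, which removes the left child $\cl{b}$ for which \textsc{check}$(a,b,c)$ has just fired on its parent $a$, or through the double pop of the bottom-up verification, which removes the left child of the node $c$ being verified; both removals therefore coincide with the verification of the tag's parent. The other deletions never reach a left closing tag: cleaning and line~\ref{line:remove_2} remove opening tags, and line~\ref{line:remove_1} removes the surviving right child of a node already verified top-down, a right closing tag.

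The core is to maintain the regular-expression shape and the depth ordering jointly. Using the strengthened reading, each edit sends one instance of $\cl{a}^* b^* (\eps \mid \cl{c} \mid \cl{d}\cl{e})$ to another: line~\ref{line:push_opening} extends $b^*$ after the top-down pop has, if present, cleared a lone trailing closing tag; the bottom-up double pop deletes a trailing $\cl{d}\cl{e}$ and the closing push of line~\ref{line:push_closing} reinstalls a single trailing tag; line~\ref{line:remove_1} deletes a trailing right closing tag; and so on. For the depths, within $b^*$ the nodes are distinct open ancestors and hence strictly increasing, while within the leading block the nodes are all left children, so two consecutive ones cannot have equal depth, as the sibling clause would force the upper one to be a right child. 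The only places the ``equal depth $\Rightarrow$ left sibling'' clause can be active are therefore the junction between the leading block and $b^*$ and the junction created by a freshly pushed trailing tag.

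The step I expect to be the main obstacle is the cleaning removal of the bottom-most opening tag, since deleting an interior element directly endangers the sibling clause; the closing pushes are the second delicate point. For cleaning I would give a self-contained argument: as $K\ge 2$, whenever cleaning fires the removed tag $b_1$ is flanked below by the last leading closing tag $\cl{a}_p$ and above by another opening tag $b_2$, with $\depth(\cl{a}_p)\le\depth(b_1)\le\depth(b_2)$ inherited from the previous invariant; were $\depth(\cl{a}_p)=\depth(b_2)$, both equalities would hold and their clauses would make $b_1$ at once a right child (right sibling of the left child $\cl{a}_p$) and a left child (left sibling of $b_2$), which is impossible, so $\depth(\cl{a}_p)<\depth(b_2)$ and no new obligation arises. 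For the closing push of $(\cl{c},\depth(c))$ I would identify, via the reading, the element $y$ left beneath it after the pops: in the top-down case $y$ is the element that lay just below $c$'s own opening tag, so its clause is inherited verbatim; in the bottom-up case $y$ is either strictly shallower than $c$ or, when it sits at $c$'s depth, is precisely $c$'s pending left sibling, matching the clause. These checks, together with the routine ones for the remaining edits, close the induction.
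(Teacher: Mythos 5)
Your proposal is correct in substance, but note that the paper does not actually prove this fact: it is stated with the explicit remark ``not proved here,'' supported only by Figures~\ref{figure:visualisation_algo_1} and~\ref{figure:visualisation_algo_11} and by the observation, made after Lemma~\ref{lemma:stack}, that a left closing tag survives on $S$ only if no opening tags are present at its insertion. Your induction supplies the missing argument, and your strengthened reading of the stack is exactly the right invariant: in particular you rediscover, via the ``cleaning removes the bottom-most opening tag'' mechanism, the paper's key structural point that a pending left closing tag can only join the leading block when every retained opening tag below its parent's has already been discarded, which is what separates $\cl{a}^*$ from $b^*$. Your treatment of the two genuinely delicate edits is sound: for cleaning, the sandwich argument (equal depths on both sides of the removed $b_1$ would make $b_1$ simultaneously a right and a left child) works, though since the openings on $S$ form a subchain of the open ancestors their depths are strictly increasing, so $\depth(\cl{a}_p)\le\depth(b_1)<\depth(b_2)$ settles it more directly; you should also note the trivial subcase where the leading block is empty. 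Two small points deserve tightening. First, your case split for the closing push is labelled ``top-down case / bottom-up case,'' but the relevant dichotomy is whether the opening tag of the closing node is still on $S$ (popped at line~\ref{line:remove_2}) or was cleaned; the inherited-clause argument is right once restated this way. Second, your claim that line~\ref{line:remove_1} only ever removes a right closing tag, and more generally that the guards at lines~\ref{line:top_down_verification}, \ref{line:bottom_up_verification} and~\ref{line:remove_2} always see the intended elements, rests on the sub-invariant that no stale element from a closed subtree can sit above a retained opening tag; this does follow from your strengthened reading (a pending left closing is inserted only when no openings are present, hence never above one), but it is the load-bearing step of the whole induction and should be verified explicitly in each branch rather than absorbed into ``the routine ones for the remaining edits.'' With that made explicit, your proof is complete and is, as far as one can tell, the argument the authors had in mind.
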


\begin{lemma}
\label{lemma:stack}
Let $S = (x_1, d_1), \dots (x_k, d_k)$ be the stack at the beginning of the while loop in line~\ref{line:beginning_of_while_loop}.
Let $(\cl{c_i}, d_i), (\cl{c_{i+1}}, d_{i+1})$ be two consecutive left closing tags in $S$ such that $(\cl{c_{i+1}}, d_{i+1})$ is not the topmost one. Then $\pos({\cl{c_{i+1}}}) \ge \pos({\cl{c_i}}) + 2K$.

\end{lemma}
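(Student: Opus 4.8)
The plan is to understand why two consecutive left closing tags in the stack must be far apart in the stream. The key is to relate the gap $\pos(\cl{c_{i+1}}) - \pos(\cl{c_i})$ to the stack-cleaning rule in line~\ref{clean-stack}, which only keeps $K$ opening tags at a time. First I would recall from Fact~\ref{fact:stack}, item~\ref{item:fact_stack_4}, that a left closing tag $\cl{a}$ is removed from $S$ only when its parent node is verified. Hence, as long as both $\cl{c_i}$ and $\cl{c_{i+1}}$ reside on the stack, neither of their parent nodes has yet been verified top-down. The intuition is that $\cl{c_i}$ could only have been pushed (as a left closing tag that was \emph{not} immediately consumed by a top-down verification) precisely because its parent's opening tag had already been removed from the stack by the cleaning rule; this forces a large number of intervening opening tags, and each such opening tag in turn forces a matching closing tag, yielding the factor of $2$ in $2K$.

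The main steps I would carry out are as follows. Consider the parent node $u_i$ of $\cl{c_i}$ (where $c_i$ is the left child). Since $\cl{c_i}$ is a left closing tag still sitting on the stack and was not cleared by a top-down verification, the opening tag $u_i$ must have been evicted by the cleaning step in line~\ref{clean-stack}. Eviction happens only when the number of opening tags on the stack is about to exceed $K$; so at the moment $u_i$ was pushed, there were already on the order of $K$ opening tags above the eventual bottom, all of which have positions greater than $\pos(u_i)$ and hence greater than $\pos(\cl{c_i})$ is not quite right, so I would instead track positions carefully: between the push of $\cl{c_i}$ and the push of $\cl{c_{i+1}}$, the algorithm must have processed roughly $K$ opening tags (to trigger the next eviction that allows $\cl{c_{i+1}}$ to survive on the stack as a left closing tag) together with their $K$ matching closing tags. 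Counting both the opening and the closing tags of these intervening nodes gives the bound $\pos(\cl{c_{i+1}}) - \pos(\cl{c_i}) \ge 2K$.

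Concretely, I would argue: each of the $K$ opening tags that accumulate between the two evictions corresponds to a distinct subtree, and each such subtree contributes at least its own opening and closing tag to the stream between positions $\pos(\cl{c_i})$ and $\pos(\cl{c_{i+1}})$. Using Fact~\ref{fact:stack} item~\ref{item:fact_stack_1} (positions strictly increasing along the stack) and the well-formedness of $X$ (every opening tag has a matching closing tag appearing later), I would conclude that at least $2K$ tags lie strictly between the two left closing tags in question, establishing $\pos(\cl{c_{i+1}}) \ge \pos(\cl{c_i}) + 2K$.

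\textbf{Main obstacle.}
The hard part will be making the bookkeeping rigorous: I must pin down exactly when $u_i$ (the parent opening tag of $\cl{c_i}$) is evicted relative to when $\cl{c_i}$ is pushed, and show that the cleaning rule guarantees a full batch of $K$ new opening tags before $\cl{c_{i+1}}$ can likewise survive as an uncleared left closing tag. The subtlety is that opening tags are both pushed (line~\ref{line:push_opening}) and removed (both by cleaning in line~\ref{clean-stack} and by the matching-closing-tag removal in line~\ref{line:remove_2}), so I would need to carefully count \emph{net} opening tags and confirm that to trigger the eviction chain that leaves both left closing tags orphaned on the stack, the stream must advance by at least $2K$ positions — accounting for both the opening and the matching closing tags of the intervening nodes.
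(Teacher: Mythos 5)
Your plan, after your own mid-course correction, is essentially the paper's proof: the paper likewise argues that since $\cl{c_{i+1}}$ is not the topmost left closing tag, the opening tag of the \emph{parent of $c_{i+1}$} must have been evicted by the cleaning rule of line~\ref{clean-stack} at a stream position $k$ with $\pos(\cl{c_i}) < k < \pos(\cl{c_{i+1}})$, and eviction forces $K$ opening tags (all nested below that parent, hence inside the subtree of $c_{i+1}$) together with their $K$ matching closing tags into the window before $\pos(\cl{c_{i+1}})$, yielding the $2K$ gap. The bookkeeping you flag as the main obstacle is resolved exactly by anchoring the count at the parent of $c_{i+1}$ rather than of $c_i$, as your revised counting already does.
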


\begin{proof}
Denote by $X=X[1]X[2] \dots X[2N]$ the input stream. Since $\cl{c_{i+1}}$ is not the topmost left closing tag in $S$, the algorithm has already processed the right sibling opening tag $X[\pos(\cl{c_{i+1}})+1]$ of $\cl{c_{i+1}}$. By Item~\ref{item:fact_stack_4} of Fact~\ref{fact:stack}, no verification has been done of the parent of $\cl{c_{i+1}}$, since $\cl{c_{i+1}}$ is still in $S$. Therefore, the parent's opening tag $X[k]$ of $\cl{c_{i+1}}$ has been deleted from $S$, where $\pos(\cl{c_{i}}) < k < \pos(\cl{c_{i+1}})$. This can only happen if at least $K$ opening tags have been pushed on $S$ between $X[k]$ and $\cl{c_{i+1}}$. Since these $K$ opening tags must have been closed between $X[k]$ and $\cl{c_{i+1}}$ we obtain $\pos({\cl{c_{i+1}}}) \ge \pos({\cl{c_i}}) + 2K$.
\end{proof}

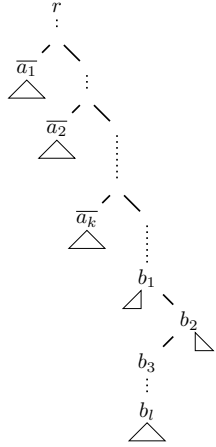
\begin{figure}[h]
\centerline{\scalebox{0.8}{
\begin{tikzpicture}
\tikzstyle{VertexStyle} = [shape = rectangle, color = white, fill = white, text = black, draw]
\Vertex[x=4, y=10, L=$r$]{r} 
\Vertex[x=4, y=9.5, L=$ $]{p1}
\Vertex[x=3.5, y=9, L=$\cl{a_1}$]{a1}
\Vertex[x=4.5, y=9, L=$ $]{b1}
\Edge[style=dotted](r)(p1)
\Edge(p1)(a1)
\Edge(p1)(b1)
\draw (3.5, 8.8) -- (3.2, 8.5) -- (3.8, 8.5) -- cycle;
\Vertex[x=4.5, y=8.5, L=$ $]{p2}
\Edge[style=dotted](b1)(p2)
\Vertex[x=4, y=8, L=$\cl{a_2}$]{a2}
\Vertex[x=5, y=8, L=$ $]{b2}
\Edge(p2)(a2)
\Edge(p2)(b2)
\draw (4, 7.8) -- (3.7, 7.5) -- (4.3, 7.5) -- cycle;
\Vertex[x=5, y=7, L=$ $]{pk}
\Edge[style=dotted](b2)(pk)
\Vertex[x=4.5, y=6.5, L=$\cl{a_k}$]{ak}
\Vertex[x=5.5, y=6.5, L=$ $]{bk}
\Edge(pk)(ak)
\Edge(pk)(bk)
\draw (4.5, 6.3) -- (4.2, 6) -- (4.8, 6) -- cycle;
\Vertex[x=5.5, y=5.5, L=$b_1$]{bb1}
\draw (5.4, 5.3) -- (5.1, 5) -- (5.4, 5) -- cycle;
\Vertex[x=6.2, y=4.8, L=$b_2$]{bb2}
\draw (6.3, 4.6) -- (6.3, 4.3) -- (6.6, 4.3) -- cycle;
\Vertex[x=5.5, y=4.1, L=$b_3$]{bb3}
\Edge[style=dotted](bk)(bb1)
\Edge(bb1)(bb2)
\Edge(bb2)(bb3)
\Vertex[x=5.5, y=3.3, L=$b_l$]{bbk}
\Edge[style=dotted](bb3)(bbk)
\draw (5.5, 3.1) -- (5.2, 2.8) -- (5.8, 2.8) -- cycle;
\end{tikzpicture}
}}
\caption{Visualization of the structure of the stack used in Algorithm~\ref{algo_bin_tree_one_pass}. The stack fulfills the regular expression $\cl{a}^*b^*(\epsilon \, | \, \cl{c} \, | \, \cl{d} \cl{e})$, compare Item~\ref{item:fact_stack_3} of Fact~\ref{fact:stack}. The $(\cl{a_i})_{i=1 \dots k}$ are closing tags whose parents' nodes were not verified top-down. For $j > i$, $a_j$ is connected to $a_i$ by the right sibling of $a_i$. The $(b_i)_{i=1 \dots l}$ form a sequence of opening tags such that $b_i$ is the parent node of $b_{i+1}$. On top of the stack might be one or two closing tags depending on the current state of the verification process. \label{figure:visualisation_algo_11}} 
\end{figure}

Fact~\ref{fact:stack} and Lemma~\ref{lemma:stack} provide more insight in the stack structure and are used in the proof of Theorem~\ref{thm:bin-one-pass}. Item~\ref{item:fact_stack_3} of Fact~\ref{fact:stack} states that the stack basically consists of a sequence of left closing tags which are the left children that are needed for bottom-up verifications of nodes that could not be verified top-down. This sequence is followed by a sequence of opening tags for which we still aim a top-down verification. The proof of Lemma~\ref{lemma:stack} explains the fact that the two sequences are strictly separated: a left-closing tag $\cl{v_1}$ only remains on the stack if at the moment of insertion there are no opening tags on the stack.

\begin{theorem}\label{thm:bin-one-pass}
Algorithm~\ref{algo_bin_tree_one_pass} is a one-pass streaming algorithm for $\pb(2)$
with space $\Order(\sqrt{N \log N})$ and $\Order(1)$ processing time per letter.
\end{theorem}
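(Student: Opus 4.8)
The plan is to establish two separate claims: correctness of Algorithm~\ref{algo_bin_tree_one_pass} (it signals invalidity if and only if the input tree is invalid against $D$), and the space and time bounds. I expect the space bound to be the substantive part, and I would lean heavily on Fact~\ref{fact:stack} and Lemma~\ref{lemma:stack} for it. The processing time per letter is immediate: each iteration of the while loop inspects only a constant number of stack items near the top, performs a constant number of pushes and pops, updates $d$, and possibly removes the bottom-most opening tag (which, with an auxiliary pointer to the boundary between the closing-tag block and the opening-tag block, is a constant-time operation); every \texttt{check} call is constant time since $D$ is fixed. So the crux is correctness and the $\Order(\sqrt{N\log N})$ space bound.

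For correctness I would argue that every internal node $v$ with children $v_1,v_2$ is verified exactly once. By Item~\ref{item:fact_stack_3} of Fact~\ref{fact:stack}, the stack always matches $\cl{a}^*b^*(\epsilon\,|\,\cl{c}\,|\,\cl{d}\cl{e})$, so when the opening tag of $v$ sits at the top of the opening-tag block, its two children's tags $\cl{v_1}v_2$ arrive as a contiguous substring and trigger the top-down \texttt{check}$(v,v_1,v_2)$ in line~\ref{line:top_down_verification}, \emph{unless} $v$'s opening tag was evicted earlier in line~\ref{clean-stack}. In the eviction case, I would show that the left closing tag $\cl{v_1}$ is retained on the stack (this is exactly Item~\ref{item:fact_stack_4}) and that, when $\cl{v}$ is finally read, the pattern $\cl{v_1}\cl{v_2}\cl{v}$ is present so that the bottom-up \texttt{check}$(v,v_1,v_2)$ in lines~\ref{line:bottom_up_verification} fires. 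A short case analysis then confirms that no node is verified twice and that leaves are handled by line~\ref{line:check_leaf}; together these give that the algorithm aborts iff some node violates $D$.

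For the space bound I would bound the maximum stack height. The stack decomposes, by Item~\ref{item:fact_stack_3}, into a block of left closing tags $\cl{a_1},\dots,\cl{a_m}$ followed by a block of opening tags $b_1,\dots,b_\ell$ (plus at most two tags on top). The opening-tag block has size at most $K$ by the eviction rule in line~\ref{clean-stack}. For the closing-tag block I would invoke Lemma~\ref{lemma:stack}: consecutive retained left closing tags (other than the topmost) have positions at least $2K$ apart, so $\pos(\cl{a_m})\ge \pos(\cl{a_1})+2K(m-2)$, and since all positions lie in $[1,2N]$ this forces $m = \Order(N/K)$. Hence the stack holds $\Order(K + N/K)$ items. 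Each item carries a tag and a depth in $[0,N]$, costing $\Order(\log N)$ bits, so the total space is $\Order((K + N/K)\log N)$. Substituting $K=\sqrt{N\log N}$ from line~\ref{line:setting_K} balances the two terms and yields $\Order(\sqrt{N\log N}\cdot\log N)$ — which I would need to reconcile with the claimed $\Order(\sqrt{N\log N})$; the resolution is that the $\Order(\sqrt{N\log N})$ is measured in \emph{memory cells} (each storing a tag together with its $\Order(\log N)$-bit depth in one cell, per the model in Section~\ref{sec:preliminaries}), not in bits, so the item count $\Order(K + N/K) = \Order(\sqrt{N/\log N} + \sqrt{N\log N}) = \Order(\sqrt{N\log N})$ is the right quantity.

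The main obstacle I anticipate is making the eviction/retention bookkeeping airtight: precisely, showing that an evicted opening tag always leaves behind exactly its left child's closing tag for a later bottom-up check, and that this retained left closing tag is never prematurely popped (line~\ref{line:remove_1} pops a right closing tag, line~\ref{line:remove_2} pops a matching opening tag). This is where Fact~\ref{fact:stack}, especially Items~\ref{item:fact_stack_3} and~\ref{item:fact_stack_4}, does the heavy lifting, so I would treat those as given and focus the argument on verifying the fire-exactly-once property by tracking the stack through the five transitions depicted in Figure~\ref{figure:visualisation_algo_1}.
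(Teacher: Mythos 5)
Your correctness argument and your bound on the \emph{number} of stack items follow the paper's proof closely (eviction bounds the opening-tag block by $K$, Lemma~\ref{lemma:stack} gives the $2K$ spacing and hence $\Order(N/K)$ left closing tags, Item~3 of Fact~\ref{fact:stack} gives at most one right closing tag). But your space accounting has a genuine gap. You charge $\Order(\log N)$ bits to \emph{every} stack item, obtain $\Order((K+N/K)\log N)$, notice this exceeds the claimed bound by a $\log N$ factor, and then patch it by reinterpreting the theorem's bound as a count of memory cells each holding a tag together with an $\Order(\log N)$-bit depth. That reinterpretation contradicts the paper's own model and accounting: the model in Section~\ref{sec:preliminaries} measures internal memory in ``letters of $\Sigma$ and bits'' (the one-cell-per-tuple convention is stated only for \emph{auxiliary streams}), and the paper explicitly charges a closing tag with its depth $\Order(\log N)$ space. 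Under your uniform charging, the claimed bit bound simply does not follow.

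The idea you missed — and the reason the algorithm is written the way it is — is the \emph{asymmetric} annotation: only closing tags are stored with their true depth; opening tags are pushed as $(x,-1)$ (line~\ref{line:push_opening}), and the paper states that this costs only constant space per opening tag. Depths are needed solely to identify sibling pairs of closing tags for bottom-up checks, so the $K$ opening tags cost $\Order(K)$ total while the $\Order(N/K)$ closing tags cost $\Order((N/K)\log N)$ total, giving $\Order\bigl(K+(N/K)\log N\bigr)$. This is exactly the expression that $K=\sqrt{N\log N}$ balances to $\Order(\sqrt{N\log N})$; with your uniform charging, the optimal choice of $K$ would instead be $\sqrt{N}$, and no choice of $K$ would reach the claimed bound. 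On the correctness side your sketch is sound and matches the paper (which argues that every node is verified at least once, via Item~4 of Fact~\ref{fact:stack} and emptiness of the stack at termination); your additional ``verified exactly once'' claim is unnecessary, since a duplicate call to \texttt{check} on a valid node is harmless.
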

\begin{proof}
To prove correctness, we have to ensure validity of all nodes. Leaves are
correctly validated upon arrival of its opening tag in line~\ref{line:check_leaf}.
Concerning non-leaf nodes, firstly, note that all closing tags are pushed on $S$ in
line~\ref{line:push_closing}, in particular all closing tags of left children appear
on the stack. The algorithm removes left closing tags only after validation of
its parent node, no matter whether the verification was done top-down or bottom-up, compare Item~\ref{item:fact_stack_4} of Fact~\ref{fact:stack}. 
Emptiness of the stack after the execution of the algorithm
follows from Item~\ref{item:fact_stack_1} of Lemma~\ref{lemma:stack} and implies hence the validation 
of all non-leaf nodes.

For the space bound, Line~\ref{clean-stack} guaranties that the number of opening tags in $S$ is always at most $K$. 
We bound the number of closing tags on the stack by $\frac{N}{K} + 2$. Item~$3$ of Lemma~\ref{lemma:stack} states that the 
stack contains at most one right closing tag. From Item~\ref{item:fact_stack_4} of Lemma~\ref{lemma:stack} we 
deduce that $S$ comprises at most $\frac{N}{K}+1$ left closing tags, since the stream is of length $2N$, 
and the distance in the stream of two consecutive left closing tags that reside on $S$ except the top-most one is at least $2K$. 
A closing tag with depth $(a, d) \in \Sigma' \times [N]$ requires $\Order(\log N)$ space, an opening tag requires only
constant space. Hence the total space requirements are $\Order((\frac{N}{K} + 2) \log N + K)$ which is minimized for $K=\sqrt{N \log N}$.

Concerning the processing time per letter, the algorithm only performs a constant 
number of local stack operations in one iteration of the while loop. 
\end{proof}

\textbf{Remark} Algorithm~\ref{algo_bin_tree_one_pass} can be turned into an algorithm with
space complexity $\Order(\sqrt{D \log D})$, where $D$ is the depth of the XML document. If $D$
is known beforehand, it is enough to set $K = \sqrt{D \log D}$ in line~\ref{line:setting_K}. If $D$
is not known in advance, we make use of an auxiliary variable $D'$ storing a guess for the 
document depth. Initially we set $D' = C$, $C > 0$ some constant, we set $K = \sqrt{D' \log D'}$, and we
run Algorithm~\ref{algo_bin_tree_one_pass}. Each time $d$ exceeds $D'$, we double $D'$, and 
we update $K$ accordingly.

This guarantees that the number of opening tags on the stack is limited by $\Order(\sqrt{D \log D})$. 
Since we started with a too small guess for the document depth, we may have removed opening tags 
that would have remained on the stack if we had chosen the
depth correctly. This leads to further bottom-up verifications, but no more than $\Order(\sqrt{D/ \log D})$
guaranteeing $\Order(\sqrt{D \log D})$ space.

\subsection{Two-pass algorithm}

%
%
%
%
The bidirectional two-pass Algorithm~\ref{algo_bin_tree_two_passes} uses a subroutine that checks in one-pass validity of all nodes whose left subtree is at least as large as its right subtree. Feeding into this subroutine the XML document read in reverse direction and interpreting opening tags as closing tags and vice versa, it checks validity of all nodes whose right subtree is at least as large as its left subtree. In this way all tree nodes get verified.

The subroutine performs only checks in a bottom-up fashion, that is, the verification of a node $v$ with children $c_1, c_2$ makes use of the tags $\cl{c_1}$ and $c_2$ (which are adjacent in the XML document and hence easy to recognize) and the closing tag of $\cl{v}$. When $\cl{c_1}, c_2$ appears in the stream, a 4-tuple consisting of $\cl{c_1}, c_2, \depth(c_1)$ and $\pos(\cl{c_1})$ gets pushed on the stack. Upon arrival of $\cl{v}$, $\depth(c_1)$ is needed to identify $c_1, c_2$ as the children of $v$.  $\pos(\cl{c_1})$ is needed for cleaning the stack: with the help of the $\pos$ values of the stack items, we identify stack items whose parents' nodes have larger right subtrees than left subtrees, and these stack items get removed from the stack. In so doing, we guarantee that the stack size does not exceed $\log(N)$ elements which is an exponential improvement over the one-pass algorithm.

Note that the reverse pass can be done independently of the first one, namely eventually in parallel.

\begin{algorithm}[h]\small
\caption{Two-pass algorithm validating binary trees\label{algo_bin_tree_two_passes}}
\begin{algorithmic}
\STATE run \textbf{Algorithm~\ref{algo_bin_tree_subroutine}} reading the stream from left to right
\STATE run \textbf{Algorithm~\ref{algo_bin_tree_subroutine}} reading the stream from right to left, where opening tags are interpreted as closing tags, and vice versa.
\end{algorithmic}
\end{algorithm}

\begin{algorithm}[H]\small
\caption{Validating nodes with size(left subtree) $\geq$ size(right subtree)\label{algo_bin_tree_subroutine}}
\begin{algorithmic}[1]
\STATE $l \gets 0$; $n \gets 0$; $S \leftarrow$ empty stack
\WHILE{stream not empty}
\STATE $x \gets $ next tag on stream (and move stream to next tag)
\STATE $y\gets$ next tag on stream, without consuming it yet
\STATE $n \gets n + 1$
\IF{$x$ is an opening tag $c$}
 \STATE $l \gets l + 1$
 \STATE \textbf{if} $y = \cl{c}$ \textbf{then}  check($c, \epsilon, \epsilon$) \textbf{end if} \label{line:leaf_check_algo2}  
\ELSE[$x$ is a closing tag $\cl{c}$]
 \STATE $l \gets l-1$
 \IF{$S$ has on top $(\cdot, \cdot, l+1, \cdot)$ \label{line:check_item_algo2}}
   \STATE $(\cl{a},b,\cdot,\cdot)\gets$ pop from $S$; check($c, a, b$) 
 \ENDIF   
 \IF{$y$ is an opening tag $d$ \label{line:push_item_algo2}}     
   \STATE push ($\cl{c}$, $d$, $l$, $n$) to $S$ 
 \ENDIF  
\ENDIF
\WHILE{there is $s_1=(\cdot, \cdot, \cdot, n_1)$ just below $s_2=(\cdot, \cdot, \cdot, n_2)$ in $S$ with $n-n_2 > n_2 - n_1$}\label{line:clean_stack}
\STATE suppress $s_2$ from $S$ 
\ENDWHILE \label{line:end_while_loop}
\ENDWHILE
\end{algorithmic}
\end{algorithm}

Figure~\ref{figure:visualisation_algo_2} visualizes the different cases in Algorithm~\ref{algo_bin_tree_subroutine}.

\begin{figure}[h]
\centerline{\scalebox{0.8}{
\begin{tikzpicture}
\tikzstyle{VertexStyle} = [shape = circle, color = black, fill = lightgray, text = black, draw]
\Vertex[x=0.5, y=1, L=$c$]{c0}
\Vertex[x=2.5, y=1, L=$\cl{c}$]{c1}
\Vertex[x=4, y=0, L=$\cl{c}$]{c2}
\tikzstyle{VertexStyle} = [shape = circle, color = black, fill = white, text = black, draw]
\Vertex[x=2, y=0, L=$\cl{a}$]{a1}
\Vertex[x=3, y=0, L=$b$]{b1}
\Edge(c1)(a1)
\Edge(c1)(b1)
\Vertex[x=4.5, y=1, L=$ $]{a2}
\Vertex[x=5, y=0, L=$d$]{d2}
\Edge(a2)(c2)
\Edge(a2)(d2)
\draw(0.5, 1.7) node{line~\ref{line:leaf_check_algo2}};
\draw(2.5, 1.7) node{line~\ref{line:check_item_algo2}};
\draw(4.5, 1.7) node{line~\ref{line:push_item_algo2}};
\end{tikzpicture}
}}

\caption{ Visualization of the different conditions in Algorithm~\ref{algo_bin_tree_subroutine}. The incoming tag $x$ corresponds to the highlighted node. \label{figure:visualisation_algo_2}} 
\end{figure}
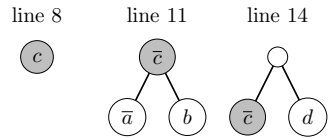

We highlight some properties concerning the stack used in Algorithm~\ref{algo_bin_tree_subroutine}.

\begin{fact}
\label{fact_stack_order_two_passes}
$S$ in Algorithm~\ref{algo_bin_tree_subroutine} satisfies the following:
\begin{enumerate}
\item If $(\cl{a_1}, b_1, \depth(\cl{a_1}), \pos(a_1))$ is below $(\cl{a_2}, b_2, \depth(\cl{a_2}), \pos(a_2))$ in $S$, then
$\mbox{$\pos(\cl{a_1}) <\pos(\cl{a_2})$}$, $\mbox{$\depth(\cl{a_1}) <\depth(\cl{a_2})$}$, and $a_2, b_2$ are in the subtree of $b_1$. \label{fact:two_stack_elements}
\item Consider $l$ at the end of the while loop in line~\ref{line:end_while_loop}. Then there are no stack elements $(\cdot, \cdot, l', \cdot)$ with $l' > l$. \label{fact:no_deeper_elements}
\end{enumerate}
\end{fact}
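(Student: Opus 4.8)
The plan is to extract two structural invariants of the stack $S$ that hold throughout the single scan of Algorithm~\ref{algo_bin_tree_subroutine}, prove them by induction on the iterations of the main loop, and then read both items off as consequences. I would not reason about a snapshot of $S$ in isolation but instead verify that each invariant survives every operation that touches $S$: the push in line~\ref{line:push_item_algo2}, the matching pop in line~\ref{line:check_item_algo2}, and the mid-stack suppressions of the cleaning loop in lines~\ref{line:clean_stack}--\ref{line:end_while_loop}. First I would fix the meaning of the four coordinates: an element is created only when the current tag is the closing tag $\cl{a}$ of a \emph{left} child $a$ of some node $w$ and the next tag opens its right sibling $b$; its fourth coordinate is the counter $n=\pos(\cl{a})$ and its third coordinate is the level $l=\depth(w)$ of the common parent (one below the level of $a$ itself).

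The first invariant is a push-order invariant: reading $S$ from bottom to top lists its elements in strictly increasing order of the iteration at which they were pushed. This holds because pushes always act on the top, distinct pushes carry distinct values of $n$, and \emph{both} kinds of deletion---the pop of line~\ref{line:check_item_algo2} and the suppressions of the cleaning loop---remove a single element while leaving the relative order of the survivors intact. Since push time coincides with the stored $n=\pos$ of the left closing tag, this invariant already gives $\pos(\cl{a_1})<\pos(\cl{a_2})$ in Item~\ref{fact:two_stack_elements}.

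The second invariant is a nesting/lifetime property. I would show that the element created for parent $w$ with right child $b$ survives on $S$ only while the subtree rooted at $b$ is being scanned: it is inserted immediately after $\cl{a}$ (the left subtree is finished, the right is just starting) and is deleted either by the matching pop when $\cl{w}$ is read or earlier by the cleaning loop, so in all cases it is gone once the scan leaves the subtree of $b$. Combined with the push-order invariant, this gives that whenever $s_2$ lies above $s_1$ the node $w_2$ generating $s_2$ was processed while $s_1$ was still present, hence $w_2$---and therefore $a_2,b_2$---lies inside the subtree of $b_1$, which is the third assertion of Item~\ref{fact:two_stack_elements}. Because every node of that subtree has level strictly above $\depth(w_1)$, while the stored third coordinates are exactly the parents' levels, the same observation yields the strict depth inequality $\depth(\cl{a_1})<\depth(\cl{a_2})$.

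For Item~\ref{fact:no_deeper_elements} I would carry the numeric invariant that, at the end of the loop body (after cleaning), every stored third coordinate is $\le l$, and prove it by the same induction. The opening-tag branch increases $l$ and leaves $S$ untouched, so the bound only improves; the closing-tag branch decreases $l$ to $l_{\mathrm{old}}-1$, and if the top carries level $l+1=l_{\mathrm{old}}$ it is popped, whereupon the strict monotonicity just proved forces the new top to carry level $\le l$, while the fresh element possibly pushed in line~\ref{line:push_item_algo2} receives stored level exactly the new $l$, and the cleaning loop only removes. \textbf{The main obstacle} is the cleaning loop, which deletes from the \emph{middle} of $S$ rather than from the top: I must check that this preserves both the bottom-to-top order and the nesting relation, which it does since deleting one element never reorders the survivors nor enlarges a subtree containment. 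A secondary bookkeeping point I would pin down at the outset is the off-by-one between the stored level (the parent's level) and the notation $\depth(\cl{a_1})$ for the child's level, together with the fact that the fourth coordinate records $\pos$ of the left closing tag; fixing these conventions up front makes the monotonicity statements match the wording of the fact verbatim.
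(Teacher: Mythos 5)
Your proposal is correct, and it actually supplies something the paper omits: Fact~\ref{fact_stack_order_two_passes} is stated without proof (the paper only illustrates it in Figure~\ref{figure:stack_algo_2}), so there is no in-paper argument to compare against. Your choice of invariants --- push-order, the nesting/lifetime property, and the running bound that every stored third coordinate is at most the current $l$ --- is exactly what is needed, and your insistence on checking all three stack operations, in particular the mid-stack suppressions of lines~\ref{line:clean_stack}--\ref{line:end_while_loop}, addresses the right obstacle (suppression preserves relative order, and subtree containment is transitive, so adjacent pairs created by a deletion still satisfy the nesting relation).

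One step should be made explicit, because it is where a circularity could hide. Your lifetime claim says the element $e_w=(\cl{a},b,\depth(w),\pos(\cl{a}))$ is removed ``by the matching pop when $\cl{w}$ is read,'' but the pop in line~\ref{line:check_item_algo2} fires only when $e_w$ is on top at that moment, and one must also rule out that it ever fires for a \emph{wrong} element whose stored level coincidentally equals $l+1$. Both points do follow from your invariants together with the binary structure, but the derivation deserves a sentence: by the level bound, when a closing tag is processed all stored coordinates are at most $l+1$; by strict monotonicity the stored values are distinct, so at most the top element can attain $l+1$; and since during $e_w$'s lifetime the scan stays inside the subtree of $b$, where every tag has depth at least $\depth(w)+1$, the first closing tag at depth $\depth(w)$ after the push is $\cl{w}$ itself (for binary trees $\cl{b}$ is immediately followed by $\cl{w}$). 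Hence the conditional pop is always the matching one, which closes the induction --- note that this uses that $w$ has exactly two children, i.e., the $\pb(2)$ setting in which Algorithm~\ref{algo_bin_tree_subroutine} operates. Your bookkeeping remark on the off-by-one (the stored level is the parent's level, the stored position is that of the left child's closing tag) correctly reconciles your invariants with the loose coordinate notation in the statement and in Figure~\ref{figure:stack_algo_2}.
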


Figure~\ref{figure:stack_algo_2} illustrates the relationship between two consecutive stack elements as discussed in Item~\ref{fact:two_stack_elements} of Fact~\ref{fact_stack_order_two_passes}.

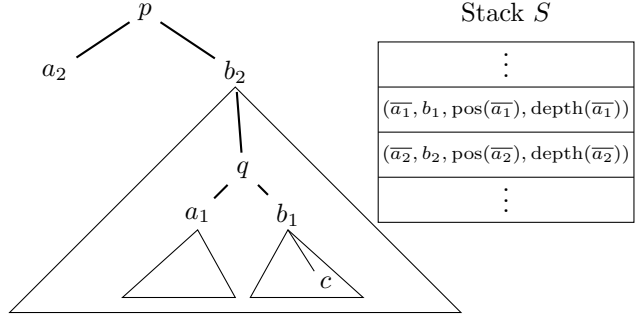
\begin{figure}[h]
\centerline{\scalebox{1}{
\begin{tikzpicture}[node distance = 0.5 cm]
\tikzstyle{VertexStyle} = [shape = rectangle, color = white, fill = white, text = black, draw]
\Vertex[x=4, y=4, L=$p$]{p} 
\Vertex[x=2.8, y=3.2,L=$a_2$]{v1}
\Vertex[x=5.2, y=3.2, L=$b_2$]{v2}
\Edge(p)(v1)
\Edge(p)(v2)
\Vertex[x=5.3, y=1.9, L=$q$]{q} 
\Vertex[x=4.7, y=1.3, L=$a_1$]{u1} 
\Vertex[x=5.9, y=1.3, L=$b_1$]{u2} 
\Edge[color=black](q)(u1)
\Edge(q)(u2)
\Vertex[x=6.4, y=0.4, L=$c$]{c} 
\draw (5.2, 3.0) -- (2.2, 0) -- (8.2, 0) -- cycle;
\draw[color=black] (4.7, 1.1) -- (3.7, 0.2) -- (5.2, 0.2) -- cycle;
\draw (5.9, 1.1) -- (5.4, 0.2) -- (6.9, 0.2) -- cycle;
\Edge[color=black](v2)(q)
\draw[color=black] (5.9, 1.1) -- (6.25, 0.55);
%
%
\draw (7.1, 3.6) -- (10.5, 3.6) -- (10.5, 1.2) -- (7.1, 1.2) -- cycle;
\draw (7.1, 3.0) -- (10.5, 3.0);
\draw (7.1, 2.4) -- (10.5, 2.4);
\draw (7.1, 1.8) -- (10.5, 1.8);
\draw(8.8, 4) node{Stack $S$};
\draw(8.8, 3.4) node{$\vdots$};
\draw(8.8, 2.7) node{\scriptsize $(\cl{a_1}, b_1, \pos(\cl{a_1}), \depth(\cl{a_1}) )$};
\draw(8.8, 2.1) node{\scriptsize $(\cl{a_2}, b_2, \pos(\cl{a_2}), \depth(\cl{a_2}) )$};
\draw(8.8, 1.6) node{$\vdots$};
\end{tikzpicture} 
}}
\caption{$c$ is the current element under consideration in Algorithm~\ref{algo_bin_tree_subroutine}. $a_1, b_1$ is in the subtree of $b_2$, compare Item~\ref{fact:two_stack_elements} of Fact~\ref{fact_stack_order_two_passes}. \label{figure:stack_algo_2}}. 
\end{figure}

\begin{lemma}
\label{lemma:verification_subtree}
Algorithm~\ref{algo_bin_tree_subroutine} verifies all nodes $v$ whose left subtree is at 
least as large as its right subtree.
\end{lemma}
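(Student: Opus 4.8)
The plan is to show that every node $v$ whose left subtree is at least as large as its right subtree is checked by a call to $\mathrm{check}(v, c_1, c_2)$ in line~\ref{line:check_item_algo2}, where $c_1, c_2$ are the two children of $v$. I would split into two cases according to whether $v$ is a leaf or an internal node. Leaves satisfy the left-subtree condition vacuously (both subtrees are empty), and they are checked directly in line~\ref{line:leaf_check_algo2} via the lookahead tag $y$: when $x$ is an opening tag $c$ and the very next tag $y$ equals $\cl{c}$, we call $\mathrm{check}(c, \eps, \eps)$. So the real content is the internal-node case.

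For an internal node $v$ with children $c_1$ (left) and $c_2$ (right), the key observation is that $v$ is verified in line~\ref{line:check_item_algo2} exactly when the tuple corresponding to $(c_1, c_2)$ is still on the stack at the moment $\cl{v}$ is processed, and moreover sits on top with fourth-from-bottom... rather, with the correct depth field $l+1$. Here I would use Fact~\ref{fact_stack_order_two_passes}: by Item~\ref{fact:no_deeper_elements}, when $\cl{v}$ arrives and $l$ has just been decremented to $\depth(v)$, any stack element with depth field equal to $l+1 = \depth(v)+1 = \depth(c_1)$ must be the tuple for the children of $v$ itself (siblings at that depth under $v$ cannot coexist on the stack, by Item~\ref{fact:two_stack_elements}, since consecutive stack elements are nested, not sibling). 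Thus the only way $v$ fails to be verified is if the tuple $(\cl{c_1}, c_2, \depth(c_1), \pos(c_1))$ was pushed in line~\ref{line:push_item_algo2} but then deleted by the cleaning loop in lines~\ref{line:clean_stack}--\ref{line:end_while_loop} before $\cl{v}$ was reached.

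Therefore the heart of the argument is to show: \emph{if} $\size(\text{left subtree of }v) \ge \size(\text{right subtree of }v)$, \emph{then} the tuple for $v$'s children is never suppressed by the cleaning condition $n - n_2 > n_2 - n_1$. I would make this quantitative. The tuple for $v$ is pushed when $\cl{c_1}c_2$ is read, i.e.\ at stream position right after the left subtree of $v$ has been fully seen; the value $n_2$ stored is essentially $\pos(\cl{c_1})$, and $n_1$ (the item just below) records the position where the enclosing context opened. The gap $n_2 - n_1$ is at least the size of the left subtree of $v$ (the tags consumed between the opening of $v$'s context and $\cl{c_1}$ include all of $c_1$'s subtree), while $n - n_2$, at the moment $\cl{v}$ is about to be processed, is at most the size of the right subtree of $v$ (the tags from $c_2$ down through $\cl{c_2}$). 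The left-subtree-$\ge$-right-subtree hypothesis then gives $n - n_2 \le n_2 - n_1$, so the suppression condition is never triggered against this tuple before $\cl{v}$ arrives. Combining with the verification argument of the previous paragraph shows $v$ is checked.

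The main obstacle I anticipate is pinning down precisely what $n_1, n_2, n$ measure relative to the subtree sizes, since $n$ counts \emph{tags} (two per node) and the cleaning loop compares an element to whatever element happens to lie immediately below it at that instant — which may not be the parent tuple but some ancestor whose own left subtree is large. I would therefore prove the gap inequality carefully using Item~\ref{fact:two_stack_elements} of Fact~\ref{fact_stack_order_two_passes} (nesting of consecutive stack elements), establishing that $n_2 - n_1$ is bounded below by twice the size of $v$'s left subtree and $n - n_2$ bounded above by twice the size of its right subtree at the decisive moment, so that the factor of two cancels and the hypothesis transfers exactly. Once this invariant on the $\pos$-fields is in hand, the rest is a routine case check against the stack structure described in Fact~\ref{fact_stack_order_two_passes}.
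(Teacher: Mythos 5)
Your plan is correct and follows essentially the same route as the paper's proof: the tuple pushed upon reading $\cl{c_1}$ survives the cleaning loop because, by the nesting of consecutive stack items (Item~\ref{fact:two_stack_elements} of Fact~\ref{fact_stack_order_two_passes}), $n_2 - n_1$ strictly exceeds the tag-span of $v$'s left subtree while $n - n_2$ is at most the tag-span of its right subtree before $\cl{v}$ arrives, so the hypothesis makes the suppression condition $n - n_2 > n_2 - n_1$ impossible, and Item~\ref{fact:no_deeper_elements} then places the tuple on top of the stack when $\cl{v}$ is processed. The paper states the two bounds directly as inequalities on $\pos$-values (sidestepping your factor-of-two bookkeeping, which indeed cancels exactly as you say), and your explicit leaf case is covered there by line~\ref{line:leaf_check_algo2} without separate comment.
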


\begin{proof}
Let $q$ be such a node. Let $a_1, b_1$ be the children of $q$. Then it holds 
that 
\begin{equation}
\label{eqn:subtree_condition}
\pos(\cl{a_1}) - \pos(a_1) \ge \pos(\cl{b_1}) - \pos(b_1), 
\end{equation}
since the size of the left subtree of $q$ is at least as large as the size of the right subtree.

Upon arrival of $\cl{a_1}$  Algorithm~\ref{algo_bin_tree_subroutine} pushes the 
4-tuple $\mbox{$t = (\cl{a_1}, b_1, \pos(\cl{a_1}), \depth(a_1))$}$ onto the stack $S$.
We have to show that $t$ remains on the stack until the arrival of $\cl{q}$. More precisely, 
we have to show that the condition in line~\ref{line:clean_stack} is never satisfied for $s_2 = t$. 
Since the algorithm never deletes the bottom-most stack item, we consider the case where there is a stack item 
$(\cl{a_2}, b_2, \pos(\cl{a_2}), \depth(a_2))$ just below $t$. Item~\ref{fact:two_stack_elements} of 
Fact~\ref{fact_stack_order_two_passes} tells us that $a_1, b_1$ are in the subtree of $b_2$. 
Let $c$ be the current tag under consideration such that $\pos(b_1) < \pos(c) < \pos(\cl{q})$. 
The situation is visualized in Figure~\ref{figure:stack_algo_2}.

According to the condition of line~\ref{line:clean_stack}, $t$ gets removed from the stack if 
\begin{equation}
\label{equ:deletion_condition}
\pos(c) - \pos(\cl{a_1}) > \pos(\cl{a_1}) - \pos(\cl{a_2}) .
\end{equation}
Note that the left side of Inequality~\ref{equ:deletion_condition} is a lower
bound on the size of the right subtree of $q$. Furthermore, the right side of
Inequality~\ref{equ:deletion_condition} upper bounds the size of the left 
subtree of $q$.

Using $\pos(c) - \pos(\cl{a_1}) \le \pos(\cl{b_1}) - \pos(b_1) + 1$ and 
$\pos(\cl{a_1}) - \pos(\cl{a_2}) > \pos(\cl{a_1}) - \pos(a_1)$, Inequality~\ref{equ:deletion_condition}
contradicts Inequality~\ref{eqn:subtree_condition} which shows that $t$ remains
on the stack until the arrival of $\cl{q}$. Item~\ref{fact:no_deeper_elements} of Fact~\ref{fact_stack_order_two_passes}
guarantees that there is no other stack element on top of $t$ upon arrival of $\cl{q}$. 
This guarantees the verification of node $q$ and proves the lemma.
\end{proof}

\begin{theorem}
\label{theorem:binary-two-pass}
Algorithm~\ref{algo_bin_tree_two_passes} is a bidirectional two-pass streaming algorithm for $\pb(2)$
with space $\Order(\log^2{N})$ and $\Order(\log N)$ processing time per letter.
\end{theorem}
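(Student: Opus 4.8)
The plan is to establish two things about Algorithm~\ref{algo_bin_tree_two_passes}: correctness and the claimed resource bounds. For correctness, I would argue that every internal node is verified in at least one of the two passes. Lemma~\ref{lemma:verification_subtree} already guarantees that the forward pass of Algorithm~\ref{algo_bin_tree_subroutine} validates every node whose left subtree is at least as large as its right subtree. For the reverse pass, I would observe that reading $\XML(t)$ from right to left while swapping the roles of opening and closing tags yields exactly the XML sequence of the mirror tree $t^{\rev}$ obtained by interchanging the two children at every node. Under this mirroring, a node whose right subtree is at least as large as its left subtree becomes a node whose left subtree is at least as large as its right subtree, so Lemma~\ref{lemma:verification_subtree} applied to the reverse pass validates it. Since for every internal node either its left subtree is at least as large as its right subtree, or vice versa (or they are equal, handled by either pass), every internal node is checked. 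Leaves are handled directly by the lookahead test in line~\ref{line:leaf_check_algo2} in both passes. Whenever a check fails, the corresponding \texttt{check} call aborts, and conversely if $t$ is valid then no check ever fails, giving soundness and completeness.

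For the space bound, the crux is to show the stack $S$ in Algorithm~\ref{algo_bin_tree_subroutine} never holds more than $\Order(\log N)$ items. I would analyze the cleaning loop of lines~\ref{line:clean_stack}--\ref{line:end_while_loop}, which enforces the invariant that for any two consecutive stack items $s_1=(\cdot,\cdot,\cdot,n_1)$ just below $s_2=(\cdot,\cdot,\cdot,n_2)$, we have $n-n_2 \le n_2-n_1$ at all times, where $n$ is the current position counter. By Item~\ref{fact:two_stack_elements} of Fact~\ref{fact_stack_order_two_passes}, the children recorded in $s_2$ lie inside the subtree rooted at the left child recorded in $s_1$, so the quantity $n_2-n_1$ is at least the progress made since $s_1$ was pushed, and the invariant forces the gaps $n-n_2, n_2-n_1, \ldots$ to grow geometrically as one descends the stack. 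Concretely, reading the stack from top to bottom, each gap at least doubles, so a stack of height $h$ requires the current position to have advanced by at least $2^{h}-1$ steps; since the total stream length is $2N$, this yields $h = \Order(\log N)$. Each stack item stores two tags together with a depth and a position value, each of which fits in $\Order(\log N)$ bits, so each item occupies $\Order(\log N)$ space and the total stack space is $\Order(\log^2 N)$. The variables $l, n$ and the lookahead also fit in $\Order(\log N)$ space, so the overall space of each pass, and hence of Algorithm~\ref{algo_bin_tree_two_passes}, is $\Order(\log^2 N)$.

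For the processing time per letter, each iteration of the outer while loop performs a constant number of stack operations plus the inner cleaning loop; although the cleaning loop may remove several items in a single iteration, a standard amortized argument bounds the total number of removals across the whole pass by the total number of pushes, which is at most $2N$. To get a worst-case rather than amortized bound, I note that between two consecutive pushes the cleaning loop can only delete items, and the invariant established above caps the stack height at $\Order(\log N)$, so at most $\Order(\log N)$ deletions occur in any single iteration; thus the processing time per letter is $\Order(\log N)$.

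The main obstacle I anticipate is making the geometric-growth argument for the stack height fully rigorous, specifically relating the raw position-counter gaps $n_2-n_1$ used by the cleaning condition to genuine subtree sizes via Item~\ref{fact:two_stack_elements} of Fact~\ref{fact_stack_order_two_passes}, and checking that the cleaning loop's invariant is correctly re-established after each push and each verification-driven pop. The correctness of the reverse pass via the mirror-tree reinterpretation is conceptually clean but must be stated carefully, since one has to confirm that interpreting opening tags as closing tags and vice versa on the reversed stream produces a genuinely well-formed XML sequence of the mirror tree and that the depth and position bookkeeping transfers correctly under this reinterpretation.
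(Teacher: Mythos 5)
Your proposal is correct and takes essentially the same route as the paper: correctness by applying Lemma~\ref{lemma:verification_subtree} to the forward pass and to the reversed stream with opening and closing tags swapped, the $\Order(\log N)$ stack-height bound from the cleaning invariant $n - n_2 \le n_2 - n_1$ forcing geometrically growing position gaps (the paper phrases this as the recurrence $n_i \ge \frac{1}{2}(n + n_{i-1})$ driven to a contradiction), and the $\Order(\log N)$ per-letter time from the stack-height bound on the cleaning loop. One minor remark: your appeal to Item~\ref{fact:two_stack_elements} of Fact~\ref{fact_stack_order_two_passes} inside the space argument is both unnecessary (the counter arithmetic alone suffices, as your own doubling argument shows) and slightly misstated --- the fact places $a_2, b_2$ in the subtree of the \emph{right} child $b_1$ recorded in the lower item, not the left child --- but this does not affect the validity of the proof.
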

\begin{proof}

To prove correctness of Algorithm~\ref{algo_bin_tree_two_passes}, we ensure
that all nodes get verified. By Lemma~\ref{lemma:verification_subtree}, in the first pass,
all nodes with a left subtree being at least as large as its right subtree get verified. 
The second pass ensures then verification of nodes with a right subtree that is at least
as large as its left subtree.

Next, we prove by contradiction that for any current value of variable $n$ in Algorithm~\ref{algo_bin_tree_subroutine},
the stack contains at most $\log(n)$ elements. Assume that there is a stack configuration of size $t\ge\log(n) + 1$. 
Let $(n_1, n_2 \dots, n_t)$ be the sequence of the fourth parameters of the stack elements. Since these elements are 
not yet removed, due to line~\ref{line:clean_stack} of Algorithm~\ref{algo_bin_tree_subroutine}, it holds that 
$\mbox{$n - n_i \le n_i - n_{i-1}$}$, or equivalently $\mbox{$n_i \ge 1/2(n+n_{i-1})$}$, for all $\mbox{$1 < i \le t$}$.
Since $n_1 \ge 1$, we obtain that $n_i \ge \frac{2^i - 1}{2^i} n + \frac{1}{2^i}$, and, in particular, 
$n_{t-1} \ge (n-1) + \frac{1}{n}$. Since all $n_i$ are integers, it holds that $n_{t-1} \ge n$. Furthermore, since $n_{t} > n_{t-1}$, we obtain $n_{\log n + 1} \ge n + 1$ which is a contradiction, since the element at position $n+1$ has not yet been seen.

Since $n\leq 2N$ and the size of a stack element is in $\Order(\log n)$,
Algorithm~\ref{algo_bin_tree_subroutine} uses space $\Order(\log^2 N)$. This also implies that
the while-loop at line~\ref{line:clean_stack} of Algorithm~\ref{algo_bin_tree_subroutine} can 
only be iterated $\Order(\log n)$ times during the processing of a tag on the stream. 
The processing time per letter is then $\Order(\log N)$, 
since we assume that operations on the stack run in constant time.
\end{proof}


\section{Validity of general trees}
\label{section:general-trees}
\subsection{Preparation}
The {\em \ac{FCNS} encoding} (see for instance \cite{Neven02automatatheory}) is an encoding of unranked trees as {\em extended} $2$-ranked trees, where we distinguish left child from right child. This is an extension of ordered $2$-ranked trees, since a node may have a left child but no right child, and vice versa. We therefore duplicate the labels $a\in \Sigma$ to $a_L$ and $a_R$, for respectively {\em left} and {\em right} opening/closing tags. The \ac{FCNS} tree is obtained by keeping the same set of tree nodes. The root node of the unranked tree remains the root in the \ac{FCNS} tree, and we annotate it by default left. The left child of any internal node in the \ac{FCNS} tree is the first child of this node in the unranked tree if it exists, otherwise it does not have a left child. The right child of a node in the \ac{FCNS} tree is the next sibling of this node in the unranked tree if it exists, otherwise it does not have a right child.
For a tree $t$, we denote $\FCNS(t)$ the \ac{FCNS} tree, and $\XML(\FCNS(t))$ the XML sequence of the FCNS encoding of $t$. 

Instead of annotating by left/right, another way to uniquely identify a node as left or right is to insert dummy leaves with label $\bot$. For a tree $t$, we denote the binary version without annotations and insertion of $\bot$ leaves by $\FCNS^{\bot}$. The two representations can be easily transformed into each other. In this section, we compute the FCNS encoding with annotations. In the next section, we present algorithms for the validation of the encoded form that make use of the representation using dummy leaves. See Figure~\ref{fig:fcns-example} for an example. 

\begin{figure}[h]
\centerline{\scalebox{0.7}{
\begin{tikzpicture}[node distance = 0.5 cm]
\tikzstyle{VertexStyle} = [shape = rectangle, color = white, fill = white, text = black, draw]
\Vertex[x=4, y=4, L=$r$]{r} 
\Vertex[x=2.5, y=3.2,L=$b$]{b1}
\Vertex[x=3.5, y=3.2, L=$b$]{b2}
\Vertex[x=4.5, y=3.2, L=$b$]{b3}
\Vertex[x=5.5, y=3.2, L=$c$]{c1}
\Vertex[x=1.9, y=2.4, L=$a$]{a1}
\Vertex[x=2.5, y=2.4, L=$a$]{a2}
\Vertex[x=3.1, y=2.4, L=$c$]{c2}
\Vertex[x=4.2, y=2.4, L=$a$]{a3}
\Vertex[x=4.8, y=2.4, L=$a$]{a4}  
\Vertex[x=4.8, y=1.5, L=$ $]{s}  
\Edge(r)(b1)
\Edge(r)(b2)
\Edge(r)(b3)
\Edge(r)(c1)
\Edge(b1)(a1)
\Edge(b1)(a2)
\Edge(b1)(c2)
\Edge(b3)(a3)   
\Edge(b3)(a4)
\end{tikzpicture}
\hspace{0.01cm}
\begin{tikzpicture}[node distance = 0.5 cm]
\tikzstyle{VertexStyle} = [shape = rectangle, color = white, fill = white, text = black, draw]
	\Vertex[x=4, y=6, L=$r$]{r} 
	\Vertex[x=2.5, y=5.2, L=$b$]{b1}
	\Vertex[x=1.6, y=4.5, L=$a$]{a1}
	\Vertex[x=3.4, y=4.5, L=$b$]{b2}
	\Vertex[x=2.1, y=3.9, L=$a$]{a2}
	\Vertex[x=2.5, y=3.3, L=$c$]{c}
	\Vertex[x=3.9, y=3.9, L=$b$]{b3}
	\Vertex[x=3.4, y=3.3, L=$a$]{a3}
	\Vertex[x=4.3, y=3.3, L=$c$]{c2}
	\Vertex[x=3.8, y=2.8, L=$a$]{a4}
	\Edge(r)(b1)
	\Edge(b1)(a1)
	\Edge(b1)(b2)
	\Edge(a1)(a2)
	\Edge(a2)(c)
	\Edge(b2)(b3)
	\Edge(b3)(a3)
	\Edge(b3)(c2)
	\Edge(a3)(a4)
\end{tikzpicture} \begin{tikzpicture}[node distance = 0.5 cm]
\tikzstyle{VertexStyle} = [shape = rectangle, color = white, fill = white, text = black, draw]
	\Vertex[x=4, y=6, L=$r$]{r} 
	\Vertex[x=2.5, y=5.2, L=$b$]{b1}
	\Vertex[x=1.6, y=4.5, L=$a$]{a1}
	\Vertex[x=3.4, y=4.5, L=$b$]{b2}
	\Vertex[x=2.1, y=3.9, L=$a$]{a2}
	\Vertex[x=2.5, y=3.3, L=$c$]{c}
	\Vertex[x=3.9, y=3.9, L=$b$]{b3}
	\Vertex[x=3.4, y=3.3, L=$a$]{a3}
	\Vertex[x=4.3, y=3.3, L=$c$]{c2}
	\Vertex[x=3.8, y=2.8, L=$a$]{a4}
	\Vertex[x=4.5, y=5.5, L=$\bot$]{dummy1}
	\Vertex[x=1.1, y=4, L=$\bot$]{dummy2}
	\Vertex[x=1.6, y=3.4, L=$\bot$]{dummy3}
	\Vertex[x=2.9, y=4, L=$\bot$]{dummy4}
	\Vertex[x=2.9, y=2.8, L=$\bot$]{dummy5}
   	\Edge(r)(dummy1)
	\Edge(a1)(dummy2)
	\Edge(a2)(dummy3)
	\Edge(b2)(dummy4)
	\Edge(a3)(dummy5)            
	\Edge(r)(b1)
	\Edge(b1)(a1)
	\Edge(b1)(b2)
	\Edge(a1)(a2)
	\Edge(a2)(c)
	\Edge(b2)(b3)
	\Edge(b3)(a3)
	\Edge(b3)(c2)
	\Edge(a3)(a4)
\end{tikzpicture}
}}
\caption{Left: introductory example tree $t$ already shown in Figure~\ref{fig:xml-example}. Middle: \ac{FCNS} encoding of $t$: 
$\XML(\FCNS(t))=r_Lb_La_La_Rc_R\cl{c_R}\cl{a_R}\cl{a_L}b_Rb_R a_L a_R  \cl{a_R}\cl{a_L}c_R\cl{c_R}\cl{b_R}\cl{b_R}\cl{b_L}\cl{r_L}$. Right: $\FCNS^{\bot}$ encoding of $t$:
$\XML(\FCNS^{\bot}(t))=rba\bot\cl{\bot}a\bot\cl{\bot}c\cl{c}\cl{a}\cl{a}b\bot\cl{\bot}ba\bot\cl{\bot}a  \cl{a	}\cl{a}c\cl{c}\cl{b}\cl{b}\cl{b}\bot\cl{\bot}\cl{r}$.  \label{fig:fcns-example}}
\end{figure}

In the following subsections we provide streaming algorithms for the transformation of $\XML(t)$ to $\XML(\FCNS(t))$, that we call
the {\em \ac{FCNS} encoding}, and its inverse, the {\em \ac{FCNS} decoding}.

The \ac{FCNS} encoding can be seen as a reordering of the tags of $\XML(t)$ and an annotation of the tags with left/right. 
We state several properties about the relationship of the ordering of the tags in $\XML(t)$ and $\XML(\FCNS(t))$. 
Fact~\ref{fact_opening_tags_order} concerns the structure of the subsequence of opening tags in $\XML(\FCNS(t))$, Fact~\ref{fact_closing_tags_order} concerns the structure of the subsequence of closing tags in $\XML(\FCNS(t))$, and 
Fact~\ref{fact_opening_closing_tags} concerns the interplay of the subsequences of opening and closing tags in $\XML(\FCNS(t))$.
\begin{fact}
\label{fact_opening_tags_order}
The opening tags in $\XML(t)$ are in the same order as the opening tags in $\XML(\FCNS(t))$.
\end{fact}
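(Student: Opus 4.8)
The plan is to compare the two opening-tag subsequences by relating both to the depth-first traversal order of the original tree $t$. The key observation is that the \ac{FCNS} encoding preserves the identity of tree nodes: it neither creates nor deletes nodes, it only reinterprets the ``first child'' and ``next sibling'' relations as ``left child'' and ``right child''. So the statement really asserts that a node $u$ produces its opening tag before a node $v$ in $\XML(t)$ if and only if $u$ produces its opening tag before $v$ in $\XML(\FCNS(t))$. First I would fix two distinct nodes $u$ and $v$ and argue about the relative order of their opening tags in each traversal.

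The main technical tool is that in a depth-first traversal, the opening tag of $u$ precedes that of $v$ exactly when $u$ is an ancestor of $v$, or when $u$ lies strictly to the left of $v$ in the usual ``older sibling / leftmost common branching'' sense. More precisely, let $w$ be the lowest common ancestor of $u$ and $v$ in $t$. If $u=w$ (i.e. $u$ is an ancestor of $v$) then $u$ opens first; symmetrically if $v=w$ then $v$ opens first; otherwise $u$ and $v$ descend from $w$ through two distinct children $c_u \ne c_v$ of $w$, and the one whose child-branch comes earlier in the sibling order opens first. I would then verify that this very same characterization governs the preorder of the \ac{FCNS} tree. In $\FCNS(t)$, ``$u$ is an ancestor of $v$'' in $t$ translates into a path in $\FCNS(t)$ that goes down one left edge (to the first child) and then follows right edges (next siblings) before descending again; crucially every such step is taken in preorder \emph{before} we would back up, so ancestors still open before descendants. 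For the branching case, the sibling order of $w$'s children in $t$ becomes a right-going chain in $\FCNS(t)$, and the preorder of a node hanging off an earlier sibling precedes that of one hanging off a later sibling, because in preorder the left subtree (hence the earlier sibling and all its descendants) is fully explored before the right subtree.

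Concretely I would carry this out by a clean induction or by directly translating the recursive definition of the \ac{FCNS} encoding. Writing $\mathrm{fcns}$ for the map on forests, one has the identity
\begin{equation}
\mathrm{fcns}(t_1 t_2 \cdots t_m) = \bigl(\,\text{root of } t_1;\ \text{left} = \mathrm{fcns}(\text{children of } t_1),\ \text{right} = \mathrm{fcns}(t_2 \cdots t_m)\,\bigr),
\end{equation}
where $t_1,\dots,t_m$ is a forest of sibling subtrees. The preorder of the right-hand side lists the root of $t_1$, then (recursively) the preorder of $t_1$'s children-forest, then the preorder of $t_2 \cdots t_m$; this is exactly the order in which a depth-first traversal of the original forest emits opening tags, namely root of $t_1$, then all of $t_1$'s subtree, then $t_2 \cdots t_m$. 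An induction on the total number of nodes, with the forest version as the natural induction hypothesis, closes the argument immediately, since the recursive structure of $\mathrm{fcns}$ matches the recursive structure of the depth-first opening-tag sequence term for term.

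The step I expect to be the main obstacle is setting up the right induction invariant: the naive ``tree'' statement is not self-strengthening, because the right child in $\FCNS(t)$ encodes siblings rather than descendants, so the recursion unavoidably mixes a node's subtree with its sibling tail. The fix is to prove the forest-level statement (that $\mathrm{fcns}$ of a forest has the same opening-tag preorder as the depth-first traversal of that forest), which is closed under the two recursive calls appearing in the displayed identity. Once the invariant is phrased at the forest level, the remaining work is purely bookkeeping, and no careful case analysis on lowest common ancestors is actually needed.
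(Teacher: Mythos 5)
Your proof is correct. For comparison: the paper states this as a bare \emph{Fact} with no proof at all --- it is treated as immediate from the definition of the \ac{FCNS} encoding (both sequences list the nodes in the depth-first/preorder of $t$, since ``first child, then next sibling'' in $\FCNS(t)$ visits nodes in exactly the order a depth-first traversal of $t$ emits opening tags). So there is no paper argument to diverge from; your write-up supplies the formalization the paper leaves implicit. The decisive move is the one you identify yourself in the final paragraph: strengthening the induction invariant from trees to forests, so that both recursive calls in $\mathrm{fcns}(t_1 t_2 \cdots t_m)$ --- one on the children-forest of $t_1$, one on the sibling tail $t_2 \cdots t_m$ --- fall under the hypothesis, and the preorder of the encoded tree matches the opening-tag order of the forest term for term. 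That induction is complete on its own (both recursive calls are on strictly smaller forests, empty forest as base case), which makes your first two paragraphs of lowest-common-ancestor case analysis redundant, as you note; if you kept them, the one point worth stating carefully is that the \ac{FCNS} subtree of a node $u$ contains not only the $t$-descendants of $u$ (in its left subtree) but also $u$'s younger siblings and their descendants (in its right subtree), so ``ancestor in $t$ implies ancestor in $\FCNS(t)$'' holds via the left subtree specifically. Since the paper proves nothing here, your forest-level induction is strictly more than the paper offers, and it is the standard clean way to do it.
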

For a node $v$ of some tree $t$, let $\timee'({v})$ and $\timee'(\bar{v})$ be the respective positions of the opening and closing tags 
of $v$ in $\XML(\FCNS(t))$.
\begin{fact}
\label{fact_closing_tags_order}
Nodes $v_1, v_2 $ of $t$ satisfy $\timee'( \cl{v_1} ) <  \timee'( \cl{v_2})$ iff one of the following conditions holds:
\begin{enumerate}
\item $v_1$ is in the subtree of $v_2$ in $t$; \label{fact_closing_tags_order_1}
\item or $v_1$ is a right sibling of $v_2$ in $t$; \label{fact_closing_tags_order_2}
\item or there is a node $u$ with $\depth(u) \le \depth(v_1) - 2$ such that $\timee(v_1) < \timee(u) \le \timee(v_2)$. \label{fact_closing_tags_order_3}
\end{enumerate}
\end{fact}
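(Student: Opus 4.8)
The plan is to reduce the statement to a property of the binary tree $\FCNS(t)$ and then read it back into $t$. The starting observation is that the closing tags of $\XML(\FCNS(t))$ are emitted exactly in the post-order of $\FCNS(t)$, since in a depth-first traversal $\cl{v}$ is written only after the entire FCNS-subtree rooted at $v$ has been visited. Consequently $\timee'(\cl{v_1}) < \timee'(\cl{v_2})$ holds if and only if $v_1$ precedes $v_2$ in the post-order of $\FCNS(t)$, and I would establish this equivalence first so that the rest of the argument takes place inside $\FCNS(t)$.

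Next I would use the elementary description of post-order on a binary tree: for two distinct nodes, $v_1$ comes before $v_2$ precisely when either $v_1$ is a proper descendant of $v_2$, or, writing $z$ for their lowest common ancestor in $\FCNS(t)$, the node $v_1$ lies in the left subtree of $z$ while $v_2$ lies in the right subtree of $z$. The whole task then reduces to translating these two alternatives into $t$ by means of the defining dictionary of the encoding: the left FCNS-subtree of any node is exactly the set of its proper descendants in $t$, whereas the right FCNS-subtree of a node is exactly the set of its right siblings in $t$ together with their subtrees.

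Under this dictionary the first alternative, ``$v_1$ is an FCNS-descendant of $v_2$'', splits into the two cases recorded as conditions~1 and~2: either $v_1$ is a proper descendant of $v_2$ in $t$ (coming from the left subtree of $v_2$), or $v_1$ lies in the subtree of a right sibling of $v_2$ in $t$ (coming from the right subtree of $v_2$). The second alternative, separation at the FCNS-LCA $z$, is what condition~3 encodes positionally. Indeed, in that case $v_1$ is a proper descendant of $z$ while $v_2$ lies in the subtree of some right sibling $s$ of $z$; since $s$ has the same depth as $z$ it is strictly shallower than $v_1$, and since $s$ opens only after the whole subtree of $z$ (which contains $v_1$) has closed, the opening tag of $s$ falls strictly after $\timee(v_1)$ and not after $\timee(v_2)$. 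Thus $s$ is exactly a witness $u$ as required by condition~3. For the converse I would show that any node $u$ that opens after $v_1$ yet is strictly shallower than $v_1$ must lie in the subtree of a right sibling of some ancestor of $v_1$, which forces the left/right separation at the FCNS-LCA, and that the constraint $\timee(u)\le\timee(v_2)$ then places $v_2$ in the corresponding right subtree.

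I expect the crux to be this last equivalence, between the structural left/right split at the FCNS-LCA and the purely positional condition~3: one has to pin down the correct depth threshold and the half-open window $\timee(v_1)<\timee(u)\le\timee(v_2)$ so that a witness $u$ is guaranteed in every configuration (including the degenerate one where $z$ is the immediate parent of $v_1$) yet never appears when $v_2$ is the node that closes first. The cleanest route is to take $u$ to be the shallowest node whose opening tag lies strictly after $\timee(v_1)$ and to track where the depth-first traversal of $t$ sits at the moment it first leaves the subtree of $v_1$; here Fact~\ref{fact_opening_tags_order}, which guarantees that opening tags keep their $t$-order, is what lets me pass freely between positions in $\XML(t)$ and the traversal of $t$.
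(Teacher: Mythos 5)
You take the only plausible route---closing tags of $\XML(\FCNS(t))$ appear in the post-order of $\FCNS(t)$, combined with the dictionary ``left $\FCNS$-subtree of $v$ equals the proper $t$-descendants of $v$, right $\FCNS$-subtree equals the right siblings of $v$ together with their $t$-subtrees''---and since the paper states this fact without proof, correctness is the only question. The gap sits exactly at the step you yourself defer as ``the crux'': the translation of the two post-order alternatives into conditions 1--3 is asserted, not proved, and as asserted it does not go through. Your dictionary yields, for the $\FCNS$-descendant alternative, ``$v_1$ lies in the subtree of a right sibling of $v_2$'', which you silently identify with condition~2; but condition~2 covers only the case where $v_1$ \emph{is} a right sibling of $v_2$, a strictly smaller set. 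Moreover, your witness for condition~3 is a right sibling $s$ of the $\FCNS$-LCA $z$, for which your own inequality gives only $\depth(s)=\depth(z)\le\depth(v_1)-1$, whereas condition~3 demands $\depth(u)\le\depth(v_1)-2$; in the degenerate case you flag ($z$ the parent of $v_1$) one has $\depth(s)=\depth(v_1)-1$ and no witness at the stated threshold exists at all.

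This cannot be patched, because the statement as printed fails on the paper's own running example (Figure~\ref{fig:xml-example}). Take $v_1$ to be the first $a$-child of the first $b$ (so $\timee(v_1)=3$, $\depth(v_1)=2$ with the root at depth $0$) and $v_2$ the second child $b$ of the root ($\timee(v_2)=10$, $\depth(v_2)=1$). Reading off the displayed $\XML(\FCNS(t))$ gives $\timee'(\cl{v_1})=8<18=\timee'(\cl{v_2})$, yet conditions 1 and 2 fail, and condition 3 would require a node of depth at most $0$ opening in the window $(3,10]$---only the root qualifies in depth, and it opens at position $1$; the same failure occurs under any depth convention, since the comparison is relative. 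Likewise, taking $v_1$ to be the first $a$-child of the third $b$ and $v_2$ the first $b$ gives $\timee'(\cl{v_1})=14<19=\timee'(\cl{v_2})$ with all three conditions false (here $\timee(v_2)<\timee(v_1)$, so the window of condition 3 is empty), confirming that condition~2 must be broadened to ``$v_1$ is in the subtree of a right sibling of $v_2$''. What your argument actually supports---once you also write out the converse of the positional criterion, which you only sketch---is the corrected statement: $\timee'(\cl{v_1})<\timee'(\cl{v_2})$ iff $v_1$ is a $t$-descendant of $v_2$, or $v_1$ lies in the subtree of a right sibling of $v_2$, or there exists $u$ with $\depth(u)\le\depth(v_1)-1$ and $\timee(v_1)<\timee(u)\le\timee(v_2)$. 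Your text, however, claims the printed conditions while your inequalities deliver only these corrected ones, so as written the proof does not establish the stated fact.
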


\begin{fact}
\label{fact_opening_closing_tags}
Nodes $v_1, v_2$ of $t$ satisfy $\timee'( \cl{v_1} ) <  \timee'({v_2})$ iff 
there is a node $u$ with $\depth(u) \le \depth(v_1) - 2$ such that $\timee(v_1) < \timee(u) \le \timee(v_2)$.
\end{fact}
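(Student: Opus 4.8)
The plan is to read $\XML(\FCNS(t))$ as the depth-first traversal of the binary tree $\FCNS(t)$ and to reduce the claimed equivalence to a question about which nodes fall inside the \ac{FCNS} subtree rooted at $v_1$. First I would pin down that subtree directly from the definition of the encoding: the left \ac{FCNS}-subtree of $v_1$ is the subtree of its first child in $t$, and the right \ac{FCNS}-subtree of $v_1$ is generated by the next sibling of $v_1$. Unfolding the right spine, the \ac{FCNS} subtree rooted at $v_1$ is exactly $\{v_1\}$ together with all descendants of $v_1$ and all right siblings of $v_1$ together with their descendants. Equivalently, writing $p=\parent(v_1)$, it is the set of nodes $w$ in the subtree of $p$ with $\timee(w)\ge \timee(v_1)$, i.e.\ a preorder suffix of the subtree of $p$ (one checks this against Figure~\ref{fig:fcns-example}, where the \ac{FCNS} subtree of the first $a$ is $\{a_1,a_2,c_2\}$, precisely the tail of $b_1$'s subtree).

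Next I would exploit the nesting (interval) structure of a DFS. For the opening tag of $v_2$ we have $\timee'(v_2) > \timee'(\cl{v_1})$ precisely when $v_2$ is not in the \ac{FCNS} subtree of $v_1$ and $v_2$ opens after $v_1$, that is $\timee'(v_1) < \timee'(v_2)$. By Fact~\ref{fact_opening_tags_order} the opening order is preserved between $\XML(t)$ and $\XML(\FCNS(t))$, so $\timee'(v_1)<\timee'(v_2)$ iff $\timee(v_1)<\timee(v_2)$. Combining this with the description of the \ac{FCNS} subtree from the first step, $\timee'(\cl{v_1}) < \timee'(v_2)$ holds iff, in $t$, $v_2$ opens strictly after the whole subtree of $p=\parent(v_1)$ has been closed: any node opening after $\timee(v_1)$ either stays inside the subtree of $p$, hence inside the \ac{FCNS} subtree of $v_1$, or appears after $\cl{p}$. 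Thus the combinatorial core of the statement is the equivalence ``$\timee'(\cl{v_1}) < \timee'(v_2)$ iff $\timee(v_2) > \timee(\cl{p})$''.

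It then remains to re-express ``$v_2$ opens after the subtree of $p$ closes'' through the existence of the shallow witness $u$. The key observation is that, scanning $t$ in preorder starting just after $v_1$, every node that is still strictly inside the subtree of $p$ has depth at least $\depth(v_1)$, whereas the very first node reached once this subtree is left (a right sibling of $p$, or an uncle higher up) is shallow. I would therefore prove the two directions separately. For the forward direction, take $u$ to be that first escaping node, whose opening position is $\timee(\cl{p})+1$; it is shallow and satisfies $\timee(v_1) < \timee(u) \le \timee(v_2)$. For the converse, observe that a sufficiently shallow node cannot open while the subtree of $p$ is still open (it is neither $p$ nor a proper descendant of $p$), so the hypothesised $u$ must lie past $\cl{p}$, forcing $\timee(v_2) \ge \timee(u) > \timee(\cl{p})$.

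The delicate step, which I would verify most carefully, is the exact depth threshold and the associated boundary and degenerate cases. One has to confirm which ancestor level the escaping node $u$ sits at — and hence the precise constant in the bound $\depth(u)\le \depth(v_1)-2$ — handle the admissible case $u=v_2$, and treat the degenerate situations where $v_1$ is the \roott or lies on the rightmost branch so that no escaping node exists, in which case both sides are false. I would calibrate the depth bound and these edge cases against the running example of Figure~\ref{fig:fcns-example} before committing to the final constant, since this is exactly the point where an off-by-one in the depth convention is easiest to introduce.
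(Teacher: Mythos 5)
Your reduction is sound, and there is in fact nothing in the paper to compare it against: Fact~\ref{fact_opening_closing_tags} is stated without proof. Your three steps are the natural (and essentially only) route: identifying the \ac{FCNS} subtree of $v_1$ as the preorder suffix of the subtree of $p=\parent(v_1)$ in $t$, using Fact~\ref{fact_opening_tags_order} to translate $\timee'(\cl{v_1})<\timee'(v_2)$ into ``$v_2$ opens after $\cl{p}$ in $\XML(t)$'', and then witnessing this last condition by the first node opened after $\cl{p}$. Your handling of the degenerate cases ($v_1$ the root, or the subtree of $p$ extending to the end of the document) is also right: both sides of the equivalence are false there.

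However, the step you flagged as delicate is precisely where the statement as printed breaks, so your refusal to commit to the constant was well founded. The first node opened after $\cl{p}$ can be the next sibling of $p$ itself, which has depth $\depth(p)=\depth(v_1)-1$, not $\depth(v_1)-2$; hence the forward direction of the printed fact fails. Concretely, calibrating against Figure~\ref{fig:fcns-example} as you proposed: take $v_1$ the first $a$-child of the first $b$, and $v_2$ the second $b$-child of the root. In $\XML(\FCNS(t))$ the tag $\cl{a_L}$ immediately precedes the first $b_R$, so $\timee'(\cl{v_1})<\timee'(v_2)$; yet every node $u$ with $\timee(v_1)<\timee(u)\le\timee(v_2)$ has $\depth(u)\ge\depth(v_1)-1$ (the two other children of the first $b$ at depth $\depth(v_1)$, and $v_2$ itself at depth $\depth(v_1)-1$), so no witness with $\depth(u)\le\depth(v_1)-2$ exists. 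Your argument proves the corrected statement with threshold $\depth(u)\le\depth(v_1)-1$: backward, any node strictly shallower than $p$ that opens after $v_1$ must open after $\cl{p}$, since every node opened between $\timee(v_1)$ and $\timee(\cl{p})$ is a descendant of $p$ of depth at least $\depth(v_1)$; forward, the escaping node has depth at most $\depth(p)$ and position at most $\timee(v_2)$. So complete your plan exactly as written but report the constant as $-1$; note that the same off-by-one affects Point~3 of Fact~\ref{fact_closing_tags_order}, which your calibration step would have exposed as well.
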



\subsection{\ac{FCNS} encoding}
In this section, we are interested in computing the transformation $\XML(t) \rightarrow \XML(\FCNS(t))$. Our strategy is to compute the subsequence of opening tags of $\XML(\FCNS(t))$ (using Fact~\ref{fact_opening_tags_order} and discussed in subsection \ref{sec:seq_opening_tags}) and the subsequence of closing tags (using Fact~\ref{fact_closing_tags_order} and discussed in \ref{sec:seq_closing_tags}) of $\XML(\FCNS(t))$ independently, and merge them afterwards (using Fact~\ref{fact_opening_closing_tags} and discussed in subsection \ref{sec:seq_opening_closing_tags}).

\subsubsection{Computing the sequence of opening tags} \label{sec:seq_opening_tags}
Concerning the opening tags, since due to Fact~\ref{fact_opening_tags_order} the subsequences of opening tags in $\XML(t)$ and $\XML(\FCNS(t))$ coincide, we extract the subsequence of opening tag of $\XML(t)$, and we annotate them with left or right as they should be in $\XML(\FCNS(t))$. Remind that an opening tag is left if it is the opening tag of a first child, otherwise it is right. Furthermore, for later use we annotate each opening tag $c$ with $\depth(c)$ in $t$ and the position in the stream $\timee(c)$. We summarize this as a fact:

\begin{fact} \label{fact:seq_of_opening_tags}
There is a streaming algorithm with space $\Order(\log N)$ that, given $\XML(t)$ as input, outputs on an auxiliary stream the sequence of opening tags of $\XML(\FCNS(t))$ with left/right annotations, and furthermore, annotates each tag $c$ with $\depth(c)$ and $\pos(c)$, performing one pass on each stream.
\end{fact}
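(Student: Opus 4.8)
The plan is to design a single-pass algorithm that reads $\XML(t)$ from the input stream and writes the desired annotated opening-tag sequence onto an auxiliary stream. By Fact~\ref{fact_opening_tags_order} the opening tags of $\XML(\FCNS(t))$ occur in exactly the same order as in $\XML(t)$, so there is no reordering to perform: it suffices to emit each opening tag at the moment it is read, decorated with the three required pieces of information (left/right, depth, position). Since we write an opening tag only when it arrives and read the input left to right, this costs one pass on the input stream and one write pass on the output stream.

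Two of the three annotations are immediate and purely local. For the position, I would maintain a counter $p$ incremented on every tag read, so that the opening tag currently processed carries $\pos(c)=p$. For the depth, I would maintain a counter $d$ holding the number of currently open tags, incrementing it on every opening tag and decrementing on every closing tag; the depth of the node just opened is then the current value of $d$.

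The only genuinely new point is the left/right annotation, for which I would use the following local characterization. An opening tag is \emph{left} (the first child of its parent, or the root) if and only if the tag immediately preceding it in $\XML(t)$ is also an opening tag, and it is \emph{right} (a later sibling) if and only if its predecessor is a closing tag. Indeed, if a node $v$ is the first child of its parent $u$, then in the depth-first traversal the opening tag of $v$ directly follows the opening tag of $u$; otherwise the opening tag of $v$ is immediately preceded by the closing tag of its left sibling. Hence it suffices to remember a single bit recording whether the previously read tag was opening or closing; the root is handled as a boundary case by initializing this bit so that the first opening tag is annotated left by default.

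Putting this together, the algorithm stores only the two counters $d,p\in[0,2N]$ and one boolean, giving $\Order(\log N)$ space, and performs a constant number of operations per tag. I do not expect a serious obstacle here: the one point to verify carefully is the correctness of the predecessor-type characterization of first children (together with the root boundary case), after which everything reduces to bookkeeping.
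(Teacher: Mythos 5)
Your proposal is correct and matches the paper's approach: the paper states this fact after exactly the same sketch (use Fact~\ref{fact_opening_tags_order} to emit opening tags in input order, annotating depth, position, and left/right on the fly), leaving the details implicit. Your predecessor-type characterization of left versus right tags (left iff the preceding tag is an opening tag, with the root handled as a boundary case) is a valid and complete way to fill in the one detail the paper leaves unstated.
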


\subsubsection{Computing the sequence of closing tags}
\label{sec:seq_closing_tags}
For computing the sequence of closing tags, we start with the sequence of opening tags of $\XML(t)$ as produced by the output of the algorithm of Fact~\ref{fact:seq_of_opening_tags}, that is, correctly annotated with left/right and with depth and position annotations. To obtain the correct subsequence of closing tags as in $\XML(\FCNS(t))$, we interpret the opening tags as closing tags and we sort them with a merge sort algorithm. Merge sort can be implemented as a streaming algorithm with $\Order(\log(N))$ passes and 3 auxiliary streams~\cite{ghs09}. For the sake of simplicity, Algorithm~\ref{algo_merge_sort} assumes an input of length $2^l$ for some $l > 0$.

\begin{algorithm}[H]\small
\caption{Merge sort \label{algo_merge_sort}}
\begin{algorithmic}[1]
\REQUIRE unsorted data of length $2^l$ on stream 1
\FOR{$i=0\dots l-1$}
 \STATE copy data in blocks of length $2^i$ from stream 1 alternately onto stream 2 and stream 3
 \STATE \textbf{for} $j=1 \dots 2^{l-i-1}$ merge($2^i$) \textbf{end for}
\ENDFOR
\end{algorithmic}
\end{algorithm}

merge($b$) reads simultaneously the next $b$ values from stream 2 and stream 3, and merges them onto stream 1. The whole loop in Line~3 of Algorithm~\ref{algo_merge_sort} requires one read pass on stream 2, one read pass on stream 3, and one write pass on stream 1. See Figure~\ref{fig:merge_sort} for an illustration.

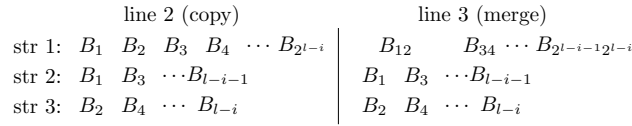
\begin{figure}[h]
\centerline{\scalebox{0.8}{
\begin{tikzpicture}[node distance = 0.5 cm]
\draw (4, 2.5) node {line $2$ (copy)};
\draw (6.6, 0.7) -- (6.6, 2.3);
\draw (9, 2.5) node {line $3$ (merge)};
\draw (1.6,2) node {str $1$:};
\draw (1.6,1.5) node {str $2$:};
\draw (1.6,1) node {str $3$:};
\draw (2.5, 2) node {$B_1$};\draw (3.2, 2) node {$B_2$};\draw (3.9, 2) node {$B_3$};\draw (4.6, 2) node {$B_4$};
\draw (5.3, 2) node {$\dots$};\draw (6, 2) node {$B_{2^{l-i}}$};
\draw (2.5, 1.5) node {$B_1$};
\draw (2.5, 1) node {$B_2$};
\draw (3.2, 1.5) node {$B_3$};
\draw (3.2, 1) node {$B_4$};
\draw (3.9, 1.5) node {$\dots$};
\draw (3.9, 1) node {$\dots$};
\draw (4.6, 1.5) node {$B_{l-i-1}$};
\draw (4.6, 1) node {$B_{l-i}$};
\draw (7.55, 2) node {$B_{12}$};\draw (8.95, 2) node {$B_{34}$};
\draw (9.6, 2) node {$\dots$};\draw (10.7, 2) node {$B_{2^{l-i-1} 2^{l-i}}$};
\draw (7.2, 1.5) node {$B_1$};
\draw (7.2, 1) node {$B_2$};
\draw (7.9, 1.5) node {$B_3$};
\draw (7.9, 1) node {$B_4$};
\draw (8.6, 1.5) node {$\dots$};
\draw (8.6, 1) node {$\dots$};
\draw (9.3, 1.5) node {$B_{l-i-1}$};
\draw (9.3, 1) node {$B_{l-i}$};
\end{tikzpicture} }}
\caption{In Line~2, blocks from stream 1 are copied onto stream 2 and stream 3. The $B_i$ are sorted blocks. In line $3$, all blocks $B_i$ and $B_{i+1}$ are merged into a sorted block $B_{i(i+1)}$.
\label{fig:merge_sort}}
\end{figure}

In order to use merge sort, we have to define a comparator function that, given two closing tags $\cl{c_1}, \cl{c_2}$, decides whether $\pos'(\cl{c_1}) < \pos'(\cl{c_2})$. Firstly, consider nodes $v_1, v_2$ with $\pos(v_1) < \pos(v_2)$ to be as in Point~1 or Point~2 of Fact~\ref{fact_closing_tags_order}, that is, either $v_1, v_2$ are siblings or one node is contained in the subtree of the other one. Evidently, their ordering with respect to $\pos'$ can easily be decided by their $\depth$: $\pos'(\cl{v_1}) < \pos'(\cl{v_2})$ iff $\depth(v_1) > \depth(v_2)$.

If neither $v_1, v_2$ are siblings, nor $v_2$ is in the subtree of $v_1$ (Point~3 of Fact~\ref{fact_closing_tags_order}), then $\pos'(\cl{v_1}) < \pos'(\cl{v_2})$, independently of their depths. A comparison function hence should be able to infer the relationship of the two nodes, however, this seems to be difficult in the streaming model.

To overcome this problem, instead of defining a comparison function, we design a complete merge function in Lemma~\ref{lemma:well-formed} that, by construction, only compares two nodes of the first kind. The key idea is to introduce \textit{separator} tags which we denote by a new tag outside $\Sigma$. They are initially inserted right after each closing tag of a last child $u$, that is exactly before the depth decreases. We denote by $\sep{u}$ the separator we introduce when seeing the last child $u$, and we define 
$\depth(\sep{u}) = \depth(u)$. 
\begin{fact}\label{fact:extract-opening}
There is a streaming algorithm with space $\Order(\log N)$ that, given a sequence $\XML(t)$ on a stream,
computes on another stream the sequence of opening tags $\XML(\FCNS(t))$ together with their separators, 
and annotated with $\depth$, $\timee$ and left/right, 
performing one pass on each stream.
\end{fact}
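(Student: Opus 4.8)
The plan is to augment the streaming algorithm of Fact~\ref{fact:seq_of_opening_tags}, which already produces the subsequence of opening tags of $\XML(\FCNS(t))$ with left/right, $\depth$ and $\pos$ annotations in space $\Order(\log N)$ and one pass on each stream, by interleaving the separator tags $\sep{u}$ at the correct positions. By Fact~\ref{fact_opening_tags_order} the opening tags already arrive in the desired order as we scan $\XML(t)$, so the only new work is to emit, immediately after the opening tag of each last child $u$, a separator $\sep{u}$ annotated with $\depth(\sep{u}) = \depth(u)$ together with the corresponding $\pos$ and left/right data. The single obstacle is purely local detection: \emph{while reading $\XML(t)$, how do we recognize that an opening tag is that of a last child?}

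First I would maintain, as in Fact~\ref{fact:seq_of_opening_tags}, the current depth $d$ (incremented on each opening tag, decremented on each closing tag), which already suffices to annotate each emitted opening tag with its depth and to decide left/right. A node $u$ is a last child precisely when, after the entire subtree rooted at $u$ has been read, the depth decreases below $\depth(u)$ without a new sibling opening tag appearing; equivalently, $u$ is a last child iff the tag immediately following $\cl{u}$ in $\XML(t)$ is a closing tag (the closing tag of the parent), rather than an opening tag (the opening tag of a next sibling). I would therefore detect last children by a lookahead of $1$ on the input stream at the moment the closing tag $\cl{u}$ is read: if the subsequent tag is a closing tag, then $u$ was a last child and we are exactly at the point "before the depth decreases," so we emit $\sep{u}$ onto the output stream. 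This uses only $\Order(1)$ additional memory and the $\depth$ value $\depth(u) = d+1$ (recording $d$ just before decrementing) gives the separator's depth annotation.

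The subtle point to get right is the placement of the separator in the output \emph{opening-tag} stream relative to the positions of opening tags, since we are processing closing tags of $\XML(t)$ but writing the opening-tag subsequence of $\XML(\FCNS(t))$. The separator $\sep{u}$ must be inserted into the output so that it sits between the opening tags that respectively precede and follow the "depth decrease after $u$," and since all opening tags have already been emitted in $\XML(t)$-order by the base algorithm, I would buffer nothing: the separator is simply written to the output stream at the moment $\cl{u}$ is processed, which by Fact~\ref{fact_opening_tags_order} falls after the last opening tag of $u$'s subtree and before the next opening tag in document order. The main thing to verify is that this ordering matches the intended semantics used later in Lemma~\ref{lemma:well-formed}, namely that separators occur exactly at the places "before the depth decreases"; this follows directly from the lookahead criterion and the definition $\depth(\sep{u}) = \depth(u)$. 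All operations remain local with $\Order(\log N)$ space for the depth and position counters, and each stream is read or written once, establishing the claimed bounds.
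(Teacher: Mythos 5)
Your proposal is correct and matches the paper's (implicitly sketched) construction: the paper states this fact without a detailed proof, but the surrounding text specifies exactly your scheme --- extend the annotating one-pass algorithm of Fact~\ref{fact:seq_of_opening_tags} and emit $\sep{u}$ with $\depth(\sep{u})=\depth(u)$ right after the closing tag of each last child, i.e., exactly before the depth decreases. Your lookahead-of-$1$ test for last children (next tag after $\cl{u}$ is a closing tag) is the same local detection device the paper itself uses elsewhere (e.g., the leaf check in Algorithm~\ref{algo_bin_tree_one_pass}), and it places each separator, including nested ones, between the correct opening tags in the output stream.
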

We have to define the way we integrate the separators into our sorting.
Let $v_1,v_2,\ldots,v_k$ be the ordered sequence of the children of some node. 
For the separator $\sep{v_k}$ we ask their position among the closing tags to satisfy for each node $v$:
\begin{equation}
\label{order_closing_tags}
\timee'(\cl{v}) < \timee'(\sep{v_k})  \quad \text{iff}\quad  \timee'(\cl{v}) \le \timee'(\cl{v_1});
\end{equation}
and for any other separator $\sep{w_k}$:
\begin{equation}
\label{order_closing_tags2}
\timee'(\sep{v_k}) < \timee'(\sep{w_k})  \quad \text{iff}\quad  \timee'(v_k) < \timee'(w_k) .
\end{equation}


Blocks appearing in merge sort fulfill a property that we call {\em well-sorted}. A block $B$ of closing tags is {\em well-sorted} if the corresponding tags in $\XML(\FCNS(t))$ appear in the same order, and for all $\cl{v_1}, \cl{v_2} \in B$ with $\pos(v_1) < \pos(v_2)$, all closing tags $\cl{v}$ of nodes $v$ with $\pos(v_1) < \pos(v) < \pos(v_2)$ are in $B$ as well.

In addition, for two blocks $B_1,B_2$ of closing tags, we say that $(B_1,B_2)$ is a {\em well-sorted adjacent pair},
if $B_1$ and $B_2$ are well-sorted, for each closing tag $\cl{v_1} \in B_1$ and each closing tag $\cl{v_2} \in B_2$ 
$\timee(v_1) < \timee(v_2)$ is satisfied, and furthermore, all closing tags $\cl{v}$ of nodes $v$ with $\pos(v_1) < \pos(v) < \pos(v_2)$ are either in $B_1$ or $B_2$.

The only function to design is a comparator deciding for two closing tags $\cl{v_1},\cl{v_2}$ from
a well-sorted adjacent pair $(B_1,B_2)$ whether $\pos'(\cl{v_1}) < \pos'(\cl{v_2})$.

The following lemma shows that we can merge a well-sorted adjacent pair correctly.

\begin{lemma}\label{lemma:well-formed}
Let $(B_1,B_2)$ be a well-sorted adjacent pair, and let $v_1 = B_1[p_1]$ and $v_2 = B_2[p_2]$ for some $p_1, p_2$. 
Assume that $\pos'(v) < \pos'(v_1)$ and $\pos'(v) < \pos'(v_2) $, for all $v \in B_1[1, p_1-1] \cup B_2[1, p_2-1]$. 
Then:
\begin{enumerate}
\item If $v_1$ is a separator, or there is a separator in $B_1$ after $v_1$, then $\timee'(v_1)<\timee'(v_2)$;
\item Else if $v_2$ is a separator then:
\begin{enumerate}
\item if $\depth(v_1) < \depth(v_2)$ then $\timee'(v_1)<\timee'(v_2)$,
\item else $\timee'(v_1) > \timee'(v_2)$;
\end{enumerate}

\item Else (neither $v_1$ nor $v_2$ is a separator):
\begin{enumerate}
\item if $\depth(v_1)\leq \depth(v_2)$ then $\timee'(v_1) < \timee'(v_2)$,
\item else $\timee'(v_1)>\timee'(v_2)$.
\end{enumerate}
\end{enumerate}
\end{lemma}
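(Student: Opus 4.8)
The plan is to verify each of the three cases of Lemma~\ref{lemma:well-formed} directly from the characterization of the closing-tag order given in Fact~\ref{fact_closing_tags_order} together with the separator conventions~\eqref{order_closing_tags} and~\eqref{order_closing_tags2}. The crucial structural input is the hypothesis that $(B_1, B_2)$ is a well-sorted adjacent pair: this guarantees that all closing tags of nodes $v$ with $\pos(v_1) < \pos(v) < \pos(v_2)$ lie in $B_1 \cup B_2$, and the assumption $\pos'(v) < \pos'(v_1), \pos'(v_2)$ for all earlier block elements tells us that everything before $v_1$ in $B_1$ and before $v_2$ in $B_2$ has already been output. The point of this setup is to rule out the problematic Point~3 of Fact~\ref{fact_closing_tags_order} (the ``unrelated nodes'' case mediated by a low-depth node $u$) for the particular pair $(v_1, v_2)$ under comparison, so that their $\pos'$-order is decided purely by depth as in Points~1 and~2. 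First I would record this reduction as the main invariant.

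For Case~3 (neither tag is a separator), I would argue that if $v_1$ and $v_2$ are either siblings or one lies in the subtree of the other, then $\timee'(\cl{v_1}) < \timee'(\cl{v_2})$ iff $\depth(v_1) > \depth(v_2)$, exactly as noted in the text preceding the lemma; the tie case $\depth(v_1) = \depth(v_2)$ must then be handled by showing they are siblings with $v_1$ the left sibling, forcing $\timee'(\cl{v_1}) < \timee'(\cl{v_2})$, consistent with the stated ``$\depth(v_1) \le \depth(v_2) \Rightarrow \timee'(v_1) < \timee'(v_2)$''. The heart of the case is to prove that Point~3 cannot apply here: suppose for contradiction there were a node $u$ with $\depth(u) \le \depth(v_1) - 2$ and $\timee(v_1) < \timee(u) \le \timee(v_2)$. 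Such a $u$ would have $\pos(v_1) < \pos(u) < \pos(v_2)$ (using $\pos = \timee$), so by the adjacent-pair property its closing tag $\cl{u}$ sits in $B_1 \cup B_2$; I would then derive a contradiction with the minimality hypothesis on already-output elements, since $\cl{u}$ being so shallow forces $\pos'(\cl{u})$ to precede both $\pos'(v_1)$ and $\pos'(v_2)$, yet $u$ appears after $v_1$ in document order and hence cannot already have been emitted from the prefix $B_1[1,p_1-1] \cup B_2[1,p_2-1]$. This contradiction step is what I expect to be the main obstacle, as it requires carefully combining the depth gap of Point~3 with the well-sorted adjacency to locate $\cl{u}$ and then exploiting the prefix-emitted hypothesis.

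For Case~1 (either $v_1$ is a separator or a separator follows $v_1$ in $B_1$), the conclusion $\timee'(v_1) < \timee'(v_2)$ should follow from~\eqref{order_closing_tags} and~\eqref{order_closing_tags2} together with the fact that $v_1 \in B_1$ and $v_2 \in B_2$ satisfy $\timee(v_1) < \timee(v_2)$ by the adjacent-pair definition: a separator at or after $v_1$ within $B_1$ forces all of $B_2$ to be ordered later among the closing tags, because the separator marks the boundary just before a depth decrease in the original document and~\eqref{order_closing_tags2} makes separators inherit the document order of their last-child nodes. I would make this precise by translating the separator placement rule~\eqref{order_closing_tags} into a statement that $\sep{v_k}$ is positioned immediately after $\cl{v_1}$ (the closing tag of the first child) in $\XML(\FCNS(t))$, so anything in $B_2$, coming later in document order, lands after the separator and hence after $v_1$.

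For Case~2 ($v_1$ is not a separator and not followed by one in $B_1$, but $v_2$ is a separator), I would compare $v_1$ against the separator $v_2 = \sep{w_k}$ using~\eqref{order_closing_tags}: the defining property says $\timee'(\cl{v}) < \timee'(\sep{w_k})$ iff $\timee'(\cl{v}) \le \timee'(\cl{w_1})$, i.e.\ iff $v$ closes no later than the first child $w_1$ of the relevant node, which by the depth-to-order correspondence of Points~1--2 of Fact~\ref{fact_closing_tags_order} reduces to a depth comparison $\depth(v_1) < \depth(v_2)$ (recalling $\depth(\sep{w_k}) = \depth(w_k)$). This yields subcase~(a) when $\depth(v_1) < \depth(v_2)$ and subcase~(b) otherwise, mirroring Case~3 but shifted by the strictness coming from the separator's half-open placement convention. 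Throughout, the routine verifications are the depth-order translations; the single genuinely delicate point remains the exclusion of Point~3 in the non-separator comparison, which I would isolate as a preliminary claim and then reuse in both Case~2 and Case~3.
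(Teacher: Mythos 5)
Your overall plan --- reduce everything to Fact~\ref{fact_closing_tags_order} by showing that its third condition cannot relate the two heads $v_1,v_2$, so that depth alone decides --- is exactly the paper's strategy: the paper isolates the same statement as a standalone ``observation'' and reuses it for Cases~2 and~3. But your proof of that exclusion claim is broken, and it breaks at the one place where the separators are indispensable. You suppose a witness $u$ with $\depth(u)\le\depth(v_1)-2$ and $\timee(v_1)<\timee(u)\le\timee(v_2)$, locate $\cl{u}$ in $B_1\cup B_2$ (correct, by adjacency), and then assert that ``$\cl{u}$ being so shallow forces $\pos'(\cl{u})$ to precede both $\pos'(v_1)$ and $\pos'(v_2)$.'' This is backwards: in the FCNS encoding, shallow nodes that open later close \emph{late}, not early. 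Indeed, applying condition~3 of Fact~\ref{fact_closing_tags_order} to the pair $(v_1,u)$ with $u$ itself as the witness gives $\timee'(\cl{v_1})<\timee'(\cl{u})$; concretely, in Figure~\ref{fig:fcns-example} the depth-$2$ node $a$ under the first $b$ closes at position $8$ of $\XML(\FCNS(t))$, while the shallower second child $b$ of the root closes at position $18$. So $\cl{u}$ may legitimately sit after $v_1$ in $B_1$, or at or after $v_2$ in $B_2$, and no contradiction with the prefix hypothesis can be extracted from $\cl{u}$ at all. The paper's proof escapes this by tracking a different object: the separator of $v_1$'s sibling group. The existence of $u$ forces that sibling group to be entirely closed before $u$ opens, so the separator's document position lies strictly between $\timee(v_1)$ and $\timee(v_2)$, putting it in $B_1\cup B_2$ by adjacency; and, unlike $\cl{u}$, its position among the closing tags is pinned by Equation~\eqref{order_closing_tags} to lie just after the first sibling's closing tag, hence strictly after $\cl{v_1}$. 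If it were in $B_1$ it would sit after $v_1$ and the premise of Point~1 would hold (excluded in Cases~2--3); so it lies in the already-emitted prefix $B_2[1,p_2-1]$, contradicting the hypothesis $\pos'(v)<\pos'(v_1)$ there. The separators are not bookkeeping to be quoted only in Cases~1--2, as your write-up treats them: they are the only tags whose position in the closing-tag order is bounded on both sides, and the exclusion does not go through without them.

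Two further gaps. In Case~1 you assert that ``anything in $B_2$, coming later in document order, lands after the separator,'' but document order does not determine FCNS closing order --- that is the entire difficulty the merge function is designed around; the paper proves this step by producing a witness for condition~3 of Fact~\ref{fact_closing_tags_order}: the path from the first sibling $u'$ to $B_2[1]$ passes through a common ancestor that is not the parent of $u'$, precisely because the separator certifies that the sibling group is fully closed before anything in $B_2$ opens. And in Case~3 your tie-breaking step --- equal depths force $v_1,v_2$ to be siblings with $v_1$ on the left, ``forcing $\timee'(\cl{v_1})<\timee'(\cl{v_2})$'' --- is inconsistent with condition~2 of Fact~\ref{fact_closing_tags_order}, by which the \emph{right} sibling closes first, so a left sibling $v_1$ would close after $v_2$; this subcase has to be handled via the structural consequence the paper draws from its observation ($\depth(v_2)\ge\depth(v_1)-1$, and $v_2$ in the subtree of $v_1$ when strictly deeper), not by an ad hoc sibling argument.
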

\begin{proof}
Let $(B_1,B_2)$ be a well-sorted adjacent pair. Let \mbox{$l = \max \{ i: B_1[i] \mbox{ is a separator} \}$}. If there are no separators in $B_1$, let $l = 0$.
First, we prove Point~1. Since $B_1$ is well-ordered, we only need to check that $\timee'(B_1\left[ l \right]) < \timee'(B_2\left[ 1 \right])$. Denote by $u$ the last child that was responsible for the insertion of the separator tag $B_1[l]$. Let $u'$ be the left-most sibling of $u$.
Due to Equation~\eqref{order_closing_tags} it suffices to show that $\timee'(\cl{u'}) < \timee'(B_2[1])$. 
Clearly, the shortest path from $u'$ to $B_2[1]$ passes by a common ancestor $p$ of $u'$ and $B_2[1]$ which is not the parent of $u'$ since the separator $B_1[l]$ indicates that the last child $u$ has been seen. Then, by the third condition of Fact~\ref{fact_closing_tags_order}, we get $\timee'(\cl{u'}) < \timee'(B_2[1])$.

For proving Points~2~and~3 we use the observation that if the premises to Point~1 are not fulfilled, $v_1, v_2$ do not have a common ancestor $p$ s.t. $\timee(v_1) < \timee(p) < \timee(v_2)$ and $p$ is not the parent node of $v_1$. Furthermore, this observation implies that $\depth(v_2) \ge \depth(v_1) - 1$ and hence, if $\depth(v_2) > \depth(v_1)$ then $v_2$ is in the subtree of $v_1$. This and Fact~\ref{fact_closing_tags_order} prove 
Points~2a, 2b, 3a and 3b.

We prove the observation by contradiction. Assume that there is such a node $p$. 
Since $(B_1, B_2)$ is a well-ordered adjacent pair and $\timee(v_1) < \timee(p) < \timee(v_2)$, node $p$ would be in $B_1 \cup B_2$.
Therefore, the separator $\sep{u}$ inserted after the rightmost sibling of $v_1$ would be also in $B_1 \cup B_2$ as well. 
More precisely, this separator would be in $B_2[1 \dots p_2-1]$ since otherwise Point~1 would have been applied. This, however, is a contradiction to the assumption that $\pos'(v) < \pos'(v_1) \, \forall v \in B_1[1 \dots p_1 - 1] \cup B_2[1 \dots p_2 - 1]$ since it holds that $\pos'(v_1) < \pos'(\sep{u})$. Hence such a node does not exist.
\end{proof}

\begin{lemma}
\label{lem:closing_tags}
There is a $\Order(\log N)$-pass streaming algorithm with space $\Order(\log N)$ and $3$ auxiliary streams that computes the subsequence of closing tags of the \ac{FCNS} encoding of any XML document given in the input stream.
\end{lemma}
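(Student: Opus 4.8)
The plan is to combine the merge-sort framework of Algorithm~\ref{algo_merge_sort} with the comparator implicitly defined by Lemma~\ref{lemma:well-formed}. First I would fix the input: by Fact~\ref{fact:extract-opening} we can produce, in one pass and $\Order(\log N)$ space, the sequence of opening tags of $\XML(\FCNS(t))$ annotated with $\depth$, $\timee$, and left/right, together with the separator tags $\sep{u}$ inserted after each last child. Reinterpreting every opening tag $v$ as the closing tag $\cl{v}$ it corresponds to, this sequence becomes exactly the multiset of closing tags (plus separators) that we must reorder into the ordering $\timee'$ of $\XML(\FCNS(t))$. The whole task is therefore a sorting problem, and the point of the preparation is that we never need an absolute key: merge sort only ever compares elements drawn from a \emph{well-sorted adjacent pair}, which is precisely the hypothesis under which Lemma~\ref{lemma:well-formed} gives a correct, locally computable comparison.

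Next I would run Algorithm~\ref{algo_merge_sort} on this stream, using the comparator extracted from Lemma~\ref{lemma:well-formed} inside the \textbf{merge} subroutine. The comparator inspects only the two current front elements $v_1\in B_1$ and $v_2\in B_2$ together with the local information ``is $v_1$ a separator / is there a separator remaining in $B_1$ after $v_1$'', ``is $v_2$ a separator'', and the two depths $\depth(v_1),\depth(v_2)$; all of this is carried in the $\Order(\log N)$-bit annotations attached to each tag, so each comparison uses $\Order(\log N)$ space and constant time. The central correctness step is to verify the loop invariant that merge sort maintains: at every call to \textbf{merge}, the two blocks $B_1,B_2$ being combined form a well-sorted adjacent pair, and at the moment we compare their fronts the premise of Lemma~\ref{lemma:well-formed} holds, namely every already-emitted element precedes both $v_1$ and $v_2$ in $\timee'$. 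Given this invariant, Lemma~\ref{lemma:well-formed} certifies that the comparator always picks the element with smaller $\timee'$, so each merged block is again well-sorted and each merged adjacent pair is again well-sorted-adjacent, closing the induction; after $l=\Order(\log N)$ rounds the single remaining block is the entire closing-tag sequence in the order $\timee'$.

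I would establish the invariant by induction on the merge-sort rounds. The base case is that each length-one block is trivially well-sorted, and that consecutive singletons form a well-sorted adjacent pair because Fact~\ref{fact:extract-opening} hands us the tags in $\timee$-order with exactly the separators dictated by Equations~\eqref{order_closing_tags} and~\eqref{order_closing_tags2}, so the contiguity conditions in the definition of well-sorted adjacent pair are met. For the inductive step I would check that merging a well-sorted adjacent pair yields a well-sorted block (order and the $\timee$-contiguity of intermediate closing tags are preserved because we interleave two blocks that already jointly cover a $\timee$-interval), and that two such freshly merged blocks are again a well-sorted adjacent pair. Crucially, the separators are never dropped during sorting, so the premise-restoring argument used in the proof of Lemma~\ref{lemma:well-formed}—that a would-be common ancestor $p$ would force a separator into $B_2[1\ldots p_2-1]$—remains available at every round.

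The main obstacle is bookkeeping the separators correctly through the rounds and confirming that the premise of Lemma~\ref{lemma:well-formed} is genuinely re-established before each front comparison rather than only at the start of a block. Concretely, one must argue that the separators are interleaved into the right positions (so that the ``separator remaining after $v_1$'' test in Point~1 is meaningful) and that they are eventually placed consistently with Equations~\eqref{order_closing_tags} and~\eqref{order_closing_tags2} so they can be stripped out at the very end to leave the pure closing-tag subsequence demanded by the statement. For the resource bounds there is nothing delicate: merge sort uses $3$ auxiliary streams and $\Order(\log N)$ passes by the implementation of~\cite{ghs09}, the per-element annotations and the comparator each fit in $\Order(\log N)$ space, and the final separator-removal is one extra pass, giving an $\Order(\log N)$-pass, $\Order(\log N)$-space algorithm with $3$ auxiliary streams as claimed.
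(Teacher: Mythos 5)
Your overall route is the same as the paper's: extract the annotated opening-tag sequence with separators via Fact~\ref{fact:extract-opening}, reinterpret opening tags as closing tags, and run the merge sort of Algorithm~\ref{algo_merge_sort} with a merge step driven by Lemma~\ref{lemma:well-formed}, maintaining the invariant that merged blocks stay well-sorted and adjacent pairs stay well-sorted-adjacent. That matches the paper's proof structure exactly.

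However, there is one concrete gap, and it sits precisely at the point you flag as ``the main obstacle'' but then wave away: you claim the comparator's inputs, including the predicate ``there is a separator in $B_1$ after $v_1$'' from Point~1 of Lemma~\ref{lemma:well-formed}, are ``all carried in the $\Order(\log N)$-bit annotations attached to each tag.'' They are not. Whether a separator occurs \emph{later} in the current block is a lookahead property of the remainder of $B_1$, not a local annotation of $v_1$, and a sequential read pass cannot evaluate it at comparison time. The paper resolves this with a small but essential device: during the copy phase of each round, it writes in front of each even block $B_{2k}$ on stream~3 the number of separators contained in the odd block $B_{2k-1}$ on stream~2. The merge then first copies the prefix $B_{2k-1}[1,l]$ (through the last separator, identified by counting down) verbatim onto stream~1 --- justified wholesale by Point~1 --- and only afterwards merges the separator-free remainder $B_{2k-1}[l+1,2^i]$ with $B_{2k}$ using Points~2 and~3, which \emph{are} locally computable from the depth annotations. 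Without this precomputed separator count (or an equivalent mechanism), your merge step is not implementable in a single forward pass, so as written the algorithm in your proposal does not run in the stated model; adding the count-forwarding trick repairs it and recovers the paper's proof.
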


\begin{proof}
Using Fact~\ref{fact:extract-opening}, we compute on the first auxiliary stream the sequence of opening tags with corresponding annotations, together with separators, and interpret opening tags as closing tags. 

We show that we can do a merge sort algorithm with a merge function inspired by Lemma~\ref{lemma:well-formed} on the first three auxiliary streams with $\Order(\log N)$ space and passes.
For that assume that the first stream contains a sequence $(B_1,B_2,\ldots,B_M)$ of blocks of size $2^i$. For simplicity we assume that $M$ is even, otherwise we add an empty block.
We alternately copy odd blocks on the second stream, and even blocks on the third stream.
For a block $B_{2i}$ that we write on the third stream, we write before each of them, the number of separators that occur in the block $B_{2i-1}$ that was copied on the second stream.

Then we merge sequentially all pairs of blocks $(B_{2k-1},B_{2k})$ for $1\leq k\leq M/2$ using Lemma~\ref{lemma:well-formed}. Note that
$(B_{2k-1},B_{2k})_i$ are all well-formed pairs. Let \mbox{$l = \max \{ i: B_{2k-1}[i] \mbox{ is a separator} \}$}. Firstly, we copy elements $B_{2k-1}[1, l]$ onto auxiliary stream 1. Knowing the number of separators in $B_{2k-1}$ allows us to perform this operation. The correctness of this step follows from Point~1 of Lemma~\ref{lemma:well-formed}. Then, we merge blocks $B_{2k-1}[l+1, 2^i]$ and $B_{2k}$ by using the comparison function defined in Points~2 and 3 of Lemma~\ref{lemma:well-formed}.
\end{proof}

\subsubsection{Merging opening and closing tags}
\label{sec:seq_opening_closing_tags}
Merging the subsequence of opening tags of $\XML(\FCNS(t))$ and the subsequence of closing tags of $\XML(\FCNS(t))$ can be done using one additional pass.
\begin{lemma}\label{lemma:fcns-unify}
There is a streaming algorithm with space $\Order(\log n)$ that, given the sequence of opening tags of $\XML(\FCNS(t))$ on a stream, 
and the sequence of closing tags of $\XML(\FCNS(t))$ on another stream, computes $\XML(\FCNS(t))$ on a third stream
using one pass on each stream.
\end{lemma}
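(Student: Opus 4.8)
The plan is to merge the two sorted subsequences by a single simultaneous pass, using Fact~\ref{fact_opening_closing_tags} as the comparator between a pending opening tag and a pending closing tag. Concretely, I would read the opening-tag stream and the closing-tag stream in parallel, maintaining one lookahead symbol from each (call them $v_2$ for the current opening tag and $\cl{v_1}$ for the current closing tag), and at each step decide which of the two to emit next onto the output stream, advancing only the stream from which I emitted. By Fact~\ref{fact_opening_closing_tags}, $\timee'(\cl{v_1}) < \timee'(v_2)$ holds exactly when there is a node $u$ with $\depth(u)\le\depth(v_1)-2$ and $\timee(v_1)<\timee(u)\le\timee(v_2)$; so the whole merge reduces to evaluating this predicate from the annotations we already carry.

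The key point that makes this predicate cheap to evaluate in a streaming fashion is that each tag is annotated with its $\depth$ and its original position $\timee$ in $\XML(t)$ (guaranteed by Fact~\ref{fact:seq_of_opening_tags} and preserved through the closing-tag computation of Lemma~\ref{lem:closing_tags}). First I would observe that the opening tags are emitted in the order of $\XML(t)$, so as I scan the opening-tag stream I see positions $\timee$ in increasing order and can track, with $\Order(\log n)$ space, the minimum depth of any opening tag seen so far. This running minimum is precisely the quantity needed: the existence of a node $u$ with small enough depth strictly after $\timee(v_1)$ and at or before $\timee(v_2)$ can be tested by comparing $\depth(v_1)-2$ against the smallest depth among opening tags in the window $(\timee(v_1),\timee(v_2)]$. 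Since we only advance one lookahead at a time and positions increase monotonically along each stream, I can maintain this window-minimum incrementally rather than recomputing it.

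The main steps, in order, are as follows. First, set up the two lookaheads and the auxiliary register holding the relevant depth-minimum statistic. Second, at each iteration apply the Fact~\ref{fact_opening_closing_tags} test to the pair $(\cl{v_1}, v_2)$: if the closing tag should come first, emit $\cl{v_1}$ and advance the closing-tag stream, updating the statistic; otherwise emit $v_2$, advance the opening-tag stream, and fold $\depth(v_2)$ into the running minimum. Third, handle the tails once one stream is exhausted, which is immediate since all remaining tags of the other stream follow. Finally, strip the separator tags (they are not part of $\XML(\FCNS(t))$) and drop the $\depth/\timee$ annotations on output. Correctness follows because the two input sequences are each already in the correct relative $\timee'$ order by Facts~\ref{fact_opening_tags_order} and~\ref{fact_closing_tags_order}, so it suffices to interleave them correctly, which is exactly what the comparator guarantees.

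The step I expect to be the main obstacle is arguing that the Fact~\ref{fact_opening_closing_tags} predicate can genuinely be decided online with only $\Order(\log n)$ bits, rather than needing to remember the depth profile of the whole prefix. The subtlety is that the ``witness'' node $u$ must lie strictly after $\timee(v_1)$, yet we process opening and closing tags at different rates, so I must be careful that the depth-minimum I maintain is taken over exactly the window $(\timee(v_1), \timee(v_2)]$ and not over stale positions; this requires checking that, given the monotone advancement of both streams and the invariant that all previously emitted tags have smaller $\timee'$, the window of interest only ever grows on its right end and is consulted with a left endpoint that also only advances. Verifying this invariant, together with confirming that one pass on each stream suffices, is the crux; the remaining arithmetic and the tail-handling are routine.
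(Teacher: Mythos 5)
Your merge skeleton --- interleave the two streams by comparing their current heads, using Fact~\ref{fact_opening_closing_tags} as the comparator --- is exactly the paper's approach (the paper's proof is a one-liner saying precisely this). However, your implementation of the comparator contains a genuine gap. You propose to evaluate the predicate by incrementally maintaining a depth-minimum over the window $(\timee(v_1),\timee(v_2)]$, and your correctness argument hinges on the claim that this window ``is consulted with a left endpoint that also only advances.'' That claim is false. The closing tags arrive in the order of $\timee'$, and by Fact~\ref{fact_closing_tags_order} this order repeatedly jumps \emph{backward} with respect to the original positions $\timee$: for the tree of Figure~\ref{fig:fcns-example}, the successive closing-stream heads have original opening positions $7,5,3,15,13,18,12,10,2,1$, so the left endpoint recedes (e.g., $7\to 5\to 3$, and again $18\to 12\to 10\to 2\to 1$). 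A single incrementally-updated register cannot answer range-minimum queries whose left endpoint moves backward: once you have folded openings into the running minimum while serving the query with left endpoint $7$, you can no longer isolate the contribution of openings in $(3,7]$ needed for the next query; and the unwindowed ``minimum depth seen so far'' is useless, since it quickly equals the root's depth and would make the predicate fire on essentially every comparison.

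The gap is repairable, and the repair explains why no window statistic is needed at all. Argue by induction that all tags emitted so far are in correct $\timee'$ order, and observe that at any decision point a witness $u$ with $\timee(u)<\timee(v_2)$ cannot exist: such a $u$ would already have been consumed from the opening stream, yet Fact~\ref{fact_opening_closing_tags} applied to the pair $(\cl{v_1},u)$ gives $\timee'(\cl{v_1})<\timee'(u)$, so a correct merge would have emitted $\cl{v_1}$ before emitting $u$ --- contradicting the inductive hypothesis, since $\cl{v_1}$ is still unemitted. Hence the only possible witness is $v_2$ itself, and the comparator degenerates to a comparison of the depth annotations of the two heads, which requires only the two lookahead tags and trivially fits in $\Order(\log n)$ space. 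Without this invariant (or some other substitute for your false monotonicity claim), your argument does not go through as written.
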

\begin{proof}
We directly apply Fact~\ref{fact_opening_closing_tags}, so that we know when to alternate from the sequence of opening tags to the one of closing tags, and conversely.
\end{proof}

From Fact~\ref{fact:seq_of_opening_tags}, Lemma~\ref{lem:closing_tags} and Lemma~\ref{lemma:fcns-unify} we obtain Theorem~\ref{theorem:encoding}.

\begin{theorem}
\label{theorem:encoding}
There is a $\Order(\log N)$-pass streaming algorithm with space $\Order(\log N)$ and $3$ auxiliary streams and $\Order(1)$ processing time per letter that computes
on the third auxiliary stream the \ac{FCNS} transformation of any XML document given in the input stream.
\end{theorem}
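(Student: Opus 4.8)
\textbf{Proof proposal for Theorem~\ref{theorem:encoding}.}

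The plan is to assemble the FCNS encoding by the three-phase strategy announced in the section overview: compute the subsequence of opening tags of $\XML(\FCNS(t))$, compute the subsequence of closing tags independently, and then interleave the two subsequences into the final output. Each of these three phases has already been isolated as a self-contained result in the excerpt, so the main task of the proof is to chain them together on the auxiliary streams and to account carefully for the passes, the space, and the number of auxiliary streams so that the stated bounds are met.

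Concretely, I would first invoke Fact~\ref{fact:seq_of_opening_tags} (or its refinement Fact~\ref{fact:extract-opening}) to produce, in one pass on the input stream and one pass on an auxiliary stream, the correctly left/right-annotated sequence of opening tags together with their $\depth$ and $\pos$ annotations and the separator tags. Next I would apply Lemma~\ref{lem:closing_tags}, which gives an $\Order(\log N)$-pass, $\Order(\log N)$-space merge-sort computation using $3$ auxiliary streams that turns this annotated opening-tag sequence into the subsequence of closing tags of $\XML(\FCNS(t))$ in the correct order. Finally I would apply Lemma~\ref{lemma:fcns-unify} to merge the opening-tag subsequence with the closing-tag subsequence into $\XML(\FCNS(t))$ on the third auxiliary stream, using one further pass on each stream by appealing to Fact~\ref{fact_opening_closing_tags} to decide at each step whether the next tag of the output is an opening or a closing tag.

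For the resource accounting, the dominant phase is the merge sort of Lemma~\ref{lem:closing_tags}, which already costs $\Order(\log N)$ passes, $\Order(\log N)$ space, and $3$ auxiliary streams; the opening-tag extraction and the final merging each add only $\Order(1)$ passes and $\Order(\log N)$ space, so the totals remain $\Order(\log N)$ passes, $\Order(\log N)$ space, and $3$ auxiliary streams. Each tag carries only a constant number of fields from $\Sigma \cup [0,2N]$, and by the standing assumption that operations on such numbers run in constant time, every per-letter comparison, annotation, and stack-free bookkeeping step takes $\Order(1)$ time, giving the claimed $\Order(1)$ processing time per letter. I would also note that the separators introduced for the merge are purely auxiliary and can be discarded in the final merging pass so that the output on the third stream is exactly $\XML(\FCNS(t))$.

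The step I expect to be the main obstacle is reusing the same three auxiliary streams across the three phases without exceeding the budget: the merge sort of Lemma~\ref{lem:closing_tags} consumes all three streams, so I must be careful that the output of the opening-tag extraction and the input/output layout required by the final merging of Lemma~\ref{lemma:fcns-unify} can be arranged on these same streams (e.g.\ by writing the sorted closing-tag sequence and the retained opening-tag sequence onto distinct streams and directing the merged result onto the third) rather than requiring a fourth. The rest of the argument is essentially bookkeeping, since correctness of each individual phase is guaranteed by the cited facts and lemmas.
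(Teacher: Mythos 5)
Your proposal matches the paper's own proof almost verbatim: it composes Fact~\ref{fact:seq_of_opening_tags} (opening tags onto one auxiliary stream), Lemma~\ref{lem:closing_tags} (closing tags via the separator-based merge sort on the three auxiliary streams), and Lemma~\ref{lemma:fcns-unify} (final interleaving onto the third stream), with the same resource accounting dominated by the merge sort. Your extra attention to reusing the three streams across phases, and to discarding the separators in the final pass, only makes explicit bookkeeping that the paper leaves implicit.
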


\begin{proof}
Firstly, we compute according to Lemma~\ref{lem:closing_tags} the sequence of closing tags and we store them on auxiliary stream $1$. Then, by Fact~\ref{fact:seq_of_opening_tags} we extract the sequence of opening tags, and we store them on auxiliary stream $2$. By Lemma~\ref{lemma:fcns-unify} we can merge the tags of auxiliary stream 1 and auxiliary stream 2 correctly onto stream $3$.

The space requirements of these operations do not exceed $\Order(\log N)$. The processing time per letter of these operations is constant. 
\end{proof}

\subsection{Checking Validity on the encoding form}
\label{sec:validity_of_encoded_form}
In this section, we reuse the algorithms for validating binary trees for the validation of the encoded form. We discuss one-pass read/write streaming algorithms (one for left-to-right passes, and one for right-to-left passes) that read $\XML(\FCNS^{\bot}(t))$ and output an XML document with annotations on closing tags that can be fed into Algorithm~\ref{algo_bin_tree_one_pass} or Algorithm~\ref{algo_bin_tree_two_passes}. This requires little modifications in Algorithm~\ref{algo_bin_tree_one_pass} and Algorithm~\ref{algo_bin_tree_two_passes} since validity then depends on the annotations. Since we want to reuse the algorithms for the validation of binary trees, we suppose that $\XML(\FCNS^{\bot}(t))$ is available as input. The algorithm stated in Theorem~\ref{theorem:encoding} can be easily adapted such that it outputs $\XML(\FCNS^{\bot}(t))$ instead of $\XML(\FCNS(t))$.

The problem of validating the encoded form and the problem of validating binary trees are similar. Note that the children $v_1, \dots v_k$ of a node $v$ form a substring in $\XML(\FCNS^{\bot}(t))$, see Figure~\ref{figure:evaluating-fcns-form}. Hence, for validating a node $v$, the label $v$ has to be related to the \textit{block} of children nodes $\cl{v_k}\dots \cl{v_1}$. This is similar to the task of the validation of binary trees where the parent label $v$ has to be related to the block of children nodes $\cl{v_1}v_2$.

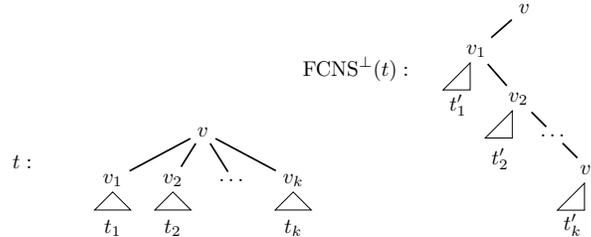
\begin{figure}[h]
\centerline{\scalebox{0.8}{
\begin{tikzpicture}[node distance = 0.5 cm]
\tikzstyle{VertexStyle} = [shape = rectangle, color = white, fill = white, text = black, draw]
\draw (1, 3.5) node {$t:$};
\Vertex[x=4, y=4, L=$v$]{v} 
\Vertex[x=2.5, y=3.2,L=$v_1$]{v1}
\Vertex[x=3.5, y=3.2, L=$v_2$]{v2}
\Vertex[x=4.5, y=3.2, L=$\dots$]{n}
\Vertex[x=5.5, y=3.2, L=$v_k$]{vk}
\Edge(v)(v1)
\Edge(v)(v2)
\Edge(v)(n)
\Edge(v)(vk)
\draw (2.5, 3) -- (2.2, 2.7) -- (2.8, 2.7) -- cycle;
\draw (3.5, 3) -- (3.2, 2.7) -- (3.8, 2.7) -- cycle;
\draw (5.5, 3) -- (5.2, 2.7) -- (5.8, 2.7) -- cycle;
\draw (2.5,2.4) node {$t_1$};
\draw (3.5,2.4) node {$t_2$};
\draw (5.5,2.4) node {$t_k$};
\end{tikzpicture}
\hspace{-0.5cm}
\begin{tikzpicture}[node distance = 0.5 cm]
\tikzstyle{VertexStyle} = [shape = rectangle, color = white, fill = white, text = black, draw]
\draw (1.2, 3) node {$\FCNS^{\bot}(t):$};
\Vertex[x=4, y=4, L=$v$]{v} 
\Vertex[x=3.2, y=3.3,L=$v_1$]{v1}
\Vertex[x=3.9, y=2.5, L=$v_2$]{v2}
\Vertex[x=4.5, y=1.9, L=$\dots$]{n}
\Vertex[x=5.1, y=1.3, L=$v_k$]{vk}
\Edge(v)(v1)
\Edge(v1)(v2)
\Edge(v2)(n)
\Edge(n)(vk)
\draw (3.1, 3.1) -- (2.65, 2.65) -- (3.1, 2.65) -- cycle;
\draw (3.8, 2.3) -- (3.35, 1.85) -- (3.8, 1.85) -- cycle;
\draw (5.0, 1.1) -- (4.55, 0.65) -- (5.0, 0.65) -- cycle;
\draw (2.9,2.4) node {$t_1'$};
\draw (3.6,1.5) node {$t_2'$};
\draw (4.8,0.4) node {$t_k'$};
\end{tikzpicture}
}}
\caption{A tree $t$ and its $\FCNS^{\bot}$ encoding. While the opening and closing tags of the children of a node $v$ are separated by the subtrees $t_1, \dots t_k$ in $\XML(t)$, the closing tags of the children of $v$ are consecutive in $\XML(\FCNS^{\bot}(t))$ in reverse order, that is $\cl{v_{k}}\cl{v_{k-1}}\dots\cl{v_{2}}\cl{v_{1}}$ is a substring of $\XML(\FCNS^{\bot}(t))$. \label{figure:evaluating-fcns-form}}
\end{figure}

For a node $v$, we gather the information of the children nodes $v_1, \dots, v_k$ by the help of finite automata $A_1$ (for left-to-right passes) and $A_2$ (for right-to-left passes) that we define later, and the information - a state of the automata - is annotated at the closing tag of leaf $v_1$ (left-to-right) or $v_k$ (right-to-left). Then, by the help of Algorithm~\ref{algo_bin_tree_one_pass} or Algorithm~\ref{algo_bin_tree_two_passes}, this information is related to the parent label $v$. 

We define automata $A_1$ and $A_2$ now. $A_1$ is constructed from automaton $A$. Let $\mbox{$A = (\Sigma, Q, q_0, \delta, F)$}$ be a deterministic finite automaton where $\Sigma$ is its input alphabet, $Q$ is the state set, $q_0$ is its initial state, $\delta: Q \times \Sigma \rightarrow Q$ is the transition function, and $F$ is a set of final states. For $a \in \Sigma$ and the input DTD $D$, denote by $A_a$ a deterministic finite automaton that accepts the regular expression $D(a)$. We compose the $A_a$ as in the left illustration of Figure~\ref{figure:automaton} to an automaton $A$ that accepts words $\omega'$ such that $\omega' = a \omega$, $a \in \Sigma, \omega \in \Sigma^*$ if $\omega \in D(a)$. 

\begin{figure}[h!]
\includegraphics[scale=0.4, bb=90 500 330 680]{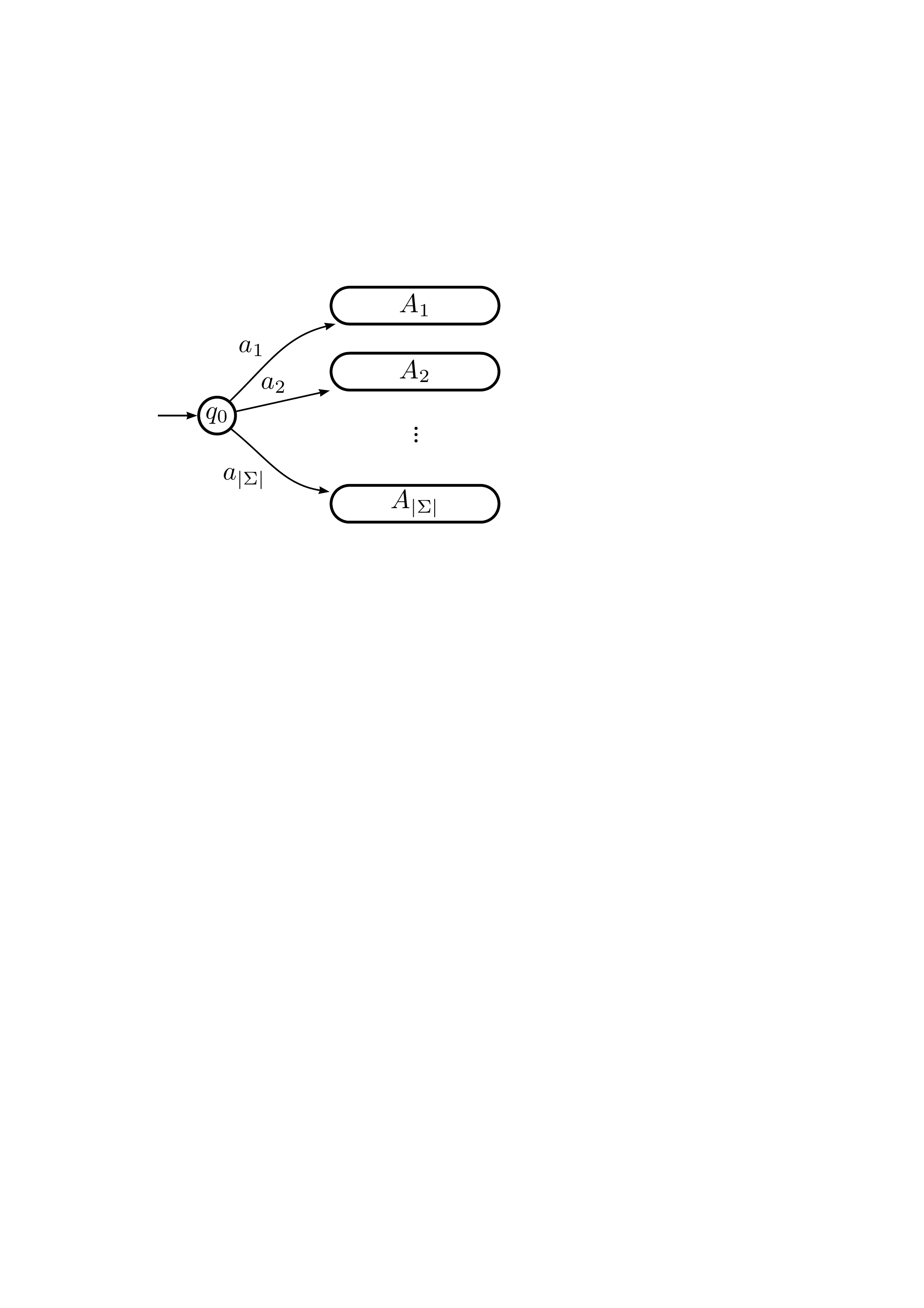} \includegraphics[scale=0.4, bb=100 500 400 680]{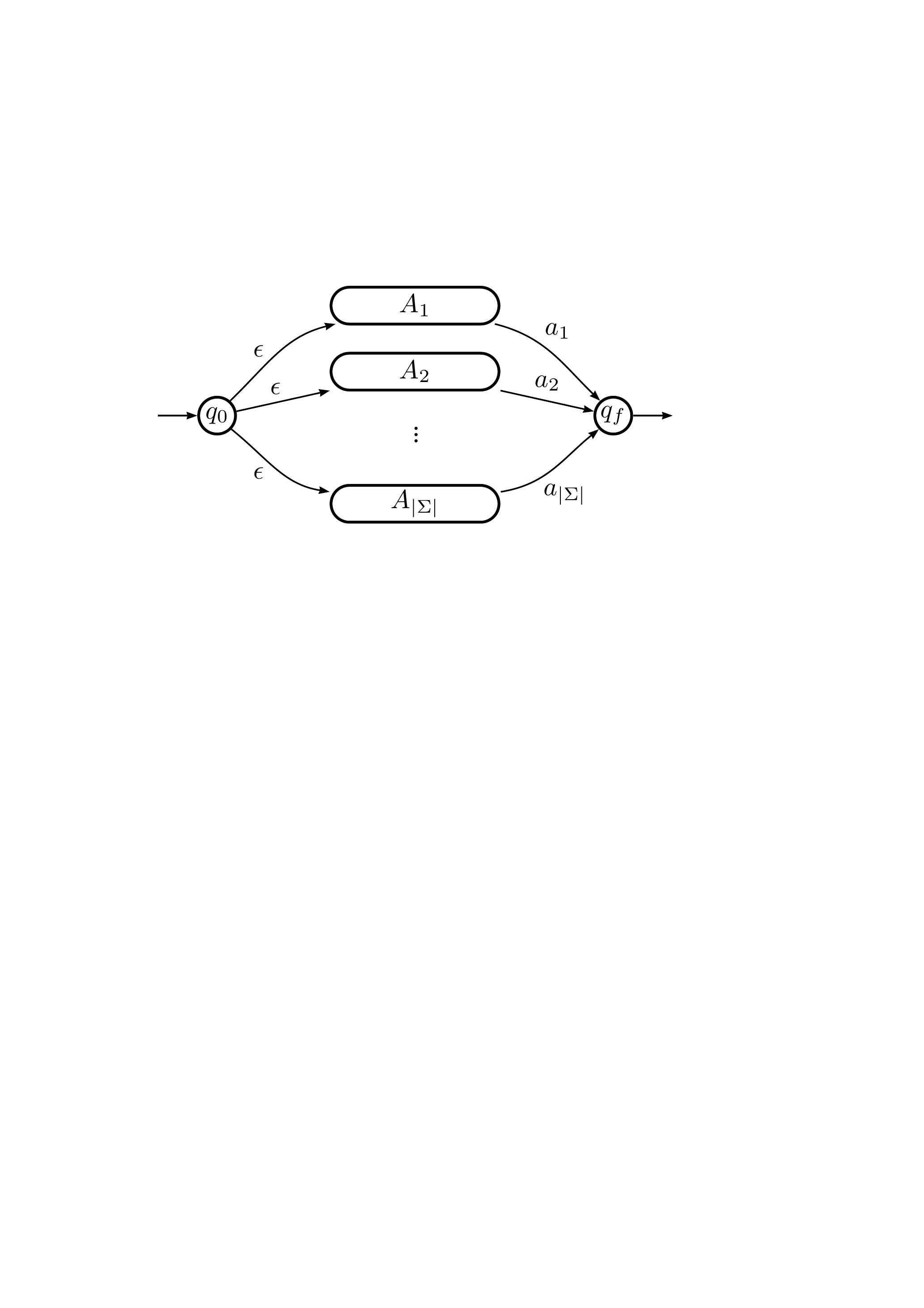}
\caption{Left: Automaton $A$. $A_1$ accepts words $\omega$ if $A$ accepts $\omega^{\rev}$. Right: Automaton $A_2$ is a version of the illustrated automaton without $\epsilon$ transitions. \label{figure:automaton}}
\end{figure}

Let $\mbox{$A_1 = (\Sigma, Q_1, (q_0)_1, \delta_1, F_1)$}$ be a deterministic finite automaton that accepts a word $\omega$, iff $\omega^{\rev}$ is accepted by $A$, where $\omega^{\rev}$ denotes $\omega$ read from right to left. 

Let $\mbox{$A_2 = (\Sigma, Q_2, (q_0)_2, \delta_2, F_2)$}$ be a deterministic finite automaton that accepts a word $\omega'$ such that $\omega' = \omega a$, $a \in \Sigma, \omega \in \Sigma^*$ if $\omega \in D(a)$. $A_2$ is a version of the automaton in the right illustration of Figure~\ref{figure:automaton} without $\epsilon$ transitions.

In the following, we assume for some state $q_1 \in Q_1$ and $q_2 \in Q_2$ that $\delta_1(q_1, \bot) = q_1$ and $\delta_2(q_2, \bot) = q_2$.

Given $\XML(\FCNS^{\bot}(t))$, for a left-to-right pass, we annotate closing tags $\cl{v}$ by a state from the state set $Q_1$ of automaton $A_1$. We denote the annotation for left-to-right passes of $\cl{v}$ by $\ann_1(\cl{v})$. 
\vspace{0.2cm} \\
\noindent \textbf{if} $v$ is a leaf then $\ann_1(\cl{v}) = \delta_1((q_0)_1, v)$, \\
\noindent \textbf{otherwise} let $u$ be the right child of $v$, then $\ann_1(\cl{v}) = \delta_1(\ann_1(\cl{u}), v)$. 
\vspace{0.2cm}

For a right-to-left pass, we annotate closing tags $\cl{v}$ by a state form the state set $Q_2$ of automaton $A_2$. We denote the annotation of right-to-left passes of $\cl{v}$ by $\ann_2(\cl{v})$. \vspace{0.2cm} \\
\noindent \textbf{if} $v$ is a left child then $\ann_2(\cl{v}) = \delta_2((q_0)_2, v)$, \\
\noindent \textbf{if} $v$ is a right child of $u$ then $\ann_2(\cl{v}) = \delta_2(\ann_2(\cl{u}), v)$ .
\vspace{0.2cm}

For the sake of completeness, the root can be annotated by $\ann_2(\cl{r}) = (q_0)_2$, though this annotation will not be used for checking validity. Figure~\ref{fig:r1} shows the annotations of the children nodes $v_1, \dots, v_k$ of node $v$.

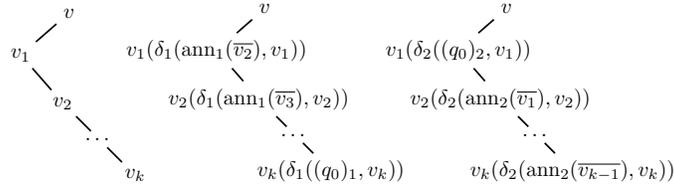
\begin{figure}[h]
\centerline{\scalebox{0.8}{
\begin{tikzpicture}[node distance = 0.5 cm]
\tikzstyle{VertexStyle} = [shape = rectangle, color = white, fill = white, text = black, draw]
\Vertex[x=4, y=4, L=$v$]{v} 
\Vertex[x=3.2, y=3.3,L=$v_1$]{v1}
\Vertex[x=3.9, y=2.5, L=$v_2$]{v2}
\Vertex[x=4.5, y=1.9, L=$\dots$]{n}
\Vertex[x=5.1, y=1.3, L=$v_k$]{vk}
\Edge(v)(v1)
\Edge(v1)(v2)
\Edge(v2)(n)
\Edge(n)(vk)
\end{tikzpicture} \hspace{-0.8cm}
\begin{tikzpicture}[node distance = 0.5 cm]
\tikzstyle{VertexStyle} = [shape = rectangle, color = white, fill = white, text = black, draw]
\Vertex[x=4, y=4, L=$v$]{v} 
\Vertex[x=3.2, y=3.3,L=$\mbox{$v_1(\delta_1(\ann_1(\cl{v_2}), v_1))$}$]{v1}
\Vertex[x=3.9, y=2.5, L=$\mbox{$v_2(\delta_1(\ann_1(\cl{v_3}), v_2))$}$]{v2}
\Vertex[x=4.5, y=1.9, L=$\dots$]{n}
\Vertex[x=5.1, y=1.3, L=$\mbox{$v_k(\delta_1((q_0)_1, v_k))$}$]{vk}
\Edge(v)(v1)
\Edge(v1)(v2)
\Edge(v2)(n)
\Edge(n)(vk)
\end{tikzpicture} \hspace{-0.8cm}
\begin{tikzpicture}[node distance = 0.5 cm]
\tikzstyle{VertexStyle} = [shape = rectangle, color = white, fill = white, text = black, draw]
\Vertex[x=4, y=4, L=$v$]{v} 
\Vertex[x=3.2, y=3.3,L=$\mbox{$v_1(\delta_2((q_0)_2, v_1))$}$]{v1}
\Vertex[x=3.9, y=2.5, L=$\mbox{$v_2(\delta_2(\ann_2(\cl{v_1}), v_2))$}$]{v2}
\Vertex[x=4.5, y=1.9, L=$\dots$]{n}
\Vertex[x=5.1, y=1.3, L=$\mbox{$v_k(\delta_2(\ann_2(\cl{v_{k-1}}), v_k))$}$]{vk}
\Edge(v)(v1)
\Edge(v1)(v2)
\Edge(v2)(n)
\Edge(n)(vk)
\end{tikzpicture}
}}
\caption{Left: a node $v$ with its children $v_1, \dots v_k$ in the FCNS tree. Middle: annotations for left-to-right passes. $v$ is valid if $\delta_1(\ann_1(\cl{v_1}), v)$ results in an accepting state of $A_1$. Right: annotations for right-to-left passes. $v$ is valid if $\delta_2(\ann_2(\cl{v_k}), v)$ is an accepting state of $A_2$. \label{fig:r1}}
\end{figure}

The annotation operations can be seen as streaming algorithms performing one read pass over the input and one write pass over another stream using constant space, since the annotation of a closing tag $\cl{v}$ only depends on the annotation of its right child (for left-to-right passes) or its parent (for right-to-left passes). The respective closing tag is in in both cases the tag prior to $v$ in the input stream $\XML(\FCNS^{\bot}(t))$. Remember that we consider a right-to-left pass for the annotation with $\ann_2$.




In the following, we prove that given the annotations, Algorithm~\ref{algo_bin_tree_one_pass} and Algorithm~\ref{algo_bin_tree_two_passes} can be adapted to decide validity of the encoded form.

\begin{theorem}
\label{thm:rabin-one-pass}
There is a one-pass deterministic algorithm for $\pb$ with space $\Order(\sqrt{N \log N})$ and $\Order(1)$ processing time per letter when the input is given in its $\FCNS^{\bot}$ encoding.
\end{theorem}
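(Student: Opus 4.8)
The plan is to adapt Algorithm~\ref{algo_bin_tree_one_pass} so that it runs directly on $\XML(\FCNS^{\bot}(t))$, which is a binary tree in the sense of Section~\ref{section:binary-trees}: every original node becomes an internal node with a left and a right child, and the inserted $\bot$ nodes are the leaves. The single pass does two things at once. First, it maintains on the fly the annotations $\ann_1(\cl{v})$: keeping the state of the most recently closed tag in one register suffices, since $\ann_1(\cl{v})$ is obtained from the annotation of $v$'s right child by one transition of $A_1$, and that right child's closing tag is exactly the tag preceding $\cl{v}$ in the stream (for a $\bot$-leaf we reset to $\delta_1((q_0)_1,v)$). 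Second, it uses the same stack machinery as Algorithm~\ref{algo_bin_tree_one_pass}, with the sole modification that the check performed at a node uses the annotation rather than the two child labels.

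The key observation that makes this work is that, in the FCNS encoding, validity of an original node $p$ with children $v_1,\dots,v_k$ is decided entirely by $p$'s label together with $\ann_1(\cl{v_1})$, where $v_1$ is the left child of $p$ in the FCNS tree: by construction of $A_1$ and the annotation rule, $\ann_1(\cl{v_1})$ is the state of $A_1$ after reading $v_k\cdots v_1$, so $p$ is valid iff $\delta_1(\ann_1(\cl{v_1}),p)\in F_1$ (compare Figure~\ref{fig:r1}). Hence, unlike the genuine binary case, the right child of $p$ plays no role in $p$'s own check; it only serves as the structural trigger for a top-down verification, exactly as the right opening tag $c$ does in line~\ref{line:top_down_verification}. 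Concretely, I would store at each left closing tag on the stack its annotation, and replace check$(a,b,c)$ by a predicate testing whether $\delta_1(\ann_1(\cl{b}),a)$ is accepting; the top-down case fires when the stack shows $(a,-1),(\cl{b},d)$ and the right-child opening tag arrives, and the bottom-up case fires when $\cl{a}$ arrives while the annotated left closing tag $\cl{b}$ is still on the stack, which it is until its parent is verified by Item~\ref{item:fact_stack_4} of Fact~\ref{fact:stack}. The $\bot$-leaves carry no DTD obligation and are simply skipped by the leaf check.

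Correctness then decomposes cleanly. Every original node is an internal node of $\FCNS^{\bot}(t)$, so the unchanged stack discipline of Algorithm~\ref{algo_bin_tree_one_pass} guarantees, via Fact~\ref{fact:stack} and Lemma~\ref{lemma:stack}, that each such node is verified exactly once, either top-down or bottom-up; and each verification applies the correct predicate because of the annotation identity above. For the resource bounds nothing changes asymptotically: an annotated left closing tag still fits in $\Order(\log N)$ space, since $A_1$ has a constant number of states and the depth costs $\Order(\log N)$, so with $K=\sqrt{N\log N}$ the stack uses $\Order(\sqrt{N\log N})$ space; maintaining the running annotation and performing the modified check are single automaton transitions, keeping $\Order(1)$ processing time per letter.

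I expect the main obstacle to be one of bookkeeping rather than of ideas: verifying that the left child's annotation is available precisely when Algorithm~\ref{algo_bin_tree_one_pass} would have performed its check, that is, that replacing the two-child check by the annotation-based predicate never requires information the original stack invariants do not already preserve. Once one observes that the annotation collapses the entire child sequence into the single state carried by the left closing tag, this reduces to re-reading the invariants of Fact~\ref{fact:stack} with ``left child label'' replaced by ``left child annotation''.
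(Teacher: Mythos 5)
Your proposal is correct and takes essentially the same route as the paper's proof: the paper likewise appends a rule for $\bot$, builds $A_1$, computes the annotations $\ann_1$ in a streaming fashion (exploiting that $\ann_1(\cl{v})$ depends only on the tag immediately preceding $\cl{v}$), and feeds the annotated stream into Algorithm~\ref{algo_bin_tree_one_pass} with the check function replaced by testing whether $\delta_1(\ann_1(\cl{v_1}), v)$ is accepting, with space and per-letter time inherited from Algorithm~\ref{algo_bin_tree_one_pass}. The only cosmetic deviations are that the paper handles $\bot$-leaves by appending $D(\bot)=\epsilon$ rather than skipping them, and that your register reset to $\delta_1((q_0)_1, v)$ must occur at \emph{every} leaf of $\FCNS^{\bot}(t)$ (including original childless nodes, whose closing tag is also preceded by an opening tag), not only at $\bot$-leaves --- exactly as in the paper's definition of $\ann_1$.
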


\begin{proof}
Append the rule $D(\bot) = \epsilon$ to the input DTD $D$. Compute automaton $A_1$. Compute a new XML stream with annotations $\ann_1$ and feed this stream directly into Algorithm~\ref{algo_bin_tree_one_pass}. In order to verify a node $v$, Algorithm~\ref{algo_bin_tree_one_pass} uses the closing tag $\cl{v_1}$ of the left child $v_1$ of $v$. Since $v$ is valid if $\delta_1(\ann_1(\cl{v_1}), v)$ is an accepting state, we only have to replace the check function used in Algorithm~\ref{algo_bin_tree_one_pass}. The new check function computes $\delta_1(\ann_1(\cl{v_1}), v)$ and aborts if the resulting state is not accepting.

The space requirements and the processing time per letter inherit from Algorithm~\ref{algo_bin_tree_one_pass}.
\end{proof}

\begin{theorem}
\label{thm:rabin-two-pass}
There is a bidirectional two-pass deterministic algorithm for $\pb$ with space $\Order(\log^2 N)$ and $\Order(\log N)$ processing time per letter when the input is given in its FCNS encoding.
\end{theorem}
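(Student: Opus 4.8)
The plan is to reduce $\pb$ on $\XML(\FCNS^{\bot}(t))$ to the binary-tree validation carried out by Algorithm~\ref{algo_bin_tree_two_passes}, in direct analogy with the reduction to Algorithm~\ref{algo_bin_tree_one_pass} used for Theorem~\ref{thm:rabin-one-pass}. The difference is that a bidirectional algorithm verifies each node either in its forward or in its backward pass, so I must supply the annotation appropriate to each direction. First I would append $D(\bot)=\eps$ to $D$ and precompute the two fixed automata $A_1$ and $A_2$. Since the DTD (hence $A_1$, $A_2$) has constant size, this costs no space in $N$, and a node $v$ with children $v_1,\dots,v_k$ in $t$ is valid against $D(v)$ iff $\delta_1(\ann_1(\cl{v_1}),v)\in F_1$, and equivalently iff $\delta_2(\ann_2(\cl{v_k}),v)\in F_2$.

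The forward pass runs Algorithm~\ref{algo_bin_tree_subroutine} on $\XML(\FCNS^{\bot}(t))$, treated as a binary tree in which the left child of $v$ is its first child $v_1$ and the right child is its next sibling. I would compute $\ann_1$ on the fly within the same pass: by the remark preceding this theorem, $\ann_1(\cl{v})$ depends only on the annotation of the closing tag immediately preceding $\cl{v}$, so a single left-to-right scan carrying one automaton state suffices, and the state can be attached to $\cl{v_1}$ exactly when the subroutine pushes $(\cl{v_1},\cdot,\cdot,\cdot)$ onto its stack. The subroutine's bottom-up test check$(v,a,b)$ is then replaced by ``$\delta_1(\ann_1(\cl{v_1}),v)\in F_1$'', discarding the right-child field $b$, which is irrelevant to the validity of $v$. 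The backward pass is symmetric: reading the stream from right to left with opening and closing tags interchanged turns $\XML(\FCNS^{\bot}(t))$ into the XML sequence of the mirror tree, and here I would carry $\ann_2$ (computed by a right-to-left scan, since $\ann_2(\cl{v})$ depends on the annotation of its parent) and replace the check by ``$\delta_2(\ann_2(\cl{v_k}),v)\in F_2$''.

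For correctness I would invoke Lemma~\ref{lemma:verification_subtree} twice on the binary tree $\FCNS^{\bot}(t)$: the forward pass verifies every node whose (FCNS) left subtree is at least as large as its right subtree, and the backward pass verifies the complementary nodes, so every node of $t$ is checked in exactly one of the two passes. The nontrivial point, and the part I expect to be the main obstacle, is to argue that the carrier of the annotation coincides with the child tag that the subroutine actually inspects in its bottom-up check. In the forward pass $\cl{v_1}$ is literally the closing tag of the left child of $v$, so the alignment is immediate; but in the backward (mirrored) pass one must verify, using Fact~\ref{fact_closing_tags_order} and Fact~\ref{fact_stack_order_two_passes}, that the reversal places $\cl{v_k}$ and the label $v$ so that the subroutine relates them at the correct moment and at matching depths. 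Once this alignment is established, substituting the automaton test for check is purely local.

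Finally, the resource bounds are inherited. The subroutine of Algorithm~\ref{algo_bin_tree_two_passes} uses $\Order(\log^2 N)$ space and $\Order(\log N)$ time per letter by Theorem~\ref{theorem:binary-two-pass}; each stack entry is enlarged by a single automaton state of $\Order(1)$ size, and the on-the-fly annotation maintains one extra state, so the space remains $\Order(\log^2 N)$ and the per-letter time remains $\Order(\log N)$. Since each annotation computation is pipelined into the corresponding directional pass, the whole algorithm still performs only two bidirectional passes, yielding the claimed bounds.
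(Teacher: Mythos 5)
Your overall route is the paper's: append $D(\bot)=\eps$, build $A_1$ and $A_2$, run Algorithm~\ref{algo_bin_tree_subroutine} once in each direction on the annotated stream, invoke Lemma~\ref{lemma:verification_subtree} on the FCNS tree and its mirror, and inherit the bounds from Theorem~\ref{theorem:binary-two-pass}. The forward pass and the resource accounting match the paper essentially verbatim (and your test $\delta_2(\ann_2(\cl{v_k}),v)\in F_2$ is indeed the right one). But the step you explicitly defer --- how $\ann_2(\cl{v_k})$ reaches the check in the backward pass --- is exactly the substantive content of the paper's proof, and it is not a matter of ``verifying alignment'' via Fact~\ref{fact_closing_tags_order} and Fact~\ref{fact_stack_order_two_passes}. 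The difficulty is concrete: on the reversed stream, the subroutine pushes a stack item only when a closing tag is adjacent to an opening tag, and for a node $v$ with children $v_1,\dots,v_k$ this item is $(\cl{w}, v_1, \cdot, \cdot)$, where $w$ is the FCNS sibling of $v_1$. The tag the stack carries is $v_1$, whose annotation is $\ann_2(\cl{v_1})$ --- not $\ann_2(\cl{v_k})$, which sits on the tag $v_k$, a tag the unmodified subroutine never inspects. So substituting the automaton test for check is \emph{not} purely local once the alignment is ``established''; the subroutine itself must be modified to transport the state from $v_k$ to the stack item.

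The paper's fix exploits the block structure of the encoding: since $\cl{v_k}\cl{v_{k-1}}\cdots\cl{v_1}$ is contiguous in $\XML(\FCNS^{\bot}(t))$, on the reversed stream the opening tags $v_1,v_2,\dots,v_k$ arrive consecutively immediately after the push, with no intervening closing tag. Moreover $v_k$ can be recognized online: being a last child it has no FCNS right child, so in the reversed stream it is either a leaf or immediately followed by a tag labeled $\bot$. At that moment the item for $v$ is still on top of the stack (nothing has been pushed since), so the algorithm grafts $\ann_2(\cl{v_k})$ onto it; when $v$'s closing tag (in the reversed reading) later pops the item, the adapted check applies $\delta_2$ to this stored state and the parent label and aborts on a non-accepting state. (If the cleaning loop suppresses the item before $v_k$ arrives, the node is simply handled by the other pass, as Lemma~\ref{lemma:verification_subtree} guarantees.) Without this identification-and-annotation mechanism, the backward pass has no access to the state summarizing the full child word $v_1\cdots v_k$, so your proof as written does not go through.
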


\begin{proof}
Append the rule $D(\bot) = \epsilon$ to the input DTD $D$. Compute automata $A_1$ and $A_2$. Firstly, in a left-to-right pass, as in the proof of Theorem~\ref{thm:rabin-one-pass} we compute a new XML stream with annotations $\ann_1$ and feed this stream directly into Algorithm~\ref{algo_bin_tree_subroutine}. We adapt the check function as above.

Concerning the right-to-left pass, we compute the annotations $\ann_2$, and feed this stream directly into Algorithm~\ref{algo_bin_tree_subroutine} interpreting opening tags as closing tags, and vice versa. Note that the annotations $\ann_2$ are hence annotated on opening tags. Let $v$ be a node with children $v_1, \dots, v_k$. We have to show how Algorithm~\ref{algo_bin_tree_subroutine} can be adapted to relate the annotation of the opening tag $v_k$ to $v$. When Algorithm~\ref{algo_bin_tree_subroutine} reads the closing tag $\cl{w}$ and the opening tag $v_1$, it pushes $(\cl{w}, v_1, \cdot, \cdot)$ on the stack, where $w$ is the sibling of $v_1$ in the $\FCNS^{\bot}$ encoding. Note that the subsequent tags on the stream are $v_2, v_3, \dots v_k$. Node $v_k$ can be identified since $v_k$ is either a leaf or followed by a tag with label $\bot$. Hence, the stack item $(\cl{w}, v_1, \cdot, \cdot)$ can be annotated with $\ann_2(v_k)$ when $v_k$ is seen. Again by adapting the check routine, we can compute $\delta_2(\ann_2(v_k), v_1)$ and abort if the result is not an accepting state.

The space requirements and the processing time per letter inherit from Algorithm~\ref{algo_bin_tree_two_passes}.
\end{proof}

Applying the bidirectional algorithm of Theorem~\ref{thm:rabin-two-pass} on the encoded form $\XML(\FCNS^{\bot}(t))$, we obtain that validity of general trees can be decided memory efficiently in the streaming model with auxiliary streams.

\begin{corollary}
\label{corollary:main-result}
There is a bidirectional $\Order( \log N)$-pass deterministic streaming algorithm for $\pb$ with space $\Order(\log^2 N)$, $\Order(\log N)$ processing time per letter, and $3$ auxiliary streams.
\end{corollary}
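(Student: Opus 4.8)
The plan is to compose the two main building blocks established above: the $\Order(\log N)$-pass encoding algorithm of Theorem~\ref{theorem:encoding} and the bidirectional two-pass validation algorithm of Theorem~\ref{thm:rabin-two-pass}, which assumes that its input is already in FCNS form. Since the actual input stream carries $\XML(t)$ rather than its encoding, the first phase materialises the encoding on an auxiliary stream, and the second phase validates it.

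Concretely, I would first run the algorithm of Theorem~\ref{theorem:encoding} on the input stream $\XML(t)$, adapted (as remarked just before Theorem~\ref{thm:rabin-one-pass}) so as to output $\XML(\FCNS^{\bot}(t))$ rather than $\XML(\FCNS(t))$, writing the result onto one of the auxiliary streams. This phase costs $\Order(\log N)$ passes, $\Order(\log N)$ space, three auxiliary streams, and $\Order(1)$ processing time per letter. Once this stream is available, I would treat it as the read-only input of the validation procedure of Theorem~\ref{thm:rabin-two-pass} and run it, obtaining a verdict in two additional bidirectional passes using $\Order(\log^2 N)$ space and $\Order(\log N)$ processing time per letter.

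Adding up the resources yields the stated bounds. The total number of passes is $\Order(\log N) + 2 = \Order(\log N)$; the space is the maximum of the two phases, namely $\Order(\log^2 N)$; the per-letter processing time is the maximum of the two, $\Order(\log N)$; determinism is preserved because both phases are deterministic; and bidirectionality is inherited from the second phase, which performs a right-to-left pass.

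The one point that needs care is the auxiliary-stream count: I must certify that three streams suffice for the whole composition, not merely for each phase in isolation. This is where the sequential structure helps — the encoding phase terminates before validation begins, so the two streams not holding $\XML(\FCNS^{\bot}(t))$ are free to be reused. The annotation steps producing $\ann_1$ and $\ann_2$ that are required by Theorem~\ref{thm:rabin-two-pass} each use only a read pass over the encoded stream together with constant internal memory, since the annotation of a tag depends only on that of the preceding tag in the stream; they can therefore be folded into the two validation passes within the three available streams. I accordingly expect the stream bookkeeping, rather than any new algorithmic idea, to be the only genuine obstacle.
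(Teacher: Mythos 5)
Your proposal is correct and follows exactly the paper's route: the corollary is obtained by running the encoding algorithm of Theorem~\ref{theorem:encoding} (adapted to emit $\XML(\FCNS^{\bot}(t))$, as the paper notes at the start of Section~\ref{sec:validity_of_encoded_form}) and then applying the bidirectional two-pass validator of Theorem~\ref{thm:rabin-two-pass} to the encoded stream. Your stream-reuse accounting, including folding the constant-space, on-the-fly $\ann_1$/$\ann_2$ annotation into the two validation passes, is precisely the bookkeeping the paper leaves implicit, and it is sound.
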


\subsection{Decoding}
In the following, we present a streaming algorithm for \ac{FCNS} decoding, that is, given $\XML(\FCNS(t))$ of some tree $t$, output $\XML(t)$. We start with a non-streaming algorithm, Algorithm~\ref{algo:non-straming-FCNS-to-unranked} performing this task. 
\begin{algorithm}[h!]\small
\caption{offline algorithm for \ac{FCNS} decoding}
\label{algo:non-straming-FCNS-to-unranked}
\begin{algorithmic}[1]
\FOR{$i = 1 \rightarrow 2N$}
\IF{$X[i]$ is an opening tag}
\STATE write $X[i]$
\IF{$X[i]$ does not have a left subtree} \label{line:easy_condition}
\STATE write $\cl{X[i]}$
\ENDIF
\ELSIF[See Figure~\ref{fig:decoding}]{$X[i]$ is a left closing tag}
\STATE let $p$ be the parent node of $X[i]$ \label{line:difficulty_conversion}
\STATE write $\cl{p}$
\ENDIF
\ENDFOR
\end{algorithmic}
\end{algorithm}
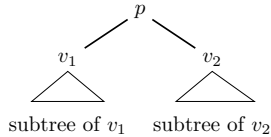
\begin{figure}[h]
\centerline{\scalebox{0.8}{
\begin{tikzpicture}[node distance = 0.5 cm]
\tikzstyle{VertexStyle} = [shape = rectangle, color = white, fill = white, text = black, draw]
\Vertex[x=4, y=4, L=$p$]{v} 
\Vertex[x=2.8, y=3.2,L=$v_1$]{v1}
\Vertex[x=5.2, y=3.2, L=$v_2$]{vk}
\Edge(v)(v1)
\Edge(v)(vk)
\draw (2.8, 3) -- (2.2, 2.5) -- (3.4, 2.5) -- cycle;
\draw (5.2, 3) -- (4.6, 2.5) -- (5.9, 2.5) -- cycle;
\draw (2.8, 2.1) node {subtree of $v_1$};
\draw (5.2, 2.1) node {subtree of $v_2$};
\end{tikzpicture}
}}
\caption{The main difficulty of the \ac{FCNS} decoding is to write the closing tag of a node $p$ when the closing tag of its left child is seen. This is difficult when the subtrees of $v_1$ and $v_2$ are large.\label{fig:decoding}}
\end{figure}


We describe how this algorithm can be converted into a streaming algorithm. For some opening tag $X[i]$, checking the condition in Line~\ref{line:easy_condition} can easily be done by investigating $X[i+1]$. If $X[i+1]$ is a right opening tag or equals $\cl{X[i]}$, $X[i]$ does not have a left subtree. The difficulty in converting this algorithm into a streaming algorithm is in Line~\ref{line:difficulty_conversion}, it is difficult to keep track of opening tags until the respective closing tags of their left children are seen, and indeed, this can not be done with sublinear space in one pass
(Fact~\ref{fact:lb-decoding}).

In the following, we present a streaming algorithm that performs one pass over the input, but two passes over the output, and uses $O(\sqrt{N\log{N}})$ space, and a streaming algorithm that performs $\Order( \log N)$ passes over the input and 3 auxiliary streams using $\Order(\log^2 (N))$ space.

\subsubsection{One read-pass and two write-passes}
We read blocks of size $\sqrt{N\log N}$ and execute Algorithm~\ref{algo:non-straming-FCNS-to-unranked} on that block. In Lemma~\ref{lemma:blocklemma} we shows that in any block there is at most one left closing tag for which the parent's opening and closing tag are not in that block. Hence per block there is at most one left closing tag for which we can not obtain the label of the parent node. We call this closing tag \textit{critical}. In this case we write a \textit{dummy symbol} on the output stream that will be overwritten by the parent closing tag in the second pass. The closing tag of the parent node will arrive in a subsequent block, and it can easily be identified as this since it is the next closing tag arriving at a depth $ - 1$ of the critical closing tag. We store it upon its arrival in our random access memory. Since there is at most one critical closing tag per block and we have a block size of $\sqrt{N\log N}$, we have to recover at most $\Order(\sqrt{N/\log N})$ parent nodes. At the end of the pass over the input stream we have recovered all closing tags of parent nodes for which we wrote dummy symbols on the output stream. In a second pass over the output stream we overwrite the dummy symbols by the correct closing tags. 

The complexity derives from the following lemma demonstrating that in a block there is at most one critical left closing tag. 
\begin{lemma}\label{lemma:blocklemma}
Let $X[i,j]$ be a block. Then there is at most one left closing tag $\cl{a}$ with parent node $p$ such that:
\begin{equation}
\timee(p)  <  i \le  \timee(\cl{a}) \le j  <  \timee(\cl{p}).  \label{ineq:blocklemma}
\end{equation}
\end{lemma}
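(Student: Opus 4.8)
The plan is to argue by contradiction: suppose the block $X[i,j]$ contained two \emph{distinct} critical left closing tags $\cl{a_1}$ and $\cl{a_2}$, with respective parents $p_1$ and $p_2$, each satisfying the inequality~\eqref{ineq:blocklemma}. I will derive a contradiction purely from the positional structure of the \ac{FCNS} encoding, and conclude that at most one such tag exists.

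First I would record the two structural facts about $\XML(\FCNS(t))$ on which the argument rests. Since the \ac{FCNS} depth-first traversal opens a node and then immediately descends to its left child (the first child in $t$), the opening tag of a left child sits right after its parent's: $\pos(a) = \pos(p)+1$ whenever $a$ is the left child of $p$. Consequently the subtree interval $[\pos(p),\pos(\cl{p})]$ decomposes as $\{p\}$, followed by the left subtree occupying exactly $[\pos(a),\pos(\cl{a})]$, followed (if $p$ has a right child, i.e.\ a next sibling in $t$) by the right subtree, which occupies positions strictly greater than $\pos(\cl{a})$ and strictly smaller than $\pos(\cl{p})$. I will also use the standard laminarity of \ac{XML} subtree intervals: two such intervals are either disjoint or nested.

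Now both parents straddle the block, since \eqref{ineq:blocklemma} gives $\pos(p_k) < i \le j < \pos(\cl{p_k})$ for $k\in\{1,2\}$, so the intervals $[\pos(p_1),\pos(\cl{p_1})]$ and $[\pos(p_2),\pos(\cl{p_2})]$ both contain $[i,j]$. By laminarity they are nested; equal intervals would force $p_1=p_2$ and hence $a_1=a_2$ (a node has a unique left child), so I may assume they are distinct and, without loss of generality, that $p_2$ is a proper descendant of $p_1$. Locating $p_2$ inside the decomposition of $p_1$'s subtree yields two cases. If $p_2$ lies in the left subtree of $p_1$ (the subtree of $a_1$), then $\pos(\cl{p_2}) \le \pos(\cl{a_1}) \le j$, contradicting $j < \pos(\cl{p_2})$. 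If instead $p_2$ lies in the right subtree of $p_1$, then $\pos(p_2) > \pos(\cl{a_1}) \ge i$, contradicting $\pos(p_2) < i$. Either branch is impossible, so no two distinct critical left closing tags can coexist in the block.

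The main obstacle — really the only subtle point — is getting the \ac{FCNS} geometry right: in this encoding the subtree of a node absorbs the \emph{entire chain of right siblings} of its first child, so the decomposition ``left subtree ends at $\cl{a_1}$, right subtree begins afterwards'' must be stated carefully, with the right-subtree case understood to be vacuous when $p_1$ has no next sibling. Once that decomposition and the laminarity of subtree intervals are in place, the case analysis is immediate and uses nothing beyond the two strict inequalities of~\eqref{ineq:blocklemma}.
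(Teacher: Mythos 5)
Your proof is correct and takes essentially the same route as the paper's: a contradiction from two critical tags, with the later parent forced into the subtree of the earlier parent's left child, so that its closing tag satisfies $\pos(\cl{p_2}) \le \pos(\cl{a_1}) \le j$, contradicting $j < \pos(\cl{p_2})$ --- exactly your Case~1. The paper gets there in one step (ordering parents by opening position, $\pos(p_1) < \pos(p_2) < i \le \pos(\cl{a_1})$ immediately places $p_2$ inside the subtree of $a_1$), so your laminarity argument and separate right-subtree case, which is excluded by that same inequality, are just a slightly longer unpacking of the identical idea.
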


\begin{proof}
By contradiction, assume that there are $2$ left closing tags $\cl{a}, \cl{b}$ with $p$ being the parent node of $a$, and $q$ being the parent node of $b$, for which Inequality~\ref{ineq:blocklemma} holds. Wlog we assume that $\timee(p) < \timee(q)$. Since $\timee(p) < \timee(q) < \timee(\cl{a})$, $q$ is contained in the subtree of $a$ or $q = a$. This, however, implies that $\timee(\cl{q}) \le \timee(\cl{a}) < j$ contradicting $\pos(\cl{q}) > j$.
\end{proof}

\begin{theorem}
\label{thm:decoding_two_pass}
There is a streaming algorithm using $\Order(\sqrt{N \log N})$ space and $\Order(1)$ processing time per letter which performs one pass over the input stream containing $\XML(t)$ and two passes over the output stream onto which it outputs $\XML(\FCNS(t))$.
\end{theorem}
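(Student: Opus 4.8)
The plan is to obtain the encoder as the block-wise analogue of the decoding algorithm just described, producing $\XML(\FCNS(t))$ from $\XML(t)$ in one read pass and repairing the few non-local outputs in a second write pass. The transformation is almost entirely local. By Fact~\ref{fact_opening_tags_order} the opening tags keep their input order, so each opening tag of $\XML(\FCNS(t))$ can be emitted immediately: a tag is a left tag iff it is immediately preceded by an opening tag (its parent's), a right tag otherwise, and the depth/position annotations are maintained with counters. By Facts~\ref{fact_closing_tags_order} and~\ref{fact_opening_closing_tags}, the closing tags of the children $v_1,\dots,v_k$ of a node $p$ occur in $\XML(\FCNS(t))$ as the contiguous reversed block $\cl{v_k}\cdots\cl{v_1}$, emitted exactly when the last child $v_k$ is closed---which in $\XML(t)$ is signalled by the closing tag of $v_k$ being immediately followed by $\cl{p}$. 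Hence an offline encoder echoes opening tags and, at each such trigger, prints the reversed list of the labels of $p$'s children.

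To stream this, I would read $\XML(t)$ in blocks $X[i,j]$ of length $B=\sqrt{N\log N}$ and run the offline encoder on each block, buffering only the children whose opening tags fall inside the current block. A reversed block $\cl{v_k}\cdots\cl{v_1}$ is emitted verbatim whenever all of $p$'s children opened inside $[i,j]$; otherwise some children of $p$ opened before position $i$, and I would instead write a dummy placeholder on the first write pass, to be filled on the second pass. The opening tags $v_1,\dots,v_k$ have already been committed to the output stream (in input order), so in principle the data needed to repair a placeholder is present on the output; the repair is then carried out on the second pass over $\XML(\FCNS(t))$. One read pass over $\XML(t)$ together with these two write passes over the output realizes the encoding, and the per-letter work is a constant number of local stack operations, giving $\Order(1)$ processing time per letter.

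The main obstacle---and the essential difference from the decoding direction---is the space bound. In decoding, the single missing datum (a parent label) lies in the \emph{future} of the read stream and is recovered for free before the pass ends; here the missing data (the labels of the early children of $p$) lie in the \emph{past} of the read stream, so they must be recovered from the opening tags already written to the output, and the block must additionally be \emph{reversed} across a block boundary. The crux is therefore to prove a counting lemma in the spirit of Lemma~\ref{lemma:blocklemma} that bounds the number of cross-boundary (``critical'') reversed blocks, so that all placeholders together with their repair data fit in $\Order(\sqrt{N\log N})$ cells. The nesting argument of Lemma~\ref{lemma:blocklemma}---containment $\timee(p)<\timee(q)<\timee(\cl{a})$ forcing $\timee(\cl{q})\le\timee(\cl{a})$---is the natural tool, and making it simultaneously control the number of placeholders and the reversal within the two-write-pass budget is the delicate step I would concentrate on; the remaining time accounting is routine.
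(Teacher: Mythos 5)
You have attacked the wrong direction of the transformation. The theorem's statement as printed swaps input and output by an evident typo: it sits in the subsection on decoding, the introduction announces it as the decoding result, and Fact~\ref{fact:lb-decoding2} states the matching $\Omega(\sqrt{N})$ lower bound for precisely this pass pattern. The intended (and proved) claim is: one read pass over $\XML(\FCNS(t))$, producing $\XML(t)$ with two passes over the output. The paper's proof is the block algorithm sketched just before Lemma~\ref{lemma:blocklemma}: run the offline Algorithm~\ref{algo:non-straming-FCNS-to-unranked} on blocks of length $\sqrt{N\log N}$; by Lemma~\ref{lemma:blocklemma} each block contains at most one \emph{critical} left closing tag whose parent's opening and closing tags straddle the block; a dummy symbol is written for it, and the parent's label is recovered \emph{later in the same read pass} (it is the next closing tag arriving at depth one less than the critical tag). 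Only $\Order(\sqrt{N/\log N})$ such labels need be stored, and the second output pass overwrites the dummies. Your own remark that in decoding ``the single missing datum lies in the future of the read stream'' is exactly the point of the paper's proof --- but it is the proof of the theorem, not a contrast to it.

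Taken on its own terms, your encoding scheme cannot be repaired, and the ``delicate step'' you defer is not delicate but false. Take the paper's own hard instance for encoding (Fact~\ref{fact:lb-encoding}): the star $\XML(t)=r\,x_1\cl{x_1}x_2\cl{x_2}\cdots x_n\cl{x_n}\cl{r}$ with $n=\Theta(N)$. Its single reversed closing block $\cl{x_{nR}}\cdots\cl{x_{2R}}\cl{x_{1L}}$ has length $\Theta(N)$ and crosses every block boundary, so your first write pass leaves $\Theta(N)$ placeholders whose repair data are $\Theta(N)$ distinct labels. A counting lemma in the spirit of Lemma~\ref{lemma:blocklemma} can only bound the \emph{number} of critical reversed blocks per input block; here a single critical block already carries linear repair data. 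The asymmetry with decoding is exactly this: there, each critical tag needs one $\Order(\log N)$-bit label, so counting critical tags suffices; in encoding the per-block deficit is an unbounded list. Nor does ``the data is present on the output'' save you: in this model a pass over a stream is either a read pass or a write pass, so within a two-pass budget on the output the repairs must come from RAM, forcing $\Omega(N)$ space; and even if one could read while writing, the labels needed along the reversed closing block move \emph{backwards} through the output while the head moves forwards. This is precisely why the paper proves Fact~\ref{fact:lb-encoding}, conjectures $\Omega(N/p)$ space for $p$-pass encoding (Conjecture~\ref{conjecture:space-passes}), and performs encoding only via Theorem~\ref{theorem:encoding}, with $\Order(\log N)$ passes and three auxiliary read/write streams --- resources far beyond those of the present theorem.
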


\subsubsection{Logarithmic number of passes}


Again, we use the offline Algorithm~\ref{algo:non-straming-FCNS-to-unranked} as a starting point for the algorithm we design now. For coping with the problem that it is hard to remember all opening parent tags when their corresponding closing tag ought to be written on the output, we write categorically \textit{dummy symbols} on the output stream for all parent closing tags. The crux then is the following observation:

\begin{fact}
\label{fact:decoding_order}
Let $\cl{c_1}_L \dots \cl{c_N}_L$ be the subsequence of closing tags of left children of $\XML(\FCNS(t))$. Then the sequence $\cl{p_1} \dots \cl{p_N}$ is a subsequence of $\XML(t)$ where $p_i$ is the parent node of $c_i$ in $\FCNS(t)$.
\end{fact}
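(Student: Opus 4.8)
The plan is to reduce the claim to an order-preservation statement and to prove it by a case analysis on the tree relationship of the parents, the whole argument resting on a single structural observation about the FCNS encoding.

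First I would record the underlying bijection. A node of $\FCNS(t)$ is a left child exactly when it is the first child of its parent in $t$, and every non-leaf node of $t$ has exactly one first child. Hence the left-children closing tags $\cl{c_1}\dots\cl{c_N}$ are in bijection with the non-leaf nodes $p_1,\dots,p_N$ of $t$ via $c_i\mapsto p_i=\parent(c_i)$, and the tags $\cl{p_1},\dots,\cl{p_N}$ are precisely the closing tags of the non-leaf nodes of $t$. Thus it suffices to show that listing these parents in the order in which the $\cl{c_i}$ occur in $\XML(\FCNS(t))$ reproduces the order in which the $\cl{p_i}$ occur in $\XML(t)$; equivalently, for two non-leaf nodes $p,q$ with first children $c,c'$, one has $\pos'(\cl{c})<\pos'(\cl{c'})$ iff $\pos(\cl{p})<\pos(\cl{q})$, where $\pos'$ and $\pos$ denote positions in $\XML(\FCNS(t))$ and $\XML(t)$ respectively.

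The key structural step is the following: the FCNS-subtree rooted at the first child $c$ of $p$ consists of exactly the proper descendants of $p$ in $t$. I would prove this by unfolding the FCNS definition: the left child of $c$ is the first child of $c$ in $t$ and its right child is the next sibling of $c$ in $t$, so, proceeding recursively along right-child (next-sibling) links, the FCNS-subtree of $c$ collects $c$ together with all of its right siblings in $t$ — that is, all the remaining children of $p$ — and all their $t$-descendants. Consequently the node set of the FCNS-subtree of $c$ equals $D_p$, the set of proper descendants of $p$ in $t$. Since $\cl{c}$ is emitted in $\XML(\FCNS(t))$ exactly after the whole FCNS-subtree of $c$ has been traversed, and $\cl{p}$ is emitted in $\XML(t)$ exactly after all of $D_p$ has been traversed, both closing tags mark the completion of the identical node set $D_p$ in their respective depth-first traversals.

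With this in hand the order equivalence follows by comparing $D_p$ and $D_q$. If $p$ is a proper descendant of $q$, then $D_p\subsetneq D_q$ and $c\in D_q$, so $c$ lies in the FCNS-subtree of $c'$; hence $\cl{c}$ precedes $\cl{c'}$, matching $\cl{p}$ preceding $\cl{q}$ in $t$ (and symmetrically when $q$ descends from $p$). If $p$ and $q$ are incomparable, then $D_p$ and $D_q$ are disjoint and neither $c$ nor $c'$ lies in the other's FCNS-subtree, so $c$ and $c'$ are incomparable in $\FCNS(t)$; their closing order then agrees with their opening order, which by Fact~\ref{fact_opening_tags_order} is the same in both documents, and this opening order is governed by whether $\pos(p)<\pos(q)$, since $\pos(c)=\pos(p)+1$ and $\pos(c')=\pos(q)+1$. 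In every case $\pos'(\cl{c})<\pos'(\cl{c'})$ iff $\pos(\cl{p})<\pos(\cl{q})$, which yields the fact. I expect the structural step — pinning down the FCNS-subtree of a first child as the parent's proper-descendant set — to be the main obstacle, since it is exactly what lets one sidestep the fact that the FCNS and original depth-first traversals visit nodes in genuinely different orders; once the comparison is phrased purely in terms of containment and disjointness of these node sets, the case analysis is routine. Alternatively, one may read the statement off the correctness of the offline decoder (Algorithm~\ref{algo:non-straming-FCNS-to-unranked}), which writes $\cl{p}$ precisely upon reading $\cl{c}$ and whose output is $\XML(t)$.
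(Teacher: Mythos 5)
Your argument is correct, and it is worth noting that the paper itself states Fact~\ref{fact:decoding_order} as a bare observation with no proof at all, so your write-up supplies something the paper only asserts. The reduction to order preservation is the right move: since each non-leaf node of $t$ has a unique first child, the map $c_i \mapsto p_i$ is a bijection onto the non-leaf nodes, and the claim becomes that $\pos'(\cl{c}) < \pos'(\cl{c'})$ iff $\pos(\cl{p}) < \pos(\cl{q})$. Your key structural lemma --- that the $\FCNS$-subtree rooted at the first child $c$ of $p$ is exactly the set $D_p$ of proper descendants of $p$ in $t$, because the right-child chain from $c$ enumerates all remaining children of $p$ --- is sound and is indeed the crux; it makes both $\cl{c}$ (in $\XML(\FCNS(t))$) and $\cl{p}$ (in $\XML(t)$) mark the completion of the same node set in their respective traversals. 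The case analysis is exhaustive (for distinct non-leaf $p,q$: one properly descends from the other, or they are incomparable, whence $D_p$ and $D_q$ are nested or disjoint), and in the incomparable case your chain through Fact~\ref{fact_opening_tags_order} together with $\pos(c)=\pos(p)+1$ is valid because incomparable nodes occupy disjoint tag intervals in both documents, so closing order agrees with opening order. One small caveat you handle only implicitly: the paper annotates the root of $\FCNS(t)$ as left by default, so $\cl{r_L}$ is formally a left closing tag with no parent; your bijection correctly reads ``left children'' as excluding the root, which is also how the decoder must treat it. Your closing remark that the fact can alternatively be read off the offline decoder (Algorithm~\ref{algo:non-straming-FCNS-to-unranked}) is fair but weaker than your main argument, since the paper never proves that decoder correct either --- its correctness rests on essentially the same structural lemma you proved.
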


We apply a modified version of our bidirectional two-pass Algorithm~\ref{algo_bin_tree_two_passes} to recover the missing tags. Instead of checking validity, once the check function is called in Algorithm~\ref{algo_bin_tree_subroutine} with variables $(a,b,c)$, we output the parent label $a$ onto an auxiliary stream, annotated with $\pos(b)$. We do the same in a reverse pass over the input stream counting positions from $2N$ downwards to $1$. In so doing, the auxiliary stream contains all parent labels for which dummy symbols are written on the output stream. 

Fact~\ref{fact:decoding_order} shows that it is enough to sort by means of two further auxiliary streams the auxiliary stream with respect to the annotated position of the closing tags of the left children of these nodes. In a last pass we insert the parent closing tags into the output stream.





\begin{theorem}\label{thm:decoding_log_pass}
There is a $\Order(\log N)$-pass streaming algorithm with space $\Order(\log^2 N)$ and $\Order(\log N)$ processing time per letter and $3$ auxiliary streams that computes
on the third auxiliary stream the \ac{FCNS} decoding of any FCNS encoded document given in the input stream.
\end{theorem}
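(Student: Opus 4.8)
The plan is to turn the offline Algorithm~\ref{algo:non-straming-FCNS-to-unranked} into a streaming algorithm by decoupling the two things it does: emitting the opening tags together with the ``easy'' closing tags, and emitting the closing tag $\cl{p}$ of a parent $p$ at the moment the closing tag of its left child is read. First I would make one read pass over the input $\XML(\FCNS(t))$ and write a \emph{skeleton} of the output onto an auxiliary stream: each opening tag is copied (and, detecting a missing left subtree by a lookahead of one as in the discussion following Algorithm~\ref{algo:non-straming-FCNS-to-unranked}, its matching closing tag is emitted immediately), while for each left closing tag I write a single \emph{dummy symbol} as a placeholder for the still-unknown parent closing tag. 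This costs one pass and $\Order(\log N)$ space, and produces $\XML(t)$ with every parent closing tag replaced by a dummy, the dummies occurring in the order of the left closing tags $\cl{c_1}_L\dots\cl{c_N}_L$ of the FCNS sequence.

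It then remains to recover the parent labels and overwrite the dummies with them. Since $\FCNS(t)$ is a $2$-ranked tree, I would run the bidirectional two-pass Algorithm~\ref{algo_bin_tree_two_passes} on the FCNS sequence, but replace its validity test: each time the subroutine Algorithm~\ref{algo_bin_tree_subroutine} invokes $\mathrm{check}(a,b,c)$ --- which happens for an FCNS node $a$ with left child $b$ and right child $c$ --- I instead write the parent label $a$, annotated with the position $\pos(\cl{b})$ of its left child's closing tag (the fourth stack component already carries this), onto an auxiliary stream. By Lemma~\ref{lemma:verification_subtree} the forward pass handles all FCNS nodes whose left subtree is at least as large as their right, and the reverse pass (positions counted downward from $2N$, so both passes use the same coordinates) handles the rest, so between the two passes every internal FCNS node --- equivalently, every left closing tag, equivalently every dummy --- contributes exactly one $(\text{label},\text{position})$ record. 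This reuses the stack analysis of Algorithm~\ref{algo_bin_tree_subroutine}, giving $\Order(\log^2 N)$ space, two passes, and $\Order(\log N)$ time per letter.

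The records are produced in the order of the recovery passes, not in the order of the dummies, so I would sort them. Fact~\ref{fact:decoding_order} is the crux: the parent closing tags $\cl{p_1}\dots\cl{p_N}$ appear in $\XML(t)$ in exactly the order of the left closing tags of $\XML(\FCNS(t))$, which is the order of the dummies on the skeleton stream. Hence sorting the records by the annotated position lines them up with the dummies. Merge sort (Algorithm~\ref{algo_merge_sort}, from~\cite{ghs09}) does this in $\Order(\log N)$ passes and $\Order(\log N)$ space using the remaining auxiliary streams. Finally I make one synchronized pass over the skeleton stream and the sorted record stream, replacing the $i$-th dummy by $\cl{p_i}$; the result is $\XML(t)$. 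In total the algorithm uses $3$ auxiliary streams, $\Order(\log N)$ passes, $\Order(\log^2 N)$ space, and $\Order(\log N)$ time per letter.

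The main obstacle is precisely what already defeats a one-pass algorithm (Fact~\ref{fact:lb-decoding}): when a left closing tag is read, the parent's opening tag may lie arbitrarily far back and cannot be stored. My plan never remembers parents; it writes dummies and recovers the parent labels \emph{out of order} through the validation subroutine, relying on Fact~\ref{fact:decoding_order} to certify that, once sorted by the left-child closing position, these labels are exactly the dummy fillers in the correct order. The one point needing care is making the binary-tree subroutine apply to the extended $2$-ranked FCNS tree, where a node may have only a left or only a right child; this is handled by the same $\bot$-padding (working on $\FCNS^{\bot}(t)$) and minor modifications that the paper already uses for the encoding, ensuring $\mathrm{check}$ is triggered for every node that has a left child.
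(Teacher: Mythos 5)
Your proposal is correct and follows essentially the same route as the paper's own proof: writing dummy symbols for all parent closing tags, recovering the parent labels out of order by replacing the check function in the bidirectional two-pass subroutine with a write of the label annotated by the left child's closing position, sorting these records by merge sort, and splicing them into the skeleton in a final pass justified by Fact~\ref{fact:decoding_order}. Your added remarks on the $\bot$-padding for extended $2$-ranked trees and on keeping both passes in the same coordinate system are details the paper glosses over, but they do not change the argument.
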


\section{Lower bounds for FCNS encoding and decoding}\label{sec:lowerbounds}

We define a family of hard instances of length $N=\Theta(n)$ for the computation of the \ac{FCNS} encoding of a tree as in Figure~\ref{fig:lb-encoding}.

\begin{figure}[h] 
\centerline{\scalebox{0.8}{ 
\begin{tikzpicture}[node distance = 0.5 cm]
\tikzstyle{VertexStyle} = [shape = rectangle, color = white, fill = white, text = black, draw]
\Vertex[x=4, y=4, L=$r$]{v} 
\Vertex[x=2.5, y=3.2,L=$\mbox{$x_1$}$]{v1}
\Vertex[x=3.5, y=3.2, L=$\mbox{$x_2$}$]{v2}
\Vertex[x=4.5, y=3.2, L=$\dots$]{n}
\Vertex[x=5.5, y=3.2, L=$\mbox{$x_n$}$]{vk}
\Edge(v)(v1)
\Edge(v)(v2)
\Edge(v)(n)
\Edge(v)(vk)
\end{tikzpicture}
\hspace{1cm}
\begin{tikzpicture}[node distance = 0.5 cm]
\tikzstyle{VertexStyle} = [shape = rectangle, color = white, fill = white, text = black, draw]
\Vertex[x=4, y=4, L=$r$]{v} 
\Vertex[x=3.2, y=3.3,L=$\mbox{$x_1$}$]{v1}
\Vertex[x=3.9, y=2.5, L=$\mbox{$x_2$}$]{v2}
\Vertex[x=4.5, y=1.9, L=$\dots$]{n}
\Vertex[x=5.1, y=1.3, L=$\mbox{$x_n$}$]{vk}
\Edge(v)(v1)
\Edge(v1)(v2)
\Edge(v2)(n)
\Edge(n)(vk)
\end{tikzpicture}
}}
\caption{\label{fig:lb-encoding}Left: hard instance. Right: its FCNS encoded form.}
\end{figure}
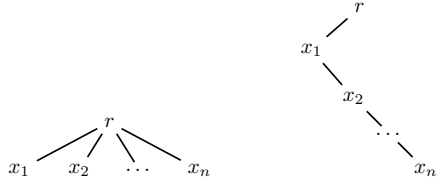

It is easy to see that computing the sequence of closing tags in the FCNS encoding requires to invert a stream. Let $t$ be a hard instance. Then $\mbox{$\XML(t) = rx_1\cl{x_1}x_2\cl{x_2} \dots x_n\cl{x_n}\cl{r}$}$, and $\XML(\FCNS(t)) = r_Lx_{1L}x_{2R} \dots x_{nR} \cl{x_{nR}}\cl{x_{n-1R}} \dots \cl{x_{2R}} \cl{x_{1L}} \cl{r_L}$.
Since the writing on the stream can only start after reading $x_n$, we deduce that memory space $\Omega(n)$ is required, in order to store all previous tags.
\begin{fact}\label{fact:lb-encoding}
Every one-pass randomized streaming algorithm for FCNS encoding
with bounded error requires $\Omega(N)$ space.
\end{fact}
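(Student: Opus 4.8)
The plan is to read the $\Omega(N)$ bound as an instance of the classical ``stream reversal'' lower bound, made precise by a fooling argument in the deterministic case and by Fano's inequality in the randomized case. I fix two distinct labels $a,b\in\Sigma$ and, for each word $w=w_1\cdots w_n\in\{a,b\}^n$, take the hard instance $t_w$ of Figure~\ref{fig:lb-encoding} whose $n$ children carry the labels $w_1,\dots,w_n$. Then $N=2n+2=\Theta(n)$, and
\begin{equation*}
\XML(\FCNS(t_w))=r_L\,(w_1)_L(w_2)_R\cdots(w_n)_R\,\cl{(w_n)_R}\cl{(w_{n-1})_R}\cdots\cl{(w_1)_L}\,\cl{r_L},
\end{equation*}
so that, read left to right, the labels of the closing tags spell $w$ in reverse.

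First I would pin down the moment at which information is forced to reside in memory. The output is produced in a single left-to-right write pass, and its first closing tag $\cl{(w_n)_R}$ sits immediately after the opening tag $(w_n)_R$; hence no closing tag can be emitted before $(w_n)_R$, i.e. before the algorithm has read the input letter $x_n=w_n$ at input position $2n$. I therefore place the \emph{cut} just after reading input position $2n-1$ (the prefix $r\,w_1\cl{w_1}\cdots w_{n-1}\cl{w_{n-1}}$): at the cut no closing tag has yet been written, so the entire closing-tag block is a function of the memory state $M_w$ at the cut together with the suffix $x_n\cl{x_n}\cl{r}$ still to be read. To make this suffix information-free, I freeze $w_n=a$ and let $W=(w_1,\dots,w_{n-1})$ range over $\{a,b\}^{n-1}$; then the post-cut input is the fixed string $a\,\cl{a}\,\cl{r}$, and a decoder that simulates the algorithm from $M_w$ on this fixed suffix reconstructs the closing-tag block, hence $W$. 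The essential point justifying ``store all previous tags'' is that the opening tags the algorithm may already have written encode $W$ but lie on a write-once output that a one-pass algorithm cannot revisit, so the information must survive in the $s$-bit memory.

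In the deterministic case this is a fooling-set argument: the map $w\mapsto M_w$ must be injective on the $2^{n-1}$ words $W$, since equal cut-states would force equal closing-tag outputs while correctness demands distinct ones. Thus there are at least $2^{n-1}$ reachable cut-states, giving $s\ge n-1=\Omega(N)$. For a randomized algorithm with error $\delta<\tfrac12$, I would draw $W$ uniformly and treat the cut-state $M$ as a random variable depending on $W$ and the internal coins. The decoder above recovers $W$ from $M$ with probability at least $1-\delta$, so Fano's inequality gives $H(W\mid M)\le 1+\delta(n-1)$ and hence
\begin{equation*}
s\ \ge\ H(M)\ \ge\ I(W;M)\ =\ (n-1)-H(W\mid M)\ \ge\ (1-\delta)(n-1)-1\ =\ \Omega(N).
\end{equation*}

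The main obstacle, and the step needing the most care, is the timing claim that no closing tag is written before the cut: it is what lets the decoder reconstruct the \emph{whole} reversed block from the post-cut state rather than only its unseen tail. This rests on the fixed output order of $\XML(\FCNS(t_w))$ (for a star, all opening tags precede all closing tags) together with the write-once, single-pass discipline of the output; once these are in place the counting and information-theoretic steps are routine. A secondary point to state cleanly is that fixing $w_n=a$ removes any dependence of the tail on $W$, so that the decoder is genuinely a function of $M$ alone.
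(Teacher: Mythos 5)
Your proof is correct and takes essentially the same route as the paper, whose own argument for this fact is just the informal core of yours --- the same star-shaped hard instances of Figure~\ref{fig:lb-encoding} together with the observation that the closing-tag block cannot begin before $x_n$ is read, with the fooling-set/Fano formalization left implicit exactly as you supply it. One refinement worth a sentence in your write-up: because the model allows advancing the output head without writing, a randomized algorithm could guess-write the cell of $(w_n)_R$ before your cut (and under your distribution, which freezes $w_n=a$, such a guess would always be correct), but any run doing so must err on the twin input with $w_n=b$, so these runs have probability at most $2\delta$ and your Fano step goes through with total failure probability $3\delta$.
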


We conjecture that this argument can be extended as follows:

\begin{conjecture} \label{conjecture:space-passes}
Any $p$-passes randomized streaming algorithm for FCNS encoding with bounded error requires space $\Omega(N/p)$.
\end{conjecture}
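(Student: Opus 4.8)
The plan is to lift the timing argument behind Fact~\ref{fact:lb-encoding} to $p$ passes by charging, to each pass, an $\Order(s)$ upper bound on the number of output symbols it can correctly commit, where $s$ denotes the space. I would first pin down the model in which the statement is even meaningful: a single bidirectional pass already reverses the hard instance in constant space (read the input right to left and emit the closing tags as they appear), so the conjecture can only concern \emph{unidirectional} multi-pass algorithms. I then specialise the hard instances of Figure~\ref{fig:lb-encoding} by letting each $x_i$ range over two labels, so that the input encodes a uniformly random $z\in\{0,1\}^n$ and the subsequence of closing tags that must be produced is exactly $z$ read in reverse. By Yao's principle it suffices to prove $p\cdot s=\Omega(n)=\Omega(N)$ against this distribution for deterministic algorithms of error $\eps$.

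Under the (temporary) assumption that the output is generated in a single left-to-right sweep, written once and in order, I would argue as follows. The set $J_k$ of output positions emitted during input pass $k$ is a contiguous interval, and it splits into a \emph{precocious} prefix, whose source bit has not yet been read in pass $k$, and a \emph{non-precocious} suffix, whose source has already been read in pass $k$; this threshold structure holds because both the read position and the output index increase with time. For a precocious symbol the emitted bit is a function of the $s$-bit memory at the start of pass $k$ (for uniform $z$ the already-read bits carry no information about an as-yet-unread source bit), so Fano's inequality bounds the number of such symbols by $\Order\!\big(s/(1-\h(\eps))\big)$. For the non-precocious suffix the reversal order forces the algorithm to hold all of its not-yet-emitted sources simultaneously at the instant its first symbol is written, since a forward pass cannot re-read them; hence the suffix has at most $s+\Order(1)$ symbols. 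Thus $|J_k|=\Order(s)$, and summing over the $p$ passes gives $n=\sum_k|J_k|=\Order(ps)$, i.e. $s=\Omega(N/p)$.

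The hard part, and the reason the statement remains only a conjecture, is that this clean invariant collapses once the algorithm may revisit the \emph{output}. Precisely this freedom is what makes decoding easy: Theorem~\ref{thm:decoding_two_pass} writes placeholder symbols and patches them in a second pass over a read/write output stream, and Lemma~\ref{lemma:blocklemma} shows that only one patch per block is ever needed. With a read/write (or multi-pass) output, $J_k$ need no longer be contiguous, a symbol may be emitted as a dummy and corrected later, and both the prefix/suffix dichotomy and the buffering bound fail. Establishing the conjecture therefore calls for a genuinely round-robust information or communication lower bound showing that, unlike decoding, a \emph{total} reversal cannot be patched cheaply, so that $\Omega(N)$ distinct corrections are forced and each still costs buffering. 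Capturing this asymmetry between encoding and decoding, rather than only the single-output-pass case sketched above, is where I expect the real difficulty to lie; a natural route is a $2p$-message two-party simulation in which the memory states at the midpoint crossings serve as the messages, reduced from a reversal/permutation problem whose complexity is insensitive to the number of rounds.
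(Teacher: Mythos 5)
You should first note that the paper does not prove this statement at all: it is explicitly a conjecture, supported only by the one-pass argument of Fact~\ref{fact:lb-encoding} and the single sentence ``we conjecture that this argument can be extended.'' So there is no proof to compare against, and your proposal cannot be faulted for failing to close the problem; the right standard is whether your partial argument is sound and whether you have located the genuine obstruction. On both counts you do well. Your model clarifications are correct and necessary: a bidirectional algorithm disposes of the hard family of Figure~\ref{fig:lb-encoding} in two cheap passes, so the conjecture can only be about unidirectional passes; one further caveat you should add is that auxiliary streams must also be excluded, since otherwise Theorem~\ref{theorem:encoding} (encoding in $\Order(\log N)$ passes, $\Order(\log N)$ space, $3$ auxiliary streams) already contradicts the conjectured bound at $p=\Theta(\log N)$. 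Your argument for the restricted model --- write-once, in-order output --- is essentially right: with a forward-only read head and a forward-only write head, the output interval $J_k$ of pass $k$ is contiguous and splits at a threshold into precocious and non-precocious parts exactly as you say (output index increasing forces source index decreasing while the read position increases); the precocious part is charged $\Order\bigl(s/(1-\h(\eps))\bigr)$ symbols by a Fano argument, using independence of the unread suffix from everything but the $s$-bit carried state, and the non-precocious suffix is charged $s+\Order(1)$ symbols by the buffering argument, since its sources are all behind the read head at the moment its first symbol is committed. Summing gives $N=\Order(ps)$, which is a genuine strengthening of what the paper writes down (the paper only argues $p=1$).

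Your diagnosis of why this falls short of the conjecture is also the correct one, and it matches the paper's own framing of the encoding/decoding asymmetry: once the output stream is read/write and may be revisited (the very mechanism of Theorem~\ref{thm:decoding_two_pass} and Lemma~\ref{lemma:blocklemma}, and implicitly of Fact~\ref{fact:lb-decoding2}), the intervals $J_k$ lose contiguity, dummy-and-patch strategies break the precocious/non-precocious dichotomy, and the buffering bound evaporates. Two small cautions on the speculative last step: a two-party simulation with $2p$ messages does not straightforwardly model a writable output stream crossing the cut (the output itself becomes a communication channel of width up to $N$, so the messages are no longer just the $s$-bit memory states), and your bidirectional remark, while correct for the paper's hard instances, strictly speaking only shows that \emph{this family} cannot witness a bidirectional lower bound, not that general FCNS encoding is easy bidirectionally. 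Neither caution affects the main body of your sketch, which is sound for the model it claims and correctly identifies where the open difficulty lies.
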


We now define another family of hard instances of length $N=\Theta(n)$ for decoding a \ac{FCNS} encoded tree as in Figure~\ref{fig:lb-decoding}.

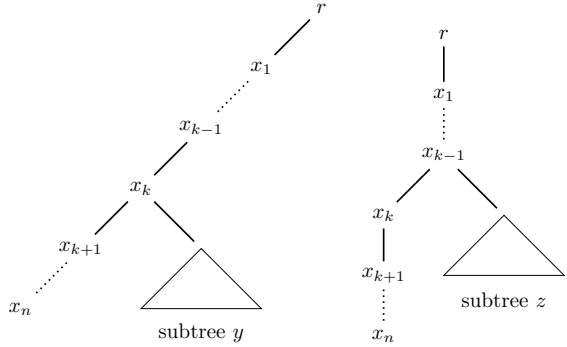
\begin{figure}[h] 
\centerline{\scalebox{0.8}{
\begin{tikzpicture}[node distance = 0.5 cm]
\tikzstyle{VertexStyle} = [shape = rectangle, color = white, fill = white, text = black, draw]
\Vertex[x=10, y=10, L=$r$]{r} 
\Vertex[x=9, y=9,L=$\mbox{$x_1$}$]{x1}
\Vertex[x=8, y=8,L=$\mbox{$x_{k-1}$}$]{x2}
\Vertex[x=7, y=7,L=$\mbox{$x_k$}$]{xk}
\Vertex[x=6, y=6,L=$\mbox{$x_{k+1}$}$]{xk1}
\Vertex[x=5, y=5,L=$\mbox{$x_{n}$}$]{xn}
\Vertex[x=8, y=6,L=$\mbox{$\,$}$]{dummy}
\Edge(r)(x1)
\Edge(x2)(xk)
\Edge(xk)(xk1)
\Edge(xk)(dummy)
\tikzset{EdgeStyle/.style={dotted}}
\Edge(x1)(x2)
\Edge(xk1)(xn)
\draw (8, 6) -- (7, 5) -- (9, 5) -- cycle;
\draw (8,4.6) node {subtree $y$};
\end{tikzpicture}
\hspace{0.1cm}
\begin{tikzpicture}[node distance = 0.5 cm]
\tikzstyle{VertexStyle} = [shape = rectangle, color = white, fill = white, text = black, draw]
\Vertex[x=10, y=10, L=$r$]{r} 
\Vertex[x=10, y=9, L=$\mbox{$x_1$}$]{x1} 
\Vertex[x=10, y=8, L=$\mbox{$x_{k-1}$}$]{xkm1} 
\Vertex[x=9, y=7, L=$\mbox{$x_k$}$]{xk} 
\Vertex[x=9, y=6, L=$\mbox{$x_{k+1}$}$]{xk1} 
\Vertex[x=9, y=5, L=$\mbox{$x_n$}$]{xn}
\Vertex[x=11, y=7, L=$\mbox{$\,$}$]{dummy}
\Edge(r)(x1)
\Edge(xkm1)(xk)
\Edge(xk)(xk1)
\Edge(xkm1)(dummy)
\tikzset{EdgeStyle/.style={dotted}}
\Edge(x1)(xkm1)
\Edge(xk1)(xn) 
\draw (11, 7) -- (10, 6) -- (12, 6) -- cycle;
\draw (11,5.6) node {subtree $z$};
\end{tikzpicture}
}}
\caption{\label{fig:lb-decoding}Left: hard instance in FCNS form, where $y$ is any tree of size $\Theta(n)$. Right: its decoded form.}
\end{figure}

Intuitively, decoding the tree of any hard instance requires to put the full tree $y$ into memory. Let $\XML(\FCNS(t))$ denote a hard instance which we aim to decode into $\XML(t)$. Then: $\XML(\FCNS(t)) = r x_{1L} \dots x_{nL} \cl{x_{nL}} \dots \cl{x_{k+1L}} Y \cl{x_{kL}} \dots \cl{x_{1L}} \cl{r_L}$ and the decoded form is $\mbox{$\XML(t) = r x_1 \dots x_n \cl{x_{n-1}} \dots \cl{x_k} Z \cl{x_{k-1}} \dots \cl{x_1} \cl{r} $}$ where $Z$ is the decoded form of $Y$.
Since $Z$ can only be written after $x_k$, and since $x_k$ cannot be memorized because $k$ was unknown until we reach $Y$,
memory space $\Omega(n)$ is required. This argument can be easily formalized using standard information theory arguments.
\begin{fact}\label{fact:lb-decoding}
Every one-pass randomized streaming algorithm for FCNS decoding
with bounded error requires space $\Omega(N)$. 
\end{fact}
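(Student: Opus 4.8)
The plan is to turn any space-$s$ one-pass decoder into a way of compressing $\Omega(N)$ bits of information into $s$ bits, using the instance family of Figure~\ref{fig:lb-decoding}, and thereby invoke standard one-way lower bounds. Recall that on such an instance $\XML(\FCNS(t))$ opens the whole chain $x_1,\dots,x_n$, then closes it down to $x_{k+1}$, then reads the size-$\Theta(n)$ block $Y$, and only \emph{afterwards} reads $\cl{x_{k}}\cdots\cl{x_1}\cl{r}$; whereas in the output $\XML(t)$ the tag $\cl{x_k}$ must be written \emph{immediately before} the decoded block $Z$. This is the crucial misalignment: the input tag $\cl{x_{kL}}$ that names $x_k$ is pushed $\Theta(n)$ positions into the future by $Y$, yet $\cl{x_k}$ must already appear in the output before $Z$. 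I would make the chain labels $x_1,\dots,x_n\in\{0,1\}$ and the subtree $y$ uniformly random, with $k$ uniform in $[n/2,n]$, so that $N=\Theta(n)$.

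The heart of the argument is a dichotomy taken at one probe point: the instant the decoder has consumed everything up to the end of $Y$, but not the trailing block $R=\cl{x_k}\cdots\cl{x_1}\cl{r}$. Let $M$ (of size $s$) be the memory there. Note that $R$ carries only label information and is independent of $y$, and that the still-unwritten part of $\XML(t)$ is a function of $M$ and $R$ alone. Case split on whether the block $Z$ has already been emitted. If \emph{not}, then the entire $Z$ — hence $y$, which carries $\Theta(n)$ bits of entropy — is reconstructed from $M$ and the $y$-independent suffix $R$, so $M$ must already determine $y$, forcing $s=\Omega(n)$. If $Z$ \emph{has} been emitted, then so has $\cl{x_k}$: the decoder committed to $x_k$'s label before reading $\cl{x_{kL}}$, having re-seen in the meantime only $x_n,\dots,x_{k+1}$ (via their closing tags). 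Recovering $x_k$ from the post-opening memory, given the advice $x_{k+1}\dots x_n$ and the query $k$, is exactly Augmented Indexing on $x_1\cdots x_n$, whose one-way communication complexity is $\Omega(n)$, so again $s=\Omega(n)$.

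Since every run falls into one of the two horns and each independently forces $\Omega(n)$ space, a decoder with $s=o(n)$ would fail on a constant fraction of random instances in at least one horn, contradicting bounded error; fixing the coins and applying Yao's principle against this distribution extends the bound to randomized algorithms. I expect the main obstacle to be making the dichotomy watertight against an algorithm that \emph{defers} its writes: one must use that the unwritten output is determined by $(M,R)$ with $R$ independent of $y$, so no lazy emission strategy can avoid paying in one horn or the other. The role of $|Y|=\Theta(n)$ is precisely to separate by $\Omega(n)$ positions the moment $\cl{x_k}$ must be output from the moment $\cl{x_{kL}}$ is read, which rules out any sublinear-lookahead shortcut. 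The remaining ingredients — exhibiting $2^{\Theta(n)}$ trees $y$ with distinct decodings, and citing the $\Omega(n)$ randomized one-way bound for (Augmented) Indexing — are routine.
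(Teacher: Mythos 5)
Your proposal follows essentially the same route as the paper: the same hard family of Figure~\ref{fig:lb-decoding} and the same key misalignment---$\cl{x_k}$ must appear in the output before $Z$, while $\cl{x_{kL}}$ arrives in the input only after the $\Theta(n)$-size block $Y$---which the paper compresses into two sentences and defers to ``standard information theory arguments.'' Your explicit dichotomy at the probe point (either $\cl{x_k}$ has been committed early, yielding an augmented-indexing lower bound, or nothing of $Z$ has been written and the memory together with the $y$-independent suffix $R$ must determine $Z$, forcing the memory to encode $y$) is exactly the formalization the paper leaves implicit, and it is sound.
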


This argument can be extended to two-pass randomized streaming algorithms. Construct a hard instance of size $\Theta(n^2)$ 
by gluing $n$ previous instances $(x^i,k^i,y^i)_{1\leq i\leq n}$ as follows: instance $(x^{i+1},k^{i+1},y^{i+1})$ is branched to the left most leaf of instance $(x^i,k^i,y^i)$. After the first pass, the algorithm is not able to write $n$ closing tags of form $\cl{x^i_{k_i}}$. Therefore he needs to store them in order to write them at the second pass.
This requires some formalization 
that we omit here. 
\begin{fact}\label{fact:lb-decoding2}
Every randomized streaming algorithm for FCNS decoding
with bounded error, one pass on the input stream, and two passes on the output stream 
requires space $\Omega(\sqrt{N})$. 
\end{fact}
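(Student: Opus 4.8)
The plan is to upgrade the informal sketch into an information-bottleneck argument built on the composite instance. By Yao's principle it suffices to exhibit a hard input distribution on which every \emph{deterministic} algorithm with one input pass and two output passes and error at most $1/3$ uses space $\Omega(\sqrt N)$. I would fix the distribution as follows: take the nested composite of $n$ copies $G_1,\dots,G_n$ of the gadget of Fact~\ref{fact:lb-decoding}, where $G_{i+1}$ hangs from the leftmost leaf of $G_i$ and each $G_i$ contributes $\Theta(n)$ nodes, so $N=\Theta(n^2)$. I would \emph{fix} the shape of every gadget (so that the tree shape leaks no information) and randomise only the label of each gadget's critical node $u_i$, making it an independent uniform bit $\ell_i\in\{0,1\}$. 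Thus $(\ell_1,\dots,\ell_n)$ carries $n$ bits and is a function of $\XML(t)$, while $\ell_i$ occurs in the whole instance only at the opening and closing tags of $u_i$.

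The first structural step is geometric. Since $G_{i+1}$ is attached below the leftmost leaf of $G_i$, the critical node $u_{i+1}$ is a descendant of $u_i$, so $u_1,\dots,u_n$ form an ancestor chain in the decoded tree $t$. Consequently their spans in $\XML(t)$ are nested, and any position $P$ inside the span of the deepest node $u_n$ satisfies: all $n$ critical opening tags lie before $P$, and all $n$ critical closing tags $\cl{u_i}$ lie after $P$, in the output.

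The core is a crossing argument at the cut $P$. Each output pass is monotone and therefore crosses $P$ exactly once. Fix a bit $\ell_i$. Its value must be emitted at $\cl{u_i}$, which lies after $P$; the only places it can be \emph{sourced} (from the input, or from the output tape) are the tags of $u_i$, whose opening lies before $P$. I would argue that $\ell_i$ cannot be relayed across $P$ through the tape: writing $\ell_i$ into a cell after $P$ already requires the writing pass to hold $\ell_i$ after having crossed $P$, while reading it from a cell before $P$ forces the reading pass to carry it across $P$ in memory; tracing this through the (at most two) passes shows that for each $i$ there is a single pass that transports $\ell_i$ across $P$, and at the instant that pass's cursor is at $P$ the bit $\ell_i$ has been sourced but not yet emitted, hence must reside in internal memory. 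Here it is essential that the openings precede $P$ in the \emph{output}: the first pass must have read them (in order to write them) before its cursor reaches $P$, which is what makes the source available early.

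Counting then finishes the bound: the $n$ transports are distributed over the two $P$-instants, so one of them carries at least $n/2$ of the bits simultaneously, whence the space is at least $n/2=\Omega(\sqrt N)$. For bounded error I would invoke Fano's inequality: since the correct output determines $(\ell_1,\dots,\ell_n)$ and the algorithm errs with probability at most $1/3$, the memory snapshot at the heavier $P$-instant, together with the publicly known shape, recovers the transported bits with advantage, forcing its entropy, and hence the space, to be $\Omega(n)$. The main obstacle I anticipate is precisely the formalisation the authors omit: rigorously ruling out every tape-based relay of $\ell_i$ across $P$ (the tape is read/write and shared between the two output passes, so tracking which cell may legitimately hold $\ell_i$ at which time is delicate), together with confirming that fixing the shape removes all positional leakage, so that $\ell_i$ is information-theoretically confined to the tags of $u_i$.
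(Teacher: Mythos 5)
There is a genuine gap, and it sits at the very first step of your plan: by fixing the shape of every gadget you also fix the branch indices $k_i$, i.e.\ \emph{which} node is critical, and together with your decision to fix all labels except the $n$ critical bits $\ell_i$, this makes your hard distribution easy. A distributional lower bound must hold against algorithms that know the distribution, and on your distribution the decoded subtrees $Z^i$ and the output positions of all critical cells are fixed constants. Consider then the following adversarial algorithm with a \emph{single} write pass on the output and $\Order(\log N)$ space: in the opening phase it copies the opening tags through (their order is preserved by decoding, Fact~\ref{fact_opening_tags_order}), so each $\ell_i$ is emitted at $u_i$'s opening as it arrives on the input; in the closing phase, it stalls its output head at the cell of $\cl{u_i}$, fast-forwards the input head through the subtree $Y^i$ (whose content is fixed, hence disposable) to the input closing tag of $u_i$, reads $\ell_i$ there just in time, writes it, and then emits the \emph{fixed} string $Z^i$ from its knowledge of the distribution. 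Since the critical closing tags arrive on the input in the order $u_n,u_{n-1},\dots,u_1$, which is exactly the order of their cells in the output, this synchronization succeeds instance by instance. So confinement of $\ell_i$ to the tags of $u_i$ is not the obstruction; the hardness of the paper's instance comes precisely from the randomness you removed: the branch position $k_i$ is unknown when the openings stream by (an \textsc{Index}-type uncertainty, so sourcing $\ell_i$ from the opening occurrence costs $\Omega(n)$ space), and the content of $y^i$ must be unpredictable so that $Z^i$ can neither be written without reading $Y^i$ nor buffered in $\mathrm{o}(n)$ space.

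The same counterexample exposes the unsound step in your crossing argument: $P$ is a cut in the \emph{output} stream only, while the input head runs on its own timeline, so a bit emitted at a cell after $P$ may be sourced from the input closing tag of $u_i$ at a wall-clock time \emph{after} the emitting head has already crossed $P$; such a bit never resides in memory at either $P$-crossing instant, and your claim that each $\ell_i$ is ``transported across $P$'' by some pass, hence your counting over the two $P$-instants, does not follow. Ruling out this just-in-time sourcing is exactly the formalization the paper alludes to: with random $k_i$ and incompressible $y^i$, the first output pass is forced to track the single input pass in order to write the $Z^i$'s, hence it passes each critical cell before $\ell_i$ is available on the input and cannot fall back on the openings (by \textsc{Index}); consequently the labels $\ell_n,\dots,\ell_2$, collected when the input closing tags arrive, must all sit in memory simultaneously at the end of the first pass, waiting for the second pass, which yields $\Omega(n)=\Omega(\sqrt N)$ and matches the upper bound of Theorem~\ref{thm:decoding_two_pass} up to a logarithmic factor. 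Finally, the tape-relay difficulty you anticipate is moot in this paper's model: a pass over a read/write stream is declared read or write, and on a write pass the head cannot read, so the output tape cannot carry $\ell_i$ between the two passes; and if one of the two output passes is a read pass, only one write pass remains and one is essentially back to the one-pass bound of Fact~\ref{fact:lb-decoding}.
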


\compress{\section*{Acknowledgements}
The authors  would like to thank Michel de Rougemont, who, among other things, introduced the authors to the problem of validating streaming XML documents.
}


{
\bibliographystyle{abbrv} \bibliography{km10} }


\appendix

\section{A $\Omega(N/p)$ space lower bound for $p$-pass Algorithms for $\pb$}
\label{sec:linear_space_lower_bound}
For the sake of clarity, in this section we provide a proof showing that $p$-pass algorithms require $\Omega(N/p)$ space for checking validity of arbitrary XML files against arbitrary DTDs. Many space lower bound proofs for Streaming Algorithms are reductions to problems in communication complexity \cite{ams99, Bar-YossefJKS04, mmn10}. For an introduction to communication complexity we refer the reader to \cite{nisan97}. 

Consider a player Alice holding an $N$ bit string $x=x_1 \dots x_N$, and a player Bob holding an $N$ bit string $y=y_1 \dots y_N$ both taken from a uniform distribution over $\{0, 1\}^N$. Their common goal is to compute the function $f(x, y) = \bigvee_i x[i] \wedge y[i]$ by exchanging messages. This communication problem is the well studied problem Set-Disjointness (\disj).

It is well known that the randomized communication complexity with bounded two-sided error of the Set Disjointness function $R(\disj) = \Theta(N)$. In this model, the players Alice and Bob have access to a common string of independent, unbiased coin tosses. The answer is required to be correct with probability at least $2/3$.

We make use of this fact by encoding this problem into an XML validity problem. Consider $\Sigma = \{r, 0, 1 \}$, the DTD $D^{\disj}$ such that $\mbox{$D^{\disj}(r) = 0r0 \, | \, 0r1 \, | \, 1r0 \, | \, \epsilon$}$, $D^{\disj}(0) = \epsilon$, and $D^{\disj}(1) = \epsilon$. Given an input $x, y$ as above, we construct an input tree $t(x,y)$ as in Figure~\ref{fig:hard_set_disjointness}.

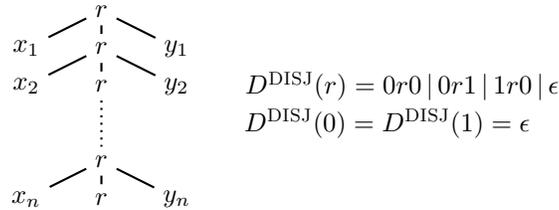
\begin{figure}[h]
\begin{center}
\begin{tikzpicture}[node distance = 0.5 cm]
\tikzstyle{VertexStyle} = [shape = rectangle, color = white, fill = white, text = black, draw]
\Vertex[x=10, y=10, L=$r$]{r} 
\Vertex[x=9, y=9.5, L=$x_1$]{a1}
\Vertex[x=10, y=9.5, L=$r$]{r1}
\Vertex[x=11, y=9.5, L=$y_1$]{b1}
\Vertex[x=9, y=9, L=$x_2$]{a2}
\Vertex[x=10, y=9, L=$r$]{r2}
\Vertex[x=11, y=9, L=$y_2$]{b2}
\Vertex[x=10, y=8, L=$r$]{rn}
\Vertex[x=9, y=7.5, L=$x_n$]{an}
\Vertex[x=10, y=7.5, L=$r$]{rn2}
\Vertex[x=11, y=7.5, L=$y_n$]{bn}
\Edge(r)(a1)
\Edge(r)(r1)
\Edge(r)(b1)
\Edge(r1)(a2)
\Edge(r1)(r2)
\Edge(r1)(b2)
\Edge(rn)(an)
\Edge(rn)(rn2)
\Edge(rn)(bn)
\tikzset{EdgeStyle/.style={dotted}}
\Edge(r2)(rn)
\draw (14,9) node {$D^{\disj}(r) = 0r0 \, | \, 0r1 \, | \, 1r0 \, | \, \epsilon$};
\draw (13.8,8.5) node {$D^{\disj}(0) = D^{\disj}(1) = \epsilon$};
\end{tikzpicture}
\end{center}
\caption{$t(x,y)$ is a hard instance for $\pb$. \label{fig:hard_set_disjointness} }
\end{figure}

Clearly, $\disj(x,y) = 0$ if and only if $\XML(t(x,y))$ is valid with respect to $D^{\disj}$.

\begin{theorem}
Every $p$-pass randomized streaming algorithm for $\pb$ with bounded error uses $\Omega(N/p)$ space, where $N$ is the input length.
\end{theorem}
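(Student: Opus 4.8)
The plan is to reduce the communication problem $\disj$, whose randomized bounded-error complexity satisfies $R(\disj) = \Omega(N)$, to $\pb$ through the hard instance $t(x,y)$ already constructed, and to argue that an economical $p$-pass streaming algorithm would yield a cheap communication protocol. The crucial structural fact I would first establish is that $\XML(t(x,y))$ admits a single clean cut separating it into a prefix determined solely by Alice's input $x$ and a suffix determined solely by Bob's input $y$. Indeed, the depth-first traversal descends the central spine of $r$-nodes emitting exactly the tags $x_1\cl{x_1}, x_2\cl{x_2}, \dots$ (interleaved with the structural $r$-tags, which are input-independent), reaches the deepest $r$-node, which is a leaf, and then ascends emitting the tags $\dots, y_2\cl{y_2}, y_1\cl{y_1}$ interleaved with closing $\cl{r}$-tags. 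Hence every tag involving $x$ precedes every tag involving $y$, and the cut can be placed immediately after the deepest leaf.

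Next I would convert a $p$-pass bounded-error streaming algorithm $A$ for $\pb$ using space $s(N)$ into a protocol for $\disj$. Both players share a public random string and can locally generate their halves of $\XML(t(x,y))$ --- Alice the prefix $P(x)$, Bob the suffix $Q(y)$ --- since the spine structure and the input length are fixed and public. They then simulate $A$ pass by pass on the concatenated stream: in a left-to-right pass Alice runs $A$ on $P(x)$ from the current memory configuration, transmits the $\Order(s)$-bit configuration when the head reaches the cut, and Bob completes the pass on $Q(y)$; a right-to-left pass is handled symmetrically with roles reversed; between passes the current configuration is forwarded to whichever player owns the starting end. Since each of the $p$ passes crosses the cut exactly once, the total communication is $\Order(p\cdot s)$ bits, and the protocol errs exactly when $A$ errs. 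Because $A$ accepts iff $t(x,y)$ is valid, which by construction of $D^{\disj}$ holds iff $\disj(x,y)=0$, the protocol decides $\disj$ with bounded error.

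Finally I would invoke $R(\disj) = \Omega(N)$ to conclude $p\cdot s = \Omega(N)$, hence $s = \Omega(N/p)$; since the XML length is $\Theta(N)$, this is the claimed bound up to constants. I expect the only delicate points to be the bookkeeping of messages across bidirectional passes --- verifying that a single $\Order(s)$-bit message per pass suffices and that no hidden communication is needed to synchronize the direction of traversal --- and checking rigorously that the prefix/suffix decomposition is genuinely a function of $x$ alone and of $y$ alone, so that each player can produce their portion of the stream without any knowledge of the other's input.
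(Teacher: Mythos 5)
Your proposal is correct and follows essentially the same route as the paper's proof in Appendix~A: the same reduction from \disj{} via the hard instance $t(x,y)$, the same observation that $\XML(t(x,y))$ splits cleanly into an $x$-determined prefix and a $y$-determined suffix, and the same pass-by-pass simulation costing $\Order(s)$ bits of communication per pass, giving $p\cdot s = \Omega(N)$ from $R(\disj)=\Omega(N)$. Your explicit treatment of bidirectional passes and of the message bookkeeping between passes is slightly more careful than the paper's, which states the simulation only for the unidirectional pattern, but this is a refinement of detail rather than a different argument.
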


\begin{proof}
Given an instance $x \in \{0, 1 \}^N$, $y \in \{0, 1 \}^N$ of $\disj$, we construct an instance for $\pb$. Then, we show that if there is a $p$-pass randomized algorithm for $\pb$ using space $s$ with bounded error, then there is a communication protocol for $\disj$ with the same error and communication $\Order(s \cdot p)$. This implies that any $p$-pass algorithm for $\pb$ requires space $\Omega(N/p)$ since $R(\disj) = \Theta(N)$.

Assume that $A$ is a randomized streaming algorithm deciding validity with space $s$ and $p$ passes. Alice generates the first half of $\XML(t(x,y))$, that is $rx_1\cl{x_1}rx_2\cl{x_2} \dots rx_n\cl{x_n}r$ of length $2N+2$ and executes algorithm $A$ on this sequence using a memory of size $\Order(s)$. Alice send the memory of size at most $s$ to Bob via message $M^1_A$ who continues algorithm $A$ on the second half of $\XML(t(x,y))$, that is $\cl{r}y_n\cl{y_n}\cl{r} \dots \cl{r}y_2\cl{y_2}\cl{r}y_1\cl{y_1}\cl{r}$ of length $2N+2$ using memory $M^1_A$. After execution, Bob sends the memory of size at most $s$ back to Alice via $M^1_B$. This procedure is repeated at most $p$ times. 

This protocol has a total length of $\Order(s \cdot p)$ which we know to be $\Omega(N)$ since $R(\disj) \in \Theta(N)$. The claim follows.
\end{proof}

\section{Conjecture: deciding $\pb(2)$ in $p$ passes requires $\Omega(\sqrt{N}/p)$ space}
\label{sec:root_space_lower_bound}

In the following, we motivate a conjecture lower-bounding the space for streaming algorithms deciding $\pb(2)$.

\begin{conjecture}
\label{conj:sqrt_space} A $p$-pass streaming algorithm for $\pb(2)$ requires $\Omega(\sqrt{N}/p)$ space.
\end{conjecture}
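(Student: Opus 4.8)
The natural plan is to mimic the linear-space lower bound of Appendix~\ref{sec:linear_space_lower_bound}, but with a communication problem of complexity $\Theta(\sqrt N)$ in place of the size-$N$ instance of $\disj$. Concretely, I would reduce from $\disj$ on $m=\Theta(\sqrt N)$ bits, whose randomized communication complexity is $\Theta(\sqrt N)$, and encode an instance $(x,y)\in\{0,1\}^m\times\{0,1\}^m$ as a well-formed binary document of size $N$. As in Appendix~\ref{sec:linear_space_lower_bound}, a $p$-pass algorithm using space $s$ for $\pb(2)$ would then yield a protocol of total cost $\Order(s\cdot p)$ (each pass contributes two messages, one in each direction), forcing $s\cdot p=\Omega(\sqrt N)$ and hence $s=\Omega(\sqrt N/p)$. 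The whole difficulty is therefore to build a suitable encoding $t(x,y)$.

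The reduction of Appendix~\ref{sec:linear_space_lower_bound} places, at each level, the two bit-carrying children together with a spine child under one node of degree $3$, so that the constraint directly couples $x_i$ and $y_i$ across the cut. This is impossible with degree $2$, which is exactly what the conjecture reflects. The plan is instead to use $m=\Theta(\sqrt N)$ \emph{gadget subtrees} of size $\Theta(\sqrt N)$ each, arranged so that (i)~the document splits into an Alice half carrying $x$ and a Bob half carrying $y$; (ii)~there are $\Theta(\sqrt N)$ nodes $p_1,\dots,p_m$ whose validity couples a left descendant encoding $x_i$, whose closing tag lies in the Alice half, with a right descendant encoding $y_i$, whose closing tag lies in the Bob half; and (iii)~the gadgets inflate every relevant subtree to size $\Theta(\sqrt N)$. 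The role of the gadgets is precisely to defeat the two verification strategies exploited by Algorithm~\ref{algo_bin_tree_one_pass}: making the left subtree of $p_i$ large rules out cheap top-down checking (one would hold $p_i$ for $\Theta(\sqrt N)$ steps), and making the right subtree large rules out cheap bottom-up checking (one would hold the left child's label for $\Theta(\sqrt N)$ steps).

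The hard part is to realize (ii) and (iii) \emph{simultaneously}, and this is where I expect the main obstacle. The source of the sublinear upper bound (Theorem~\ref{thm:bin-one-pass}) is that every node is seen twice and can be validated either top-down, holding only its own label while the left subtree streams by, or bottom-up, holding only its left child's label while the right subtree streams by; the algorithm adaptively picks the cheaper direction and balances $K$ pending opening tags against $N/K$ pending closing tags at $K=\sqrt N$. A valid lower bound must produce recursive instances in which, for $\Theta(\sqrt N)$ of the $p_i$ at once, \emph{both} verification windows are open at a single position of the stream, so that no scheduling of top-down/bottom-up choices---and, in the bidirectional model, no choice of traversal direction---can keep the concurrent state below $\Omega(\sqrt N)$. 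Showing that an arrangement of gadgets achieves this overlap while keeping the $m$ coordinates genuinely independent is the crux; I would attempt it by an information-theoretic argument (a direct-sum or round-elimination over the $m$ straddling constraints) proving that transmitting fewer than $\Omega(\sqrt N)$ bits across the cut must corrupt at least one coordinate of the embedded $\disj$ instance. Finally one should check tightness: since Theorem~\ref{thm:bin-one-pass} gives an $\Order(\sqrt{N\log N})$ upper bound, the number of independent straddling constraints this scheme can exploit cannot exceed roughly $\sqrt N$, so $m=\Theta(\sqrt N)$ is the best achievable---consistent with the conjectured bound, and also explaining why it resists a direct extension of the $\disj$ argument that proves the $\Omega(N/p)$ bound for general trees.
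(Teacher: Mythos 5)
There is a genuine gap, and it begins with the status of the statement itself: this is a \emph{conjecture}, and the paper does not prove it either --- Appendix~\ref{sec:root_space_lower_bound} only motivates it with the hard instances of Figure~\ref{fig:lb-val-bin} and explicitly leaves the proof open, noting that the natural reduction yields a \emph{three-party} communication problem with an unresolved technical difficulty. Your outline correctly reproduces the generic skeleton (a $p$-pass, space-$s$ algorithm gives an $\Order(s\cdot p)$-bit protocol, so a $\Theta(\sqrt{N})$ communication bound would give $s=\Omega(\sqrt{N}/p)$), and your diagnosis of why the $\Omega(N/p)$ argument fails for degree $2$ is sound. But your concrete plan --- a two-party reduction from $\disj$ on $m=\Theta(\sqrt{N})$ coordinates with an ``Alice half carrying $x$'' and a ``Bob half carrying $y$'' coupled by $\Theta(\sqrt{N})$ straddling parent constraints --- cannot be realized, for a structural reason. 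In a binary document, if the rule $D(p_i)$ is to couple $x_i$ and $y_i$, these bits must be carried by tags adjacent in the stream: the children $c_1,c_2$ of $p_i$ satisfy that $\cl{c_1}$ and $c_2$ are consecutive, and every label occurs exactly twice (opening and closing). For Alice's prefix to depend only on $x$, both occurrences of the $x_i$-carrying label must precede the cut; for Bob's suffix to depend only on $y$, both occurrences of the $y_i$-carrying label must follow it. This pins the cut to the single position $\pos(\cl{c_1^i})$, which can hold for at most one coordinate $i$; routing a bit across the cut through a chain of forced labels fails as well, since Bob would have to write $x$-dependent tags he does not know. So at most $\Order(1)$ coordinates of a $\disj$ instance can genuinely straddle one cut, and the induced two-party problem is cheap --- no $\Omega(\sqrt{N})$ bound follows.

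This obstruction is precisely why the paper's own hard instances take a different shape: there is no second string $y$ at all. Each gadget is a single string $x_i$ with a \emph{hidden index} $k_i$, the bit $d_i$ in the middle, and closing-side leaves labeled $\nega{x_i[j]}$ --- so the suffix of the stream necessarily depends on the same string $x$, which is exactly what destroys any two-party $\disj$ split and forces the three-party (openings / $d_i$'s / complement closings) pointer-style problem the paper mentions. Your final observation that the two verification windows of $p_i$ must overlap at a common stream position for $\Theta(\sqrt{N})$ coordinates simultaneously is on the right track (the paper's nested spine achieves exactly this), and your proposed direct-sum/information-theoretic route is a plausible attack on the open three-party problem; but as written, your proposal replaces the genuinely hard step with a construction that provably cannot satisfy its own requirements (i) and (ii) simultaneously, and therefore does not constitute a proof.
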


For positive integers~$m,n$, we define a family of hard instances for $\pb(2)$ of length $N=\Theta(mn)$ as in Figure~\ref{fig:lb-val-bin}.
The binary \ac{DTD} $D$ we consider is: $D(r) = 01 | 10, \  D(0) = 01 | 10 | 11 | \epsilon, \  D(1) = 01 | 10 | 00 | \epsilon$.

\begin{figure}[h] 
\centerline{\scalebox{0.7}{
\Tree [.$r$ [.$x[1]$ [.${\begin{array}{c} x[2]\\ \vdots\\ x[k]\end{array}}$ [.$x[k+1]$ [.${\begin{array}{c} x[k+2]\\ \vdots\\ x[n-1]\end{array}}$ [.$x[n]$ ] $\nega{x[n]}$ ] $\nega{x[k+2]}$ ] $d$ ]  $\nega{x[2]}$ ] $\nega{x[1]}$ ]   
}}
\centerline{\scalebox{0.7}{
\begin{tikzpicture}[node distance = 0.5 cm]
\tikzstyle{VertexStyle} = [shape = rectangle, color = white, fill = white, text = black, draw]
\Vertex[x=10, y=10, L=$r$]{r} 
\Vertex[x=8, y=9.5, L=$\mbox{$x_1 [1]$}$]{x11}
\Vertex[x=6, y=9, L=$\mbox{$x_1 [k_1]$}$]{x1k}
\Vertex[x=4, y=8.5, L=$\mbox{$x_1 [k_1+1]$}$]{x1k1}
\Vertex[x=2, y=8, L=$\mbox{$x_1 [n]$}$]{x1n}
\Vertex[x=7, y=8, L=$\mbox{$d_1$}$]{d1}
\Vertex[x=5, y=7.5, L=$\mbox{$x_2 [1]$}$]{x21}
\Vertex[x=3, y=7, L=$\mbox{$x_2 [k_2]$}$]{x2k}
\Vertex[x=1, y=6.5, L=$\mbox{$x_2 [k_2+1]$}$]{x2k1}
\Vertex[x=-1, y=6, L=$\mbox{$x_2 [n]$}$]{x2n}
\Vertex[x=4, y=6, L=$\mbox{$d_2$}$]{d2}
\Vertex[x=5, y=5, L=$\mbox{$\dots \dots$}$]{dots}
\Vertex[x=6, y=4, L=$\mbox{$x_m [1]$}$]{xm1}
\Vertex[x=4, y=3.5, L=$\mbox{$x_m [k_m]$}$]{xmk}
\Vertex[x=2, y=3, L=$\mbox{$x_m [k_m+1]$}$]{xmk1}
\Vertex[x=0, y=2.5, L=$\mbox{$x_m [n]$}$]{xmn}
\Vertex[x=5, y=2.5, L=$\mbox{$d_m$}$]{dm}
\Edge(r)(x11)
\Edge(x1k)(x1k1)
\Edge(x1k)(d1)
\Edge(d1)(x21)
\Edge(x2k)(x2k1)
\Edge(x2k)(d2)
\Edge(xmk)(xmk1)
\Edge(xmk)(dm)
\tikzset{EdgeStyle/.style=dotted}
\Edge(x11)(x1k)
\Edge(x1k1)(x1n)
\Edge(x21)(x2k)
\Edge(x2k1)(x2n)
\Edge(d2)(dots)
\Edge(dots)(xm1)
\Edge(xm1)(xmk)
\Edge(xmk1)(xmn)
\end{tikzpicture} }}
\caption{\label{fig:lb-val-bin} Top: instance for $m=1$, where $x$ is a $n$-bit string and $1\leq k\leq n-1$. Bottom: assembly of the base case to instances of size $m$. Instances are valid iff $(d_i\neq x_i[k] \text{ or }d_i\neq x_{i}[k+1])$, for all $1\leq i\leq m$.}
\end{figure}
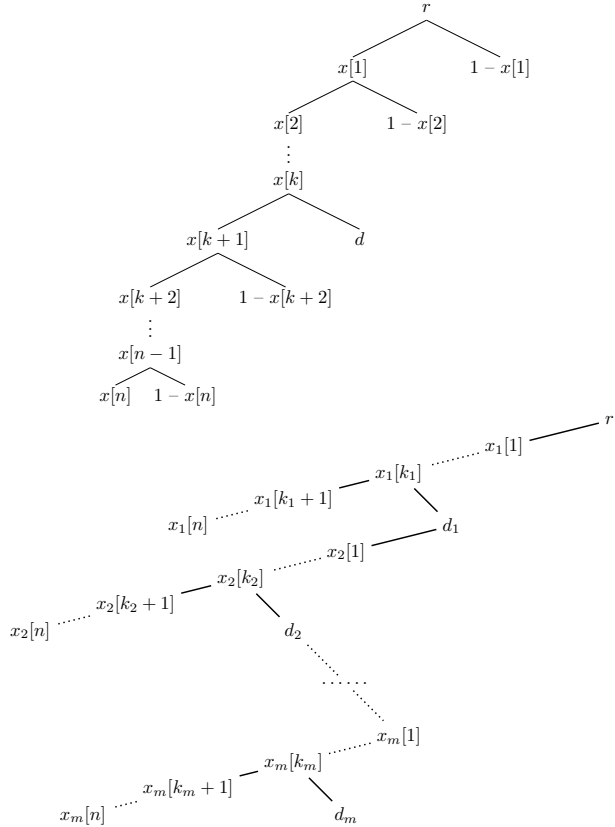

Intuitively, for $m=n$ deciding validity in one pass is difficult with space $\mathrm{o}(n)$. After reading the sequence of opening tags $x_i[1, n]$, the streaming algorithm does not have enough space to store the information about the bit at unknown index $k_i$. 
When it reads bit $d_i$, if $d_i=x_i[k_i+1]$, it is therefore unable to decide whether $d_i=x_i[k_i]$. 
Moreover, after reading $d_m$ it does not have enough space to store information
about all bits $x_1[k_1+1],x_2[k_2+1],\ldots ,x_m[k_m+1]$. 
When it reads the closing tags after $d_m$,  if $d_i=x_i[k_i]$,
it therefore misses out on its second chance to check whether $d_i=x_i[k_i+1]$ for every $i$.

Similar to \cite{mmn10, jn10, cckg10}, we translate this approach into the 
language of communication complexity. However, unlike in \cite{mmn10, jn10, cckg10} where the resulting communication problem is a two-party problem, the resulting communication problem involves three parties and comes with a
technical difficulty. We leave a proof for this lower bound as an open problem.

\end{document}